\newtheorem{theorem}{Theorem }[section]
\newtheorem{proposition}[theorem]{Proposition}
\newtheorem{lemma}[theorem]{Lemma}
\newtheorem{corollary}[theorem]{Corollary}
\newtheorem{remark}[theorem]{Remark}
\begin{document}

\title{{The relativistic KMS condition for the thermal $n$-point functions of the $P(\phi)_2$ model}}

\author{Christian D.\ J\"akel\protect\footnote{christian.jaekel@mac.com}   \and Florian Robl\protect\footnote{roblf1@cf.ac.uk}
\and {\small School of  Mathematics, Cardiff University, Wales, UK}}


\maketitle
\begin{abstract}
Thermal quantum field theories are expected to obey a {\em relativistic KMS condition}, 
which replaces both the relativistic spectrum condition of Wightman quantum field theory 
and the KMS condition characterising equilibrium states in quantum statistical mechanics.

In a previous work it has been shown that the two-point function of the thermal ${\mathscr P}(\varphi)_2$ 
model satisfies the relativistic KMS condition. Here we extend this result to general $n$-point functions. 
In addition, we verify that the thermal Wightman distributions are tempered.
\end{abstract}

\section{Introduction}

For many practical purposes it may be sufficient to study thermal field theory in  finite spatial 
volume, where the Hamiltonian $H_{box}$ has discrete spectrum and is bounded from below. 
The thermal expectation value of an observable $O$ at temperature $\beta^{-1}$ is then given 
by the Gibbs state 
	\begin{equation}
		\label{gibbs}
		\langle O \rangle_\beta 
		:= \frac{ {\rm Tr } \, {\rm e}^{- \beta H_{box} } O} { {\rm Tr } \,  {\rm e}^{- \beta H_{box}}} \; .   
	\end{equation}
However,  if one wants to investigate  the structural properties resulting from Poincar\'e invariance 
of the underlying equations of motion, thermal equilibrium states  in infinite volume have to be considered.   
Fortunately, the appropriate generalisation of (\ref{gibbs}) to infinite volume is well-known: in finite volume 
the Gibbs states are characterised by their analyticity properties. The latter are summarised in the KMS 
condition (see, \emph{e.g.},  \cite{BR}). Haag, Hugenholtz and Winnink have shown that this 
characterisation remains valid in the thermodynamic limit~\cite{HHW}. In fact, in infinite volume one can 
{\em derive} the KMS condition from first principles, which characterise thermal equilibrium states 
phenomenologically. For example, one can derive the KMS condition from {\em passivity}~\cite{PW} 
or {\em stability} under small adiabatic perturbations of the dynamics~\cite{HKT-P}\cite{NT}.

Analyticity properties of the correlation functions were previously used by Wightman to characterise  
the vacuum state of a relativistic quantum field theory \cite{SW}. The  intention was to guarantee  
{\em stability} of the vacuum, and indeed the requested analyticity properties of the correlation functions ensure 
that the energy is bounded from below. But soon it turned out that also most peculiar {\em structural properties} 
(\emph{e.g.}, the Reeh-Schlieder property \cite{RSch}) follow from the analyticity properties of the correlation functions. 
These properties, first rejected as mathematical artifacts resulting from over-idealisation,  are nowadays 
considered to be characteristic for any proper quantum field theory. Experiments testing these properties 
have yet to be devised, but similar phenomena related to entanglement in quantum mechanics are intensely investigated. 

Although  vacuum states and thermal equilibrium states are both characterised by analyticity properties of the 
correlation functions, there is a pronounced difference between them: a state in thermal equilibrium cannot be 
invariant under Lorentz boosts \cite{Na}\cite{Oj}, even if the equations of motion are invariant under Poincar\'e
transformations and the propagation speed of signals is finite. Hitherto the correlation functions of a thermal 
state were required to be analytic only with respect to the time-direction distinguished by the rest-frame of the
equilibrium state. Structural results, which are similar to the ones derived from the spectrum condition of Wightman 
field theory, cannot be derived from the traditional KMS condition alone.

The picture changed fundamentally, when Bros and Buchholz~\cite{BB1} (see also \cite{BB2}\cite{Ku1}\cite{Ku2}) 
recognised that the passivity properties of an equilibrium state should be visible even  to an observer, who is 
moving with respect to the rest frame distinguished by the KMS state. Carefully evaluating the consequences, 
Bros and Buchholz suggested that 
the thermal correlation functions
of a relativistic system can be continued analytically  
into the tube domain $\mathbb{R}^{d} 
+
i \left(V^{+} \cap (\beta e - V^{+})\right)$, where~$\beta$ plays the role of the reciprocal temperature 
of the system, $e$ is the unit vector in the  time direction distinguished by the rest-frame, 
$d=2, 3, \ldots $ is the dimension of space-time
and $V^{+} := \{ (t, \vec x) 
\in \mathbb{R}^{d}  
\mid  | \vec x | < t  \} $ denotes the (open) forward light-cone. The consequences of the 
new {\em relativistic KMS condition}  are profound, aligning thermal field theory and vacuum 
field theory w.r.t.~basic, structural aspects~\cite{J}\cite{Jii}. 

The relativistic KMS condition has been established (see \cite{BB1}) for a large class of KMS states constructed 
by Buchholz and Junglas \cite{BJ}. Moreover, C.~G\'erard and the first author have shown that the relativistic 
KMS condition holds for the two-point function of the ${\mathscr P}(\varphi)_{2}$ model~\cite{GeJ3}. The 
present work extends the latter result to general $n$-point functions.

\paragraph{Content} In Section \ref{Sec2} we recall the Euclidean field theory on the cylinder. 
Using Minlos' theorem, we define Gaussian measures  on a space of distributions, supported on a cylinder. 
Following Glimm and Jaffe, we renormalise the interaction (Theorem \ref{TH1}) by  normal ordering the 
random variables.  This allows us to define the Euclidean ${\mathscr P}(\phi)_{2}$ model on the cylinder 
with a spatial cut-off $l \in \mathbb{R}^+$ (see also \cite{GeJ1}). The corresponding probability measure 
${\rm d} \mu_{l}$ is absolutely continuous with respect to the Gaussian measure. {\em Nelson symmetry} 
(see Theorem \ref{3.2} $ii.)$) can be used to remove the spatial cut-off: the Schwinger functions of the 
thermal ${\mathscr P}(\varphi)_2$-model on the two-dimensional Minkowski space at temperature~$\beta^{-1}$ 
exist and are equal  to the Schwinger functions of the vacuum  ${\mathscr P}(\varphi)_2$-model on the 
Einstein universe of spatial circumference $\beta$ (up to interchanging the interpretation of the Euclidean 
variables $(\alpha, x) \in S_\beta \times \mathbb{R}$). 

The Osterwalder-Schrader reconstructions, presented in  Subsection \ref{sectosrec},  provide 
\begin{itemize}
	\item[$ i.)$]
		a {\em thermal field theory on Minkowski space $\mathbb{R}^{1+1}$}, consisting of 
		\begin{itemize}
		\item[---] a Hilbert space ${\mathcal H}_\beta$, together with a distinguished 
			vector $\Omega_\beta \in {\mathcal H}_\beta$; 
		\item[---] a von Neumann algebra ${\mathcal R}_\beta\subset {\mathscr B} ({\mathcal H}_\beta)$, 
			together with an abelian subalgebra, generated by the bounded functions of the time-zero fields; 
		\item[---] a one-parameter group of time-translation automorphisms $\{ \tau_{t,0} \mid t \in \mathbb{R} \}$ 
				induced by unitary operators ${\rm e}^{i tL}$, with spectrum of $L $ equal to $ \mathbb{R}$; 
		\end{itemize}
	\item[$ ii.)$]
		a {\em vacuum theory on the Einstein universe} (a cylinder, with the position variable taking values on the circle $S_{\beta}$
		and the time variable real valued), consisting of 
		\begin{itemize}
			\item[---] a Hilbert space ${\mathcal H}_C$, together with a distinguished vector $\Omega_C \in {\mathcal H}_C$; 
			\item[---] a von Neumann algebra ${\mathcal R}_C\subset {\mathscr B} ({\mathcal H}_C)$, together with 
				an abelian subalgebra, generated by the bounded functions of the time-zero fields; 
		\item[---] a one-parameter group of time-translation automorphisms $\{ \tau'_{0,s} \mid s \in \mathbb{R} \}$ 
				induced by unitary operators ${\rm e}^{i sH_C}$, with $H_C \ge0$.
		\end{itemize}
\end{itemize}

In Subsection \ref{SSec2.4}  Wightman field theory on the circle is discussed in some more detail. We identify 
${\mathcal H}_C$ with the Fock space $\Gamma \bigl(H^{-\frac{1}{2}}(S_{\beta}) \bigr)$ over
the Sobolev space~$H^{-\frac{1}{2}}(S_{\beta})$ on~the circle $S_{\beta}$, and recall the Glimm-Jaffe $\phi$-bounds. 
According to Theorem \ref{HO}, due to Heifets and Osipov, the joint spectrum of $P_C$ and $H_C$ is contained 
in the forward light cone $
\widetilde{V}^+ := \{ (p, E) \mid | p | < E  \} $. 
Consequently  (Theorem \ref{5.1}) the Fourier 
transform of the Wightman $n$-point function  (expressed in relative variables) has support in~$(V^+ )^{n-1}$  and the 
Wightman $n$-point-distribution  ${{\mathfrak W}}_{C}^{(n-1)} $ itself  is the boundary value of a polynomially bounded
function ${{\mathcal W}}_{C}^{(n-1)} $, which is analytic in the forward tube $(S_{\beta}\times\mathbb{R}-iV^{+})^{n-1}$. 

Lemma \ref{5.2} characterises a set of space-time points in $S_\beta \times \mathbb{R}$, which are mutually space-like to each
other. Locality implies that the Wightman $n$-point functions are real valued, if evaluated at these points. Thus, using Schwarz's 
reflection principle, we can extend the Wightman $n$-point functions on the circle to functions, which (expressed in relative variables) 
are holomorphic  in 
	\begin{equation} 
		\label{c-n-1-1}
			{\mathcal D}^{(n-1)} := 
			\left( \lambda_1 V_{\beta} \times \ldots \times \lambda_{n-1} V_{\beta}\right)\, + \,  i \, (V^{+} \cup V^{-}) \times \ldots \times
			(V^{+} \cup V^{-}) \, , 
	\end{equation}
where $V_{\beta}:=\bigl\{(\alpha, s) \mid |s| < \alpha < \beta- |s| \bigr\}$, $\lambda_i >0$ and $\sum_{i=1}^{n-1} \lambda_i =1$. 
The Edge-of-the-Wedge theorem \cite[Theorem 2-16]{SW}\cite{BEGS} implies  that 
the tempered distributions ${{\mathfrak W}}_{C}^{(n-1)} $  are the boundary values of  functions  defined and holomorphic  in 
	\begin{equation}
		\label{intro5}
		{\mathcal C}^{(n-1)}  := {\mathcal N} \, \cup \, {\mathcal D}^{(n-1)}, 
	\end{equation}
where ${\mathcal N}$ is a complex neighbourhood of $\lambda_1 V_{\beta} \times \ldots \times \lambda_{n-1} V_{\beta}$ (Theorem \ref{Th5}).  

In Section 3 we return to the thermal ${\mathscr P}(\varphi)_2$ model on the two dimensional Minkowski space. 
Invoking~(\ref{c-n-1-1}), Nelson symmetry implies that 
analytic continuations 
of the thermal Wightman distributions ${{\mathfrak W}}_{\beta}^{(n-1)} $ can a priori
be defined (as analytic functions)  in the domain 
	\[
		(Q^- \cup Q^+)^{n-1} - i \left( \lambda_1 V_{\beta} \times \ldots \times \lambda_{n-1} V_{\beta}\right) \, , 
	\]
where the right and left space-like wedges are 
$Q^\pm = \left\{ (t, x) \in \mathbb{R}^2 \mid \pm x > |t| \right\}$, and, as before, $\lambda_i >0$ and $\sum_{i=1}^{n-1} \lambda_i =1$. 

The existence of products of thermal sharp-time fields is shown in Lemma~\ref{pos}. 
Taking advantage of their Euclidean heritage, their domain properties are summarised in  Proposition~\ref{neu}.
The  spectral theorem is used to extend the functions ${{\mathcal W}}_{\beta}^{(n-1)} $
to  functions holomorphic in 
the product of domains 
	\[
		(\lambda_1 {\cal T}_{\beta}) \times \cdots \times (\lambda_{n-1} {\cal T}_{\beta}), 
		\qquad  {\cal T}_{\beta} := \mathbb{R}^2 - i V_{\beta} , \qquad \sum_{j=1}^{n-1} \lambda_j = 1,
	\]
and $\lambda_j >0$, $j=1, \ldots, n-1$ (Theorem \ref{wightanal}). 
The final subsection  deals with the boundary values of these functions.
A generalisation of Ruelle's H\"older inequality for Gibbs states, suggested by J.~Fr\"ohlich  in~\cite{Fr1}, 
is presented  in Theorem \ref{hoelder}. A fractional $\phi$-bound, established in Proposition~\ref{prop2}, provides a 
key ingredient in the proof of the final result (Theorem \ref{Th7}), which establishes  that the thermal Wightman 
$n$-point-distributions  ${{\mathfrak W}}_{\beta}^{(n-1)} $ 
of the~${\mathscr P}(\varphi)_{2}$ model on the real line are {\em tempered distributions} which
satisfy the {\em relativistic KMS condition}.

\section{Euclidean fields on the cylinder}
\label{Sec2}

In 1974 H\o egh-Krohn  \cite{H-K} discovered that the Euclidean field theory
on the cylinder allows to reconstruct {\em two} distinct quantum field
theories. In this section we recall the main steps of the two reconstructions \cite{GeJ1,GeJ2}, leading to a vacuum theory 
on the Einstein universe (a cylinder, with the position variable taking values on a circle and the time variable real valued)
and a thermal theory on $1+1$-dimensional Minkowski space.

\subsection{Probability measures on the cylinder}

Consider a cylinder $S_{\beta}\times \mathbb{R}$, with $S_{\beta}$ the circle of circumference~$\beta$. The coordinates
$(\alpha,x) \in S_{\beta}\times \mathbb{R}$ of a point in the cylinder will refer to either one of the charts $[-\beta/2, \beta/2) \times \mathbb{R}$ 
or $[0, \beta) \times \mathbb{R}$. 

Let ${\mathscr S}(\mathbb{R})$ denote the set of Schwartz functions on the real line.
For consistency we denote the set of $C^\infty$-functions on the circle $S_{\beta}$ by
${\mathscr S}(S_{\beta})$.
The Fr\'echet space ${\mathscr S}(S_{\beta}\times \mathbb{R})$  is the space of smooth
functions $f$ on the cylinder, which are $\beta$-periodic in
$\alpha$ and fulfil 
	\[
		\left| \left(1+|x| \right)^{ k }\partial_{\alpha}^{p}\partial^{k}_{x}f(\alpha,x) \right| \leq C_{p, k} \,  ,
		\qquad p\in \mathbb{N}, \quad k \in \mathbb{N}. 
	\]

${\mathscr S}'(\mathbb{R})$, ${\mathscr S}'(S_{\beta})$ and ${\mathscr S}'(S_{\beta}\times \mathbb{R})$ denote the  dual spaces of
${\mathscr S}(\mathbb{R})$, ${\mathscr S}(S_{\beta})$ and ${\mathscr S}(S_{\beta}\times\mathbb{R})$. The real-linear subspaces of real
valued distributions are indicated by ${\mathscr S}'_{\mathbb{R}}(\mathbb{R})$, ${\mathscr S}'_{\mathbb{R}}(S_{\beta})$ and
$Q:={\mathscr S}'_{\mathbb{R}}(S_{\beta}\times \mathbb{R})$. The Borel $\sigma$-algebra $\Sigma$ on $Q$  is the minimal 
$\sigma$-algebra containing all open sets in the $\sigma({\mathscr S}', {\mathscr S})$-topology. The evaluation 
map~$\phi(f)$, $f\in {\mathscr S}_{\mathbb{R}}(S_{\beta}\times\mathbb{R})$, 
	\[
		\phi(f)\colon Q  \to  \mathbb{R}, \quad q \mapsto  \langle q, f\rangle,
	\]
is defined in terms of the duality bracket $\langle  \, . \, , \, .\, \rangle$. In the present context $\phi$ is called the {\em Euclidean quantum field}. 

If~${\rm d} \mu$ is a probability measure  on the space $(Q, \Sigma)$, then its Fourier transform 
	\[
		E(f) = \int_{Q} {\rm e}^{i \phi(f)} {\rm d} \mu ,  \quad f\in {\mathscr S}_{\mathbb{R}}(S_{\beta}\times \mathbb{R}),
	\]
satisfies 
\begin{itemize}
	\item [$ i.)$] $E(0) = 1$; 
	\item [$ ii.)$] $ {\mathscr S}_{\mathbb{R}}(S_{\beta}\times \mathbb{R}) \ni f \mapsto E(f) \in \mathbb{C} $ is continuous; 
	\item [$ iii.)$] for all $f_i, f_j \in {\mathscr S}_{\mathbb{R}}(S_{\beta}\times \mathbb{R})$ and $z_i, z_j \in \mathbb{C}$,
			$i,j = 1, \ldots, n$, 
			\[
				\sum_{i,j =1}^n z_i \bar z_j E (f_i - f_j) \ge 0 . 
			\]
\end{itemize}
\smallskip
On the converse, {\em Minlos' theorem} \cite{Bour}\cite{GJ}\cite{M}\cite{Si1} 
states that any function $E$ on ${\mathscr S}_{\mathbb{R}}(S_{\beta}\times \mathbb{R})$ satisfying 
the properties~$i.)$--$iii.)$ is the Fourier transform of a probability measure~$ {\rm d} \mu$ on Q. 

Generating functionals of the form
	\begin{equation}
		\label{e1.00}
			E_0(f) = {\rm e}^{- C(f,f)/2}, \quad f\in {\mathscr S}_{\mathbb{R}}(S_{\beta}\times \mathbb{R}) , 
	\end{equation}
with $C(\, . \, , \, . \,)$ a weakly continuous positive semi-definite quadratic form, clearly satisfy the conditions $i.)$--$iii.)$ of 
Minlos' theorem and thus  give rise to probability measures on~$(Q, \Sigma)$.  These measures are called {\em Gaussian} 
measures. The Gaussian measure  on $Q$, with  {\em covariance}
	\begin{equation}
		\label{cov}
		C(f_1, f_2):= \left(f_1, (D^2_\alpha + D^2_ x + m^2)^{-1}f_2 \right), 
		\quad f_1,f_2\in {\mathscr S} (S_{\beta}\times\mathbb{R})  ,
	\end{equation} 
is denoted by ${\rm d}\phi_{C}$. The  scalar product $(\, .\, , \, . \, )$ in (\ref{cov}) refers 
to~$L^{2}(S_{\beta}\times \mathbb{R})$ and $D_{\alpha}= - i \partial_{\alpha}$, $D_{x}= -i \partial_{x}$. 
The Euclidean quantum field $\phi$ on the cylinder is called {\em free}, if $\phi(f)$ is 
viewed as a measurable function on the probability space $(Q,\Sigma, {\rm d}\phi_{C})$.

In this work we are interested in non-Gaussian measures (see Theorem \ref{3.2} below). 
They formally result from adding a polynomial of the form~${\mathscr P}(\varphi)$, where 
${\mathscr P}(\lambda)$, $\lambda \in \mathbb{R}$, is a  polynomial which is bounded from below, to the Hamiltonian 
of the free massive boson field.

In two dimensions, the singularities, which arise from taking powers of the Euclidean field $\phi$ at a 
point $(\alpha,x) \in S_{\beta}\times \mathbb{R}$, can be removed  by first normal ordering~$:\! . \!:_{c}$  (see \cite{GJ,Si1}) 
the monomials~$\phi(f)^{n}$, $n \in \mathbb{N}$,  
	\begin{equation}
		\label{wick}
		:\!\phi(f)^{n}\!:_{c} \; \;  := \sum_{m=0}^{[n/2]}\frac{n!}{m!(n-2m)!} \; \phi(f)^{n-2m} \Bigl(-\frac{1}{2}  c(f,f) \Bigr)^{m} 
	\end{equation}
with respect to a covariance~$c$, and then taking appropriate limits. $[\, . \, ]$ denotes taking the integer part. We will normal 
order with respect to different covariances~$c$, some of them being limiting cases of the covariance $C$ defined in (\ref{cov}).

Normal-ordering of point-like fields is  ill-defined (\emph{i.e.}, one cannot replace the test function~$f\in
{\mathscr S}_{\mathbb{R}}(S_{\beta}\times \mathbb{R})$ in (\ref{wick}) by a two dimensional Dirac $\delta$-function), 
but integrals over normal-ordered point-like fields can be defined rigorously: set, for~$k\in \mathbb{N}$ and 
$\kappa \in \mathbb{R}^+$,
	\[
		\delta_{k}(\alpha):=\beta^{-1}\sum_{|n|\leq k}{\rm e}^{i \nu_n \alpha } \quad \hbox{and} \quad
		\delta_{\kappa}(x):= \kappa \chi(\kappa x),
	\] 
where $\nu_n = 2\pi n/\beta$, $n \in \mathbb{N}$, and $\chi$ is an arbitrary, absolutely integrable function in 
$C_0^\infty(\mathbb{R})$ with $\int \chi(x)\, {\rm d} x=1$. With these notations we have the following result due to Glimm and Jaffe:

\begin{theorem}[Ultraviolet renormalisation \cite{GeJ2}\cite{GJ}]  
\label{TH1}
For $f\in L^{1}(S_{\beta} \times \mathbb{R} )\cap L^{2}(S_{\beta} \times \mathbb{R})$, the following  limit exists  
in $\kern -.2cm \bigcap \limits_{1\leq p<\infty} \kern -.2cm L^{p}(Q, \Sigma, {\rm d}\phi_{C})$:
	\begin{equation}
		\label{uvren}
		\: \lim_{k, \kappa\to \infty}\int_{S_{\beta}\times \mathbb{R}}f(\alpha,x):\!
		\phi \bigl(\delta_{k}(.-\alpha)\otimes \delta_{\kappa}(.-x) \bigr)^{n}\!:_{C}{\rm d} \alpha \, {\rm d} x .
	\end{equation}
We denote it by $ \int_{S_{\beta}\times \mathbb{R}}f(\alpha,x):\! \phi(\alpha,x)^{n}\!:_{C}{\rm d} \alpha {\rm d} x$. 
\end{theorem}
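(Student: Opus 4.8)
The plan is to establish Theorem~\ref{TH1} (ultraviolet renormalisation) by the standard Glimm--Jaffe $L^p$-convergence argument, adapted to the cylinder. Write $\phi_{k,\kappa}(\alpha,x):=\phi\bigl(\delta_k(\,.-\alpha)\otimes\delta_\kappa(\,.-x)\bigr)$ for the smeared field, and set
\[
V_{k,\kappa}(f):=\int_{S_\beta\times\rr}f(\alpha,x)\,:\!\phi_{k,\kappa}(\alpha,x)^n\!:_C\,\d\alpha\,\d x .
\]
The key point is that the cutoff covariance $c_{k,\kappa}(\alpha,x;\alpha',x'):=\bigl(\delta_k(\,.-\alpha)\otimes\delta_\kappa(\,.-x),\,(D_\alpha^2+D_x^2+m^2)^{-1}\,\delta_k(\,.-\alpha')\otimes\delta_\kappa(\,.-x')\bigr)$ converges, as $k,\kappa\to\infty$, to the free covariance $C(\alpha,x;\alpha',x')$ off the diagonal, and that the latter has only a logarithmic singularity on the diagonal in two dimensions. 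This logarithmic divergence is exactly what normal ordering subtracts, so the renormalised quantities stay bounded. First I would recall the explicit form of $C$ on the cylinder: either via the Fourier series $C(\alpha,x;\alpha',x')=\beta^{-1}\sum_{m\in\zz}\int_\rr\frac{\e^{i\nu_m(\alpha-\alpha')}\e^{ip(x-x')}}{\nu_m^2+p^2+m^2}\,\frac{\d p}{2\pi}$, or by the image-sum representation relating it to the full-plane covariance $-\frac{1}{2\pi}K_0(m|\cdot|)$; from either representation one reads off that $C$ is continuous away from the diagonal and $C(\xi;\xi')=-\frac{1}{2\pi}\log|\xi-\xi'|+O(1)$ as $\xi'\to\xi$.

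The main step is the $L^2$-estimate. Because $:\!\phi_{k,\kappa}^n\!:_C$ lies in the $n$-th Wiener chaos, the Gaussian (Wick/hypercontractivity) calculus gives a clean formula for the second moment: for $k'\ge k$, $\kappa'\ge\kappa$,
\[
\bigl\|V_{k,\kappa}(f)-V_{k',\kappa'}(f)\bigr\|_{L^2}^2
= n!\int\!\!\int f(\xi)\overline{f(\xi')}\,\bigl(c^{\#}(\xi;\xi')\bigr)^n\,\d\xi\,\d\xi' ,
\]
where $c^{\#}$ runs over the four pairings of cutoff/no-cutoff covariances $c_{k,\kappa},c_{k',\kappa'},c_{k,\kappa;k',\kappa'}$ with alternating signs. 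Each such kernel converges pointwise off the diagonal to $C$, and each is dominated near the diagonal by $\mathrm{const}\cdot(1+|\log|\xi-\xi'||)$ uniformly in the cutoffs; since $(1+|\log r|)^n$ is locally integrable in two dimensions and $f\in L^1\cap L^2$ controls the behaviour at spatial infinity along $\rr$ (the circle direction being compact), dominated convergence forces the right-hand side to $0$. Hence $V_{k,\kappa}(f)$ is $L^2$-Cauchy. Then I would invoke hypercontractivity of the Ornstein--Uhlenbeck semigroup (Nelson), i.e.\ the fact that on the fixed finite sum of Wiener chaoses up to order $n$ all $L^p$-norms, $1\le p<\infty$, are equivalent to the $L^2$-norm with constants depending only on $p$ and $n$: this upgrades the $L^2$-Cauchy property to $L^p$-Cauchy for every finite $p$, so the limit exists in $\bigcap_{1\le p<\infty}L^p(Q,\Sigma,\d\phi_C)$ and is independent of the order in which $k,\kappa\to\infty$ and of the choice of mollifier $\chi$.

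The step I expect to be the genuine obstacle is the uniform logarithmic bound on the four cutoff covariance kernels $c^{\#}$ near the diagonal, together with the matching decay control at large $|x-x'|$ that makes the dominating function integrable against $f\otimes\bar f$ globally on $(S_\beta\times\rr)^2$. On the circle factor this requires summing the Matsubara series $\beta^{-1}\sum_m (\nu_m^2+p^2+m^2)^{-1}$ and checking that mollification by $\delta_k$ only improves convergence; on the $\rr$ factor one uses the exponential decay $K_0(mr)\sim\e^{-mr}$ of the free covariance to reduce the $x$-integration to a neighbourhood of the diagonal up to an $L^1(\d x)$-tail handled by $f\in L^1$. This is precisely the place where the hypotheses $f\in L^1\cap L^2$ and the two-dimensionality of spacetime enter essentially; everything else is the standard Wick-ordering bookkeeping. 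I would therefore reference \cite{GJ} and \cite{GeJ2} for the bare covariance estimates and present only the chaos-expansion identity and the dominated-convergence conclusion in detail, noting that the construction of $\d\phi_C$ via Minlos' theorem (equations~(\ref{e1.00})--(\ref{cov})) guarantees all the Gaussian moments used above are finite.
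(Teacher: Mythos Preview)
Your sketch is correct and is precisely the standard Glimm--Jaffe argument the paper has in mind: the paper does not give a self-contained proof of Theorem~\ref{TH1} at all, but merely cites \cite{GeJ2} and \cite{GJ} in the header and remarks afterwards that the result ``follows from exactly the same arguments as in the vacuum case analyzed by Glimm and Jaffe.'' Your outline (Wick calculus to reduce the $L^2$-estimate to an integral of $c^{\#}(\xi;\xi')^n$, the uniform logarithmic diagonal bound on the cutoff covariances, dominated convergence, and then hypercontractivity to pass from $L^2$ to all $L^p$) is exactly that argument, so you have in fact supplied more detail than the paper does.
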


\paragraph{Remark} This key theorem, which follows from exactly the same arguments as in the vacuum case analyzed by 
Glimm and Jaffe \cite{GJ},  establishes a crucial step forward in the construction of the ${\mathscr P}(\varphi)_2$ model in finite volume, 
as it takes care (see Eq.~(\ref{co}) below) of the {\em ultraviolet renormalisation}. 

\bigskip
Let $ {\mathscr P} (\lambda) = \sum_{j}c_j \lambda^j$ be a real valued polynomial, which
is bounded from below. Replacing the function $f$ in (\ref{uvren}) by the characteristic function 
of the set $S_\beta \times [-l , l]$, $l \in \mathbb{R}^+$, and applying \cite[Lemma V.5]{Si1}, we deduce that 
	\begin{equation}
		\label{co}
		{\rm e}^{-\int_{-\beta/2}^{\beta/2}\int_{-l}^{l}  {:} {\mathscr P} (\phi (\alpha,x)){:}_C \; {\rm d} \alpha {\rm d} x}
		\in L^1 (Q, \Sigma, {\rm d}\phi_{C})  \quad \hbox{if} \quad 0< l < \infty \; . 
	\end{equation}
The {\em Euclidean ${\mathscr P}(\phi)_{2}$ model} on the cylinder with a spatial cut-off $l \in \mathbb{R}^+$ is specified by setting
	\begin{equation} 
		\label{cutmeasure}
		{\rm d}\mu_{l}:=\frac{1}{Z_l} \, {\rm e}^{-\int_{-\beta/2}^{\beta/2}\int_{-l}^{l}   {:} {\mathscr P} (\phi (\alpha,x)){:}_C \; 
		{\rm d} \alpha {\rm d} x} {\rm d}\phi_{C}. 
	\end{equation} 
The partition function $ Z_l$ is chosen such that $\int_{Q} {\rm d}\mu_l =1$. If $l < \infty$, then the measure ${\rm d} \mu_l$
is absolutely continuous with respect to the Gaussian measure ${\rm d}\phi_{C}$, with Radon-Nikodym derivative
${\rm d} \mu_l / {\rm d}\phi_{C}$ given by (\ref{co}). However, the limit of the functions 
in (\ref{co}) fails to be in $L^1(Q,\Sigma,{\rm d}\phi_{C})$ as $l \to \infty$, and therefore
the formal limiting measure can  not be absolutely continuous with respect to the Gaussian measure. 
In fact, in order to show that  a countably additive Borel measure exists in the limit $l \to \infty$, 
it is sufficient to show (see Theorem~\ref{3.2} below) that
	\begin{equation}
		\lim_{l\to +\infty}\int_{Q}
		{\rm e}^{ i \phi(f)}{\rm d}\mu_{l} = :  E_{\mathscr P}(f) , \qquad   f\in {\mathscr S} (S_{\beta}\times \mathbb{R}), 
	\label{limit}
	\end{equation}
defines a generating functional on ${\mathscr S}_{\mathbb{R}}'(S_{\beta}\times\mathbb{R})$ satisfying 
the properties~$i.)$--$iii.)$ of Minlos' theorem. 

\subsection{Sharp-time fields, Existence of the Euclidean measure in the thermodynamic limit, and Nelson symmetry}
\label{sectnelsym}

Cluster expansions (see \emph{e.g.}~\cite{GJ}) certainly allow one to control the limit in (\ref{limit}). But for the  
thermal ${\mathscr P}(\varphi)_2$ model, in which we are interested, there is another option, which was first explored in this context 
by H\o egh-Krohn \cite{H-K}: {\em Nelson symmetry}. It results from replacing the product measure ${\rm d} \alpha {\rm d} x$ in the exponent 
in (\ref{cutmeasure}) by iterated integrals with respect to the two measures ${\rm d} \alpha$ and $ {\rm d} x$,   in different orders. 
The delicate point, which will now be addressed in some more detail, is that {\em one} of the limits in~(\ref{uvren}) can be interchanged 
with {\em one} of the integrations. 

\goodbreak
In \cite {GeJ2} it has been shown that
	\begin{itemize}
		\item[$ i.)$] for $h_{1}, h_{2} \in {\mathscr S}_{\mathbb{R}}(\mathbb{R})$ and $0\leq \alpha_{1} , \alpha_2 \leq \beta$, 
			\begin{align}
				\label{gre}
				& \lim_{k,k'\to \infty} C \bigl(\delta_{k}(.-\alpha_{1})\otimes h_{1},
					\delta_{k'}(.-\alpha_{2})\otimes h_{2} \bigr) 
				\nonumber\\  
				& \qquad \qquad 
			= \Bigl( h_{1}, \frac{{\rm e}^{-|\alpha_{1}-
				\alpha_{2}|\epsilon}+ {\rm e}^{-(\beta-|\alpha_{1}-\alpha_{2}|)\epsilon}}{
				2\epsilon(1-{\rm e}^{- \beta \epsilon})}h_{2} \Bigr)_{L^{2}(\mathbb{R}, {\rm d} x)},
			\end{align}
			with $\epsilon:= \left(D_{x}^{2}+ m^{2}\right)^{\frac{1}{2} }$;
		\item[$ ii.)$] for $g_{1} , g_{2} \in {\mathscr S}_{\mathbb{R}}(S_{\beta})$ and $x_1, x_2 \in \mathbb{R}$,  
			\begin{equation}
				\label{gref}
				\lim_{\kappa , \kappa' \to \infty}C\bigl(g_{1}\otimes \delta_{\kappa}(.-x_{1}), g_{2}\otimes
				\delta_{\kappa'}(.-x_{2}) \bigr)= \Bigl(g_{1}, \frac{{\rm e}^{- |x_{1}- x_{2}| \nu}}{
				2\nu}g_{2} \Bigr)_{L^{2}( S_{\beta}, {\rm d} \alpha)},
			\end{equation}
		with $\nu:= \left(D_{\alpha}^{2}+ m^{2}\right)^{\frac{1}{2} }$. 
\end{itemize}

\smallskip
\noindent
Thus, for $h\in {\mathscr S}_{\mathbb{R}}(\mathbb{R})$,  $g\in {\mathscr S}_{\mathbb{R}}(S_{\beta})$ and
$\alpha\in S_{\beta}$, $x\in \mathbb{R}$ fixed, the sequences of functions
	\[ 
		\left\{ \phi \bigl( \delta_{k}(.-\alpha)\otimes h\bigr) \right\}_{k \in \mathbb{N}} \quad
		\hbox{and} \quad \left\{ \phi \bigl(g\otimes\delta_{\kappa}( .-x) \bigr) \right\}_{\kappa \in \mathbb{N}}
	\]
are Cauchy sequences in $\bigcap_{1\leq p<\infty}L^{p}(Q, \Sigma, {\rm d} \phi_{C})$. This can be derived from
the definition of the generating Gaussian functional, as (\ref{e1.00}) implies
	\begin{equation}
		\int_{Q}\phi(f)^{p}{\rm d}\phi_{C}= 
		\begin{cases}
		0, & p \text{ odd} \, ,\\
		(p-1)!! \; C(f,f)^{p/2},& p \text{ even} \, ,
		\end{cases} 
		\label{e1.0}
	\end{equation}
with $n!!= n(n-2)(n-4)\cdots 1$.  We can therefore define sharp-time fields
	\begin{equation}
		\phi (\alpha, h):=\lim_{k\to \infty}\phi \bigl(\delta_{k}(.-\alpha)\otimes h \bigr), 
		\quad  \phi (g, x):= \lim_{\kappa\to \infty}\phi \bigl(g\otimes\delta_{\kappa}(.-x) \bigr). 
		\label{e1.4}
	\end{equation}
We note that  both $\phi (\alpha, h)$ and $\phi (g, x)$ belong to $\bigcap_{1\leq p<\infty}L^{p}(Q, \Sigma, {\rm d} \phi_{C})$.  

\goodbreak

\begin{lemma}[Integrals over sharp-time fields \cite{GeJ2}]
\label{LM1}
\quad \begin{itemize}
\item [$ i.)$]  For $h\in L^{1}(\mathbb{R})\cap L^{2}(\mathbb{R})$ and $\alpha \in [0, 2\pi)$ the  limit 
	\begin{equation}
		\label{vr} 
		\lim_{\kappa \to \infty}
		\int_{\mathbb{R}}h(x):\!\phi(\alpha, \delta_{\kappa}(.-x))^{n}\!:_{C_{0}}{\rm d} x  
	\end{equation}
exists in $\bigcap_{1\leq p<\infty}L^{p}(Q, \Sigma, {\rm d}\phi_{C})$. Denote it by 
$\int_{\mathbb{R}} h(x) :\!\phi(\alpha, x)^{n}\!:_{C_{0}}{\rm d} x $. Normal ordering in (\ref{vr})  is with respect to   
the {\em temperature~$\beta^{-1}$ covariance} on $\mathbb{R}$: for $h_{1}, h_{2}\in {\mathscr S}(\mathbb{R})$
	\begin{equation}
		\qquad C_{0}(h_{1}, h_{2}):=  \Bigl( h_{1}, \frac{(1+{\rm e}^{-\beta\epsilon})}{
		2\epsilon(1-{\rm e}^{- \beta \epsilon})}h_{2} \Bigr)_{L^{2}(\mathbb{R}, {\rm d} x)} \, . 
		\label{c0}
	\end{equation}
\item [$ ii.)$]   For $g\in L^{1}(S_{\beta})\cap L^{2}(S_{\beta})$ and $x \in \mathbb{R}$
the limit
	\begin{equation}
		\label{vc} \lim_{k \to \infty}
		\int_{S_{\beta}}g(\alpha):\!\phi(\delta_{k}(.-\alpha), x)^{n}\!:_{C_{\beta}}{\rm d} \alpha   
	\end{equation}
exists in $\bigcap_{1\leq p<\infty}L^{p}(Q, \Sigma, {\rm d}\phi_{C})$. Denote it by 
$\int_{S_{\beta}} g(\alpha) \, {:}\phi(\alpha, x)^{n}{:}_{C_{\beta}}{\rm d} \alpha $. Normal ordering 
in (\ref{vc}) is w.r.t.~the covariance
	\begin{equation}
		\label{cbeta}
		C_{\beta}(g_{1}, g_{2}):=\Bigl(g_{1}, \frac{1}{2\nu}g_{2}\Bigr)_{L^{2}( S_{\beta}, {\rm d} \alpha)}, 
		\qquad g_{1}, g_{2} \in {\mathscr S}(S_{\beta}).
	\end{equation}
\end{itemize}
\end{lemma}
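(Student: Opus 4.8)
The plan is to treat the two statements in parallel, since they are mirror images of one another under the interchange of the roles of the Euclidean variables $\alpha \in S_\beta$ and $x \in \rr$; I will describe (i) in detail and indicate how (ii) follows by the same argument. The starting point is Theorem~\ref{TH1}, which already gives convergence of the doubly smeared, doubly regularised object $\int f(\alpha,x) :\!\phi(\delta_k(\cdot-\alpha)\otimes\delta_\kappa(\cdot-x))^n\!:_C \,\d\alpha\,\d x$ in every $L^p(Q,\Sigma,\d\phi_C)$. What has to be added here is that one may first perform the limit $k\to\infty$ (collapsing the $\alpha$-regularisation, so that the field becomes sharp in Euclidean time at the value $\alpha$), \emph{before} integrating over $x$, and that the normal ordering may simultaneously be rewritten with respect to the fixed-$\alpha$ covariance $C_0$ of (\ref{c0}) rather than the full covariance $C$ of (\ref{cov}). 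The key analytic input is the explicit evaluation of the limiting covariances in (\ref{gre}): setting $\alpha_1=\alpha_2=\alpha$ there gives exactly the kernel $(1+\e^{-\beta\epsilon})/\bigl(2\epsilon(1-\e^{-\beta\epsilon})\bigr)$ appearing in $C_0$, so the sharp-time field $\phi(\alpha,h)$ defined in (\ref{e1.4}) is a Gaussian random variable with covariance $C_0(h,h)$.

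First I would record that for fixed $\alpha$ the regularised sharp-time field $\phi(\alpha,\delta_\kappa(\cdot-x))$ is, by (\ref{e1.4}) and the Cauchy property established just before Lemma~\ref{LM1}, the $L^p$-limit as $k\to\infty$ of $\phi(\delta_k(\cdot-\alpha)\otimes\delta_\kappa(\cdot-x))$, with Gaussian covariance converging by (\ref{gre}) to $C_0(\delta_\kappa(\cdot-x),\delta_\kappa(\cdot-x))$. Next I would pass this through the Wick polynomial: since $:\!\phi(g)^n\!:_c$ depends polynomially on $\phi(g)$ and on the single number $c(g,g)$ (formula (\ref{wick})), convergence of $\phi(\delta_k(\cdot-\alpha)\otimes\delta_\kappa(\cdot-x))$ in all $L^p$ together with convergence of the covariances gives convergence of $:\!\phi(\delta_k(\cdot-\alpha)\otimes\delta_\kappa(\cdot-x))^n\!:_C$ to $:\!\phi(\alpha,\delta_\kappa(\cdot-x))^n\!:_{C_0}$ in all $L^p$ — here one uses that products of $L^p$-convergent sequences of Gaussian-chaos variables converge, because on a fixed finite Wiener chaos all $L^p$-norms are equivalent (Nelson's hypercontractivity / the $(p-1)!!$ moment bounds recorded in (\ref{e1.0})). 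At this stage I would have identified the $k\to\infty$ limit of the integrand of (\ref{uvren}) pointwise in $x$, and the remaining task is to integrate over $x$ and take $\kappa\to\infty$.

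To control the $x$-integration and the limit $\kappa\to\infty$ I would estimate the $L^p(Q,\d\phi_C)$-norm of the difference of two approximants (with parameters $\kappa,\kappa'$) of $\int_\rr h(x):\!\phi(\alpha,\delta_\kappa(\cdot-x))^n\!:_{C_0}\d x$, reducing by the chaos-norm equivalence to an $L^2$ estimate, i.e.\ to a bound on the integral over $(x,y)$ of $h(x)\overline{h(y)}$ times the two-point function of the Wick powers, which by the Gaussian combinatorics is a sum of powers of $C_0(\delta_\kappa(\cdot-x),\delta_{\kappa'}(\cdot-y))$. The point is that this kernel is the restriction to the fixed time slice of the free covariance and hence has only a logarithmic singularity on the diagonal in two dimensions, so its powers are locally integrable and the $h\in L^1\cap L^2$ hypothesis suffices for absolute convergence of the double integral, uniformly in $\kappa,\kappa'$, with the expected Cauchy behaviour as $\kappa,\kappa'\to\infty$. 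This is essentially the same estimate Glimm and Jaffe use, and the sharp-time restriction does not worsen it — indeed the main obstacle, and the only place where genuine care is needed, is precisely checking that collapsing one of the two regularisations commutes with one of the two integrations, i.e.\ that the iterated limit computed here agrees with the joint limit of Theorem~\ref{TH1}; this is handled by the uniform (in $k$, then in $\kappa$) $L^2$-estimates just described, which justify the interchange by dominated convergence on $Q$. Finally, statement (ii) is obtained verbatim with $(\alpha,x)$, $(\delta_k,\delta_\kappa)$, $\epsilon$, $\nu$ and (\ref{gre}), (\ref{gref}) interchanged; the only difference is that the fixed-$x$ covariance $C_\beta$ of (\ref{cbeta}) is the resolvent kernel of $(2\nu)^{-1}$ on the compact circle $S_\beta$, which is bounded, so that part is if anything easier.
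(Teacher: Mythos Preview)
The paper does not give a proof of this lemma; it is quoted from \cite{GeJ2}, so there is no in-paper argument to compare against. Your sketch is essentially the standard Glimm--Jaffe argument adapted to the cylinder, and the strategy --- identify the sharp-time covariance via (\ref{gre}) with $\alpha_1=\alpha_2$, pass the $k\to\infty$ limit through the Wick polynomial using the polynomial formula (\ref{wick}) together with hypercontractivity, then control the remaining $\kappa$-limit by an $L^2$ computation which reduces via the Gaussian combinatorics to integrability of powers of a logarithmically singular kernel --- is the correct one and is what one finds in \cite{GeJ2} and in \cite{GJ,Si1} for the flat case.

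One slip: in your closing remark on part (ii) you assert that the kernel of $(2\nu)^{-1}$ on the circle $S_\beta$ is bounded. It is not; the Fourier coefficients $(2\sqrt{\nu_n^2+m^2})^{-1}$ decay only like $|n|^{-1}$, so the kernel has the same logarithmic singularity on the diagonal as in case (i), and this is precisely why normal ordering is needed there too. This does not damage your argument --- powers of a logarithm remain locally integrable, and compactness of $S_\beta$ makes the $L^1$ hypothesis on $g$ redundant once $g\in L^2(S_\beta)$ --- but the claim that (ii) is easier \emph{for this reason} is wrong; the two cases really are symmetric, as you said at the outset.
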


Returning to  the integral in~(\ref{uvren}), we let $f$ be the characteristic function on $S_{\beta} \times [-l,l]$. 
This enables us to rewrite (\ref{uvren}) as $\lim_{k,\kappa \to \infty} F(k,\kappa),$ where
	\[
		F(k,\kappa) 
		=  \sum_{m=0}^{[n/2]} \frac{n! \big(-\frac{1}{2} 
		C(\delta^{(2)}_{k,\kappa},\delta^{(2)}_{k,\kappa}) \big)^m}{m!(n-2m)!} 
		\int_{S_{\beta} \times [-l,l]} \kern -1cm {\rm d} \alpha \,
		{\rm d} x \; \; \phi\big(\delta_k(\cdot - \alpha) \otimes 
		\delta_{\kappa}(\cdot -x) \big)^{n-2m}  ,
	\]
and $\delta^{(2)}_{k,\kappa}(\alpha,x) := \delta_k(\alpha) \otimes \delta_{\kappa}(x).$ 
Interchanging integrals and limits is permitted by the existence of the limits in (\ref{uvren}), 
(\ref{vr}) and (\ref{vc}).  Performing the two limits in different orders results  in	
	\[
		\lim_{k,\kappa \to \infty} F(k,\kappa) = \lim_{\kappa \to \infty} \sum_{m=0}^{[n/2]} 
		\frac{n! \big(-\frac{1}{2} C_{0}(\delta_{\kappa},\delta_{\kappa}) \big)^m}{m!(n-2m)!} 
		\int_{S_{\beta}} \kern -.2cm  {\rm d} \alpha \int_{ [-l,l]} \kern -.5cm {\rm d} x  \; \; 
		\phi\big( \alpha, \delta_{\kappa}(\cdot - x) \big)^{n-2m} 
	\]
and 
	\[
		\lim_{k,\kappa \to \infty} F(k,\kappa) = \lim_{k \to \infty} \sum_{m=0}^{[n/2]} 
		\frac{n! \big(-\frac{1}{2} C_{\beta}(\delta_{k},\delta_{k}) \big)^m}{m!(n-2m)!} 
		\int_{ [-l,l]} \kern -.5cm {\rm d} x
		\int_{S_{\beta}} \kern -.2cm  {\rm d} \alpha \; \; \phi\big(\delta_k(\cdot - \alpha),
		 x \big)^{n-2m} .
	\]
Note that in the latter expression  normal ordering is done w.r.t.~the covariance~$C_{\beta}$, whilst in the former 
normal ordering is done with respect to  the  temperature~$\beta^{-1}$ covariance~$C_0$ on~$\mathbb{R}$. 

Now let $U(\alpha, x)$, with $\alpha \in [0, 2\pi)$ and $x \in {\mathbb R}$, denote the unitary 
operators implementing the rotations and translations on the cylinder in $L^2 ( Q, \Sigma, {\rm d} \mu)$ 
(for further details see next section). It follows that the $L^1$-function (\ref{co}) equals
	\begin{equation}
		{\rm e}^{-\int_{-l}^{l} U(0, x) ( \int_{-\beta/2}^{\beta/2} {:} 
		 {\mathscr P}(\phi(\alpha, 0)) {:}_{C_{\beta}} \; {\rm d} \alpha ) {\rm d} x} 
		= {\rm e}^{-\int_{-\beta/2}^{\beta/2} U(\alpha,0) ( \int_{-l}^{l} 
		{:} {\mathscr P}(\phi(0, x)) {:}_{C_{0}} \; {\rm d} x ) {\rm d} \alpha} .
		\label{ns}
	\end{equation}
A proof of this identity can be found in  \cite[Lemma 5.3]{GeJ2}. The analog of (\ref{ns}) in the 
case $\beta = \infty$ is known as {\it Nelson 
symmetry} (see \emph{e.g.}~\cite{Si1}). Interpreting $x$ in~(\ref{ns}) as the imaginary time one notices that
${\rm d}\mu=\lim_{l\to \infty}{\rm d} \mu_l$ is the Euclidean measure of the vacuum 
${\mathscr P}(\varphi)_2$ model on the circle. This argument can be
made rigorous (see \cite[Theorem 7.2]{GeJ2}, \cite{H-K}) by exploiting various properties of a time 
dependent heat equation (see 
\cite[Appendix~A]{GeJ2}).

\goodbreak
\begin{theorem}\label{3.2} 
Consider sharp-time fields as defined in (\ref{e1.4}), and integrals over normal ordered 
products as defined in (\ref{vr}) and (\ref{vc}). 
\begin{itemize}
\item[$i.)$]
{\rm  (Thermodynamic limit of Euclidean measures).}
For $f \in C^\infty_{0\mathbb{R}}(S_{\beta}\times\mathbb{R})$
\begin{equation}
\label{limit3}
E_{\mathscr P} (f) = \lim_{l\to +\infty} \frac{1}{Z_l} \int_{Q}
{\rm e}^{ i \phi(f)}
\, {\rm e}^{-\int_{-l}^{l} U (0, x) \left( \int_{-\beta/2}^{\beta/2} {:} P(\phi(\alpha, 0)) {:}_{C_{\beta}} \; 
{\rm d} \alpha \right) {\rm d} x} {\rm d} \phi_{C}  \; . 
\end{equation}
\item [$ii.)$]{\rm (Nelson symmetry).}
For $f \in C^\infty_{0\mathbb{R}}(S_{\beta}\times\mathbb{R})$
\begin{equation}
\label{limit4}
E_{\mathscr P} (f) = \lim_{l\to +\infty} \frac{1}{Z_l} \int_{Q}
{\rm e}^{ i \phi(f)} \, {\rm e}^{-\int_{-\beta/2}^{\beta/2}  U(\alpha, 0) 
\left( \int_{-l}^{l} {:}  P(\phi(0, x)) {:}_{C_{0}} \; {\rm d} x \right) {\rm d} \alpha} {\rm d} \phi_{C} \; .
\end{equation}
\end{itemize}
The map $f\mapsto E_{\mathscr P}(f)$ is continuous in some Schwartz semi-norm and thus
extends to ${\mathscr S}(S_{\beta}\times \mathbb{R})$ \cite[Theorem 7.2 $ii.)$]{GeJ2}. It satisfies 
the conditions of Minlos' theorem and thus defines a probability measure ${\rm d} \mu$. 
\end{theorem}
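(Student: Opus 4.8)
Parts (i) and (ii) are the single assertion that the limit $l\to\infty$ in (\ref{limit}) exists on $C^\infty_{0\rr}(S_\beta\times\rr)$, written out with the Radon--Nikodym derivative (\ref{co}) in two equivalent forms: by the Nelson symmetry identity (\ref{ns}) (i.e.\ \cite[Lemma~5.3]{GeJ2}) the density $\ud\mu_l/\d\phi_C$ equals the left-hand side of (\ref{ns}), giving (\ref{limit3}), and equals the right-hand side of (\ref{ns}), giving (\ref{limit4}). So the substance is (a) the existence of the $l\to\infty$ limit, and (b) that the resulting functional $E_P$ on $\cS_\rr(S_\beta\times\rr)$ is continuous in a Schwartz seminorm and satisfies the hypotheses (i)--(iii) of Minlos' theorem; granted this, Minlos' theorem produces $\d\mu$.

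\paragraph{Existence of the limit} I would work with the form (\ref{limit3}) (the left-hand side of (\ref{ns})), in which the interaction is sliced in the variable $x$: there $\int_{-l}^{l}\ud x$ is a cutoff in Euclidean time, while the spatial variable $\alpha$ ranges over the \emph{compact} circle $S_\beta$, so that $\int_{S_\beta}{:}P(\phi(\alpha,x)){:}_{C_\beta}\,\ud\alpha$ is the Hamiltonian density of the $P(\phi)_2$ model on $S_\beta$. Removing the cutoff is then the comparatively soft infinite-time limit of that model \emph{in finite spatial volume}: the Glimm--Jaffe estimates of \cite{GeJ1} (together with Theorem~\ref{TH1} and Lemma~\ref{LM1}) show that $H_C := H_0 + \int_{S_\beta}{:}P(\phi(\alpha)){:}_{C_\beta}\,\ud\alpha$, with $H_0$ the free circle Hamiltonian, is self-adjoint and bounded below, with positivity-improving semigroup $\e^{-tH_C}$ and hence a unique, strictly positive ground state $\Omega_C$. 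By the Feynman--Kac--Nelson formula, for $f$ supported in a time-slab $S_\beta\times[a,b]$ and $l$ large, $\int_Q\e^{i\phi(f)}\,\ud\mu_l$ is a ratio of matrix elements built from $\e^{-tH_0}$ and $\e^{-tH_C}$; since $H_C$ does not depend on $l$, letting $l\to\infty$ the outer propagators $\e^{-(l+a)H_C}$, $\e^{-(l-b)H_C}$ project onto $\Omega_C$, so the ratio converges to an expectation in $\Omega_C$. The rigorous justification --- in particular the interchange of the removal of the cutoff $l\to\infty$ with the ultraviolet limits in (\ref{uvren}), and the bookkeeping of the finite, volume-proportional counterterms arising from switching the normal-ordering covariance ($C$ versus $C_\beta$, resp.\ $C_0$) --- is precisely the time-dependent heat-equation analysis of \cite[Theorem~7.2, Appendix~A]{GeJ2}, \cite{H-K}, which I would invoke; it simultaneously identifies $\d\mu$ with the Euclidean measure of the vacuum $P(\phi)_2$ model on the circle, and, via (\ref{ns}) again, yields the equivalent representation (\ref{limit4}) of part (ii).

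\paragraph{Minlos' hypotheses} Properties (i) and (iii) of Minlos' theorem pass to the limit for free: each $E_{P,l}(f):=\int_Q\e^{i\phi(f)}\,\ud\mu_l$ is the Fourier transform of the probability measure $\ud\mu_l$, so $E_{P,l}(0)=1$ and $\sum_{i,j}z_i\bar z_j\,E_{P,l}(f_i-f_j)\ge0$, and both are stable under the pointwise limit $l\to\infty$; hence $E_P(0)=1$ and $E_P$ is of positive type. For (ii), a function $E$ of positive type with $E(0)=1$ satisfies $|E(f)-E(g)|^2\le 2\bigl(1-\mathrm{Re}\,E(f-g)\bigr)$ and is therefore continuous everywhere once it is continuous at the origin; it thus suffices to produce a Schwartz seminorm $\|\cdot\|$ and a constant $c$ with $|E_{P,l}(f)-1|\le c\,\|f\|$ \emph{uniformly in $l$}, for then the same bound descends to $E_P$. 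From $|\e^{iy}-1|\le|y|$ and the Cauchy--Schwarz inequality, $|E_{P,l}(f)-1|\le\bigl(\int_Q\phi(f)^2\,\ud\mu_l\bigr)^{1/2}$, and the uniform-in-$l$ estimate $\int_Q\phi(f)^2\,\ud\mu_l\le c'\,\|f\|^2$ follows from the transfer-matrix representation together with the Glimm--Jaffe $\phi$-bounds for the cut-off circle Hamiltonian, whose constants are uniform in the volume (\cite[Theorem~7.2(ii)]{GeJ2}). With (i)--(iii) in hand, Minlos' theorem \cite{Bour,GJ,M,Si1} yields the probability measure $\d\mu$ on $(Q,\Sigma)$, and $E_P$ extends by continuity to all of $\cS(S_\beta\times\rr)$.

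\paragraph{Main obstacle} The genuinely hard point is (a) together with the \emph{uniformity in the cutoff} required in (ii): one cannot simply invoke Bochner/Minlos but must control, with estimates independent of $l$, the interplay of the removal of the spatial (resp.\ temporal) cutoff with the ultraviolet limits of (\ref{uvren}) and the finite counterterms produced by changing the normal-ordering covariance. This is exactly where the heat-equation machinery of \cite[Appendix~A]{GeJ2} (alternatively, a cluster expansion as in \cite{GJ}) is needed; the remaining bookkeeping is soft.
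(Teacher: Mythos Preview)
Your proposal is correct and aligns with the paper's approach. The paper does not give a self-contained proof of this theorem: it states the result and defers to \cite[Theorem~7.2]{GeJ2} and \cite{H-K}, having explained in the paragraph preceding the statement that the argument is made rigorous via the time-dependent heat equation of \cite[Appendix~A]{GeJ2}. Your sketch reproduces exactly these ingredients---the Nelson symmetry identity (\ref{ns}) to pass between (\ref{limit3}) and (\ref{limit4}), the Feynman--Kac--Nelson/transfer-matrix representation with the circle Hamiltonian $H_C$, the projection onto $\Omega_C$ as $l\to\infty$, and the uniform $\phi$-bounds for Minlos continuity---so there is nothing to add.
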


\paragraph{Remark}
This result solves the {\em infrared problem} for the thermal field theory under consideration. As mentioned 
before, we could have used cluster expansions  to resolve this problem. However,
Nelson symmetry will play a key role in the sequel, enabling us to transfer results between the two models
it connects.

\smallskip
Before we continue, we recall two results, which refer to the $L^p$-spaces for the interacting
measure~${\rm d} \mu$: 

\goodbreak
\begin{lemma}{\rm \cite[Propositions 7.3 and 7.5]{GeJ2}}
\label{3.3}
\begin{itemize}
\item[$ i.)$] {\rm (Sharp-time fields are in $L^{p}(Q, \Sigma, {\rm d} \mu)$)}.
Let $h\in {\mathscr S}_{\mathbb{R}}(\mathbb{R})$ and $\alpha \in S_{\beta}$. Then the sequence
$\phi \bigl(\delta_{k}(.-\alpha)\otimes h \bigr)$ is Cauchy 
in~$\bigcap_{1\leq p<\infty}L^{p}(Q, \Sigma, {\rm d} \mu)$ and hence
	\[
		\phi(\alpha, h):= \lim_{k\to \infty}\phi \bigl(\delta_{k}(\, .\, -\alpha)\otimes h \bigr)\in
		\bigcap_{1\leq p<\infty}L^{p}(Q, \Sigma, {\rm d} \mu).
	\]
Moreover, the map
	\[
		\begin{array}{rcc}
		S_{\beta} & \to &  \bigcap_{1\leq p<\infty}L^{p}(Q, \Sigma, {\rm d} \mu)\\
		\alpha & \mapsto & \phi(\alpha, h)
		\end{array}
	\]
is continuous for $h\in {\mathscr S}_{\mathbb{R}}(\mathbb{R})$ fixed.
\item[$ii.)$] {\rm (Convergence of sharp-time Schwinger functions, Part I)}.
	\label{T3}
	Let $h_{i}\in C^{\infty}_{0\: \mathbb{R}}(\mathbb{R})$ and $\alpha_{i}\in S_{\beta}$, $1\leq i\leq n$.
	Then
		\[
			\lim_{l\to \infty}\int_{Q} \Bigl( \prod_{j=1}^{n}{\rm e}^{i \phi(\alpha_{j},
			h_{j})} \Bigr) {\rm d}\mu_{l}= \int_{Q} \Bigl( \prod_{j=1}^{n}{\rm e}^{i \phi(\alpha_{j},
			h_{j})} \Bigr) {\rm d}\mu.
		\]
\end{itemize}
\end{lemma}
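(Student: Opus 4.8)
The plan is to prove Lemma~\ref{3.3} in two parts, mirroring the structure of the statement, and in both parts to reduce everything to the corresponding facts for the \emph{Gaussian} measure $\d\phi_C$ (already recorded around~(\ref{e1.4}) and~(\ref{e1.0})) together with the $L^1$-convergence of the Radon-Nikodym densities supplied by Theorem~\ref{3.2} and equation~(\ref{co})/(\ref{ns}). The key mechanism throughout is a H\"older argument: for any random variable $F$ one has $\int_Q |F|\,\d\mu_l = Z_l^{-1}\int_Q |F|\,g_l\,\d\phi_C \le Z_l^{-1}\,\|F\|_{L^{p}(\d\phi_C)}\,\|g_l\|_{L^{p'}(\d\phi_C)}$, where $g_l$ is the density in~(\ref{co}); since $g_l$ is bounded in every $L^{p'}(\d\phi_C)$ uniformly in $l$ (this is part of the Glimm-Jaffe/Nelson-symmetry input leading to~(\ref{co}), and $Z_l$ is bounded away from $0$ and $\infty$ in the thermodynamic limit), this transfers $L^p(\d\phi_C)$-bounds to uniform $L^{p'}(\d\mu_l)$-bounds, and in the limit to $L^{p'}(\d\mu)$-bounds.

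For part~(i), I would first invoke the already-established fact that $\phi(\delta_k(\,.\,-\alpha)\otimes h)$ is Cauchy in $\bigcap_p L^p(Q,\Sigma,\d\phi_C)$ (the Gaussian computation~(\ref{e1.0}) together with the kernel limit~(\ref{gre}), which shows $C(\delta_k(\,.\,-\alpha)\otimes h,\delta_{k'}(\,.\,-\alpha)\otimes h)$ converges). Writing $F_{k,k'} := \phi(\delta_k(\,.\,-\alpha)\otimes h) - \phi(\delta_{k'}(\,.\,-\alpha)\otimes h)$, the density-transfer estimate above gives $\|F_{k,k'}\|_{L^p(\d\mu_l)}^p \le C_{p}\,\|F_{k,k'}\|_{L^{2p}(\d\phi_C)}^{p}$ uniformly in $l$, hence also in $L^p(\d\mu)$ after passing $l\to\infty$ using Theorem~\ref{3.2}; since the right-hand side $\to 0$ as $k,k'\to\infty$, the sequence is Cauchy in $\bigcap_p L^p(\d\mu)$ and the limit $\phi(\alpha,h)$ lies there. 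Continuity of $\alpha\mapsto\phi(\alpha,h)$ follows the same way: by~(\ref{gre}) the Gaussian covariance $C(\delta_k(\,.\,-\alpha)\otimes h,\delta_k(\,.\,-\alpha')\otimes h)$, and hence $\|\phi(\alpha,h)-\phi(\alpha',h)\|_{L^p(\d\phi_C)}$, is continuous in $(\alpha,\alpha')$ and vanishes on the diagonal; the H\"older transfer passes this to $\d\mu$.

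For part~(ii), write $G_n := \prod_{j=1}^n \e^{i\phi(\alpha_j,h_j)}$, which is \emph{bounded} (modulus $\le 1$) and, by part~(i), a well-defined element of every $L^p(\d\mu)$ and $L^p(\d\mu_l)$. Then $\int_Q G_n\,\d\mu_l - \int_Q G_n\,\d\mu = Z_l^{-1}\int_Q G_n (g_l - Z_l\, g_\infty)\,\d\phi_C$ up to normalisation bookkeeping, where $g_\infty$ is the formal $l\to\infty$ density; more cleanly, one splits $\int G_n\,\d\mu_l - \int G_n\,\d\mu$ and estimates by $\|G_n\|_\infty\cdot(\text{total-variation-type control})$ — but since $g_l$ need not converge in $L^1$, the honest route is to first replace $\phi(\alpha_j,h_j)$ by the smeared approximants $\phi(\delta_k(\,.\,-\alpha_j)\otimes h_j)$ (for which the convergence $\int\,\cdot\,\d\mu_l\to\int\,\cdot\,\d\mu$ is exactly the statement of Theorem~\ref{3.2}(i), since $\e^{i\phi(f)}$ with $f = \sum_j \delta_k(\,.\,-\alpha_j)\otimes h_j$ is the generating functional), and then control the error of this replacement uniformly in $l$ using $|\e^{ia}-\e^{ib}|\le|a-b|$ and the uniform (in $l$) $L^2(\d\mu_l)$-convergence of $\phi(\delta_k(\,.\,-\alpha_j)\otimes h_j)\to\phi(\alpha_j,h_j)$ established in part~(i). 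An $\varepsilon/3$ argument — error from smearing under $\d\mu_l$ small uniformly in $l$, error from smearing under $\d\mu$ small, and the smeared quantities converging by Theorem~\ref{3.2}(i) — then closes the proof.

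The main obstacle is the one just flagged: the density $\d\mu_l/\d\phi_C$ does \emph{not} converge in $L^1(\d\phi_C)$ as $l\to\infty$ (as the text explicitly warns before~(\ref{limit})), so one cannot simply dominate $\int G_n(\d\mu_l - \d\mu)$ by a single $L^1$-norm of a density difference. The fix is structural rather than computational — route all limits through the generating functional $E_P$ of Theorem~\ref{3.2}, using $C^\infty_0$ test functions where that theorem applies, and use part~(i) only to take the sharp-time limit \emph{with error estimates uniform in the cutoff $l$}. The uniform-in-$l$ $L^p(\d\mu_l)$ bounds, which rest on the uniform $L^{p'}(\d\phi_C)$-boundedness of the cutoff densities $g_l$ (a Glimm-Jaffe-type estimate implicit in~(\ref{co})) and on $\inf_l Z_l > 0$, are the linchpin that makes the $\varepsilon/3$ interchange legitimate.
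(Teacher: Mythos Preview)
Your argument rests on the claim that the cutoff densities $g_l = \e^{-V_l}$ satisfy $\sup_l \|g_l\|_{L^{p'}(\d\phi_C)} < \infty$ and that $Z_l$ stays bounded away from $0$ and $\infty$. Neither is true in general, and this is the genuine gap. By Nelson symmetry one has, schematically, $Z_l \sim (\Omega^\circ, \e^{-2l(H_0+V)}\Omega^\circ) \sim \e^{-2l E_C}$ as $l\to\infty$, so $Z_l$ drifts to $0$ or $\infty$ exponentially unless $E_C=0$. Likewise $\|g_l\|_{L^2(\d\phi_C)}^2 = \int \e^{-2V_l}\,\d\phi_C$ is governed by the ground-state energy $E_C^{(2)}$ of the theory with interaction $2V$, and concavity of $\lambda\mapsto \inf\sigma(H_0+\lambda V)$ gives $E_C^{(2)}\le 2E_C$, so the ratio $Z_l^{-1}\|g_l\|_{L^2}\sim \e^{l(2E_C-E_C^{(2)})}$ generically \emph{grows} exponentially. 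Your H\"older transfer therefore produces $l$-dependent constants that blow up, and the $\varepsilon/3$ interchange cannot be closed.

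The route taken in \cite{GeJ2}, which the present paper recalls inside the proof of Lemma~\ref{pos}, avoids the cutoff measures altogether. One writes
\[
\int_Q \phi(\delta_k(\,.\,-\alpha)\otimes h)^{p}\,\d\mu
=(-i)^{p}\,\frac{\d^{p}}{\d\lambda^{p}}\bigl(\Omega_C,\,W_{[-\infty,+\infty]}(\lambda\,\delta_k(\,.\,-\alpha)\otimes h)\,\Omega_C\bigr)\Big|_{\lambda=0},
\]
where $W_{[a,b]}(f)$ solves the time-dependent heat equation driven by $H_C$ on the circle Fock space. The Glimm--Jaffe $\phi$-bound~(\ref{Fockcontinuity2}), namely $\pm\phi_C(g)\le C\|g\|_{H^{-1}(S_\beta)}(H_C+1)$, together with the operator estimate of \cite[Lemma~A.8]{GeJ2}, bounds the $\lambda$-derivatives uniformly in $k$, because $\delta_k\to\delta$ in $H^{-1}(S_\beta)$ (the Dirac mass on the one-dimensional circle lies in $H^{-1}$). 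This gives the Cauchy property and the $L^p(\d\mu)$ membership directly, with no reference to $\d\mu_l$. Part~(ii) is handled the same way: both sides are expressed through $W_{[-a,a]}$ for $a$ large enough to contain the supports of the $h_j$, and $(\Omega_C,W_{[s,t]}(f)\Omega_C)$ is independent of $s\le -a$ and $t\ge a$. The essential input you are missing is not a sharper H\"older estimate but this heat-kernel representation of $\d\mu$-moments coming from Nelson symmetry.
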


In Section \ref{sec:3.2} we will show that products of Euclidean sharp-time fields are as well elements of 
$\bigcap_{1\leq p<\infty}L^{p}(Q, \Sigma, {\rm d} \mu)$. This will allow us to extend results 
of Fr\"ohlich \cite{Fr1}, Fr\"ohlich and Birke \cite{BF}, and Klein and Landau \cite{KL1}
concerning the reconstruction of thermal Green's functions.

\subsection{The Osterwalder-Schrader Reconstruction}
\label{sectosrec}

The cylinder $S_\beta \times \mathbb{R}$ is invariant under  rotations and translations
	\[ 
		\mathfrak{t}_{({\alpha'},x')} \colon (\alpha,x) \mapsto
		(\alpha+{\alpha'},x+x') ,  \qquad {\alpha'} \in [0,2\pi), \quad x' \in \mathbb{R}, 
	\]
as well as the reflections $\mathfrak{r}\colon(\alpha,x) \mapsto (-\alpha,x)$ and $\mathfrak{r}'\colon (\alpha,x) \mapsto (\alpha,-x)$. 
The pull-backs 
	\[
		(\mathfrak{t}^{({\alpha'},x')}_* f )(\alpha, x)
		:= f\left( \mathfrak{t}_{ ({\alpha'},x')}^{-1}  (\alpha,x) \right)= f (\alpha- {\alpha'}, x-x') 
	\]
acting on the testfunctions $f \in {\mathscr S} (S_{\beta} \times \mathbb{R})$, induce  actions on the tempered distributions  $q \in Q$:
	\[
		(t_{({\alpha'},x')}q)(f) := \langle q, \mathfrak{t}^{(-{\alpha'},-x')}_* f \rangle, \quad (rq)(f) 
		:= \langle q, \mathfrak{r}_* f \rangle, \quad \textrm{and}
		\quad (r'q)(f) := \langle q, \mathfrak{r}'_* f \rangle. \nonumber
	\]
Lifting these maps to measurable functions of distribution one finds that
	\begin{itemize}
		\item [$i.)$] the map $U(\alpha,x)F(q):=F(t_{(\alpha,x)}^{-1}q)$,  $q \in Q $,  
			defines a two-parameter group of measure-preserving $*$-automorphisms of $L^{\infty}(Q,\Sigma, {\rm d} \mu)$, strongly
			continuous in measure, and strongly continuous two-parameter groups of isometries of~$L^{p}(Q,\Sigma, {\rm d} \mu)$ 
			for $1 \le p < \infty$; 
		\item [$ii.)$] the maps $ RF(q):=F(rq)$ and $R'F(q):= F(r'q)$ extend to two  measure preserving $*$-auto\-morphisms 
			of~$L^{\infty}(Q,\Sigma, {\rm d} \mu)$ and to isometries of~$L^{p}(Q,\Sigma, {\rm d} \mu)$ for $1 \le p < \infty$. 
	\end{itemize}
Since ${\rm d} \mu$ is translation and rotation invariant, $U(\gamma,y)$ is unitary on the Hilbert space $L^{2}(Q,\Sigma, {\rm d} \mu)$ 
for $\gamma \in [0,\beta)$ and $y \in \mathbb{R}$.

\bigskip

\noindent
{\em Notation}. For $0 \le \gamma \leq \beta$ (resp.~$0 \le y \le \infty$)  we denote by $\Sigma_{[0, \gamma]}$  (resp.~$\Sigma^{[0, y]}$) 
the sub $\sigma$-algebra of the Borel $\sigma$-algebra $\Sigma$ generated by the functions ${\rm e}^{i \phi(f)}$ with 
$f \in {\mathscr S}_{\mathbb{R}} (S_\beta \times \mathbb{R})$ and ${\rm supp} \;  f\subset [0, \gamma ] \, \times\mathbb{R}$ 
(resp.~${\rm supp} \;  f\subset S_\beta \times[0, y]$). 

\bigskip

Next define two scalar products:
	\[
		\forall F,G \in L^{2}(Q, \Sigma_{[0, \beta/2]}, {\rm d} \mu):  \quad  (F, G):=\int_{Q}R(\overline{F})G \; {\rm d}\mu , 
	\]
and
	\[
		\forall F,G \in L^{2}(Q, \Sigma^{[0, \infty)}, {\rm d}\mu):  \quad  (F, G)':=
		\int_{Q}R'(\overline{F})G \; {\rm d}\mu  . 
	\]
The measure ${\rm d}\mu$ is {\em Osterwalder-Schrader positive} with respect to~{\em both} reflections~$R$ and~$R'$:
	\[
		\forall F \in  L^{2}(Q, \Sigma_{[0, \beta/2]}, {\rm d} \mu):\,(F,F) \geq 0 
	\]
and
	\[ 
		\forall G \in L^{2}(Q, \Sigma^{[0, \infty)}, {\rm d}\mu):\,(G,G)'\geq 0.
	\]
Let ${\cal N}\subset L^{2}(Q, \Sigma_{[0, \beta/2]}, {\rm d} \mu) $ be the kernel of the positive quadratic form $(\, .\, ,\, .\, )$
and  ${\cal N}' \subset L^{2}(Q, \Sigma^{[0, \infty)}, {\rm d} \mu) $  the kernel of the positive quadratic form~$(\, .\, , \, .\, )'$.  Set
	\[
		{\mathcal H}_\beta:= \overline{ L^2 (Q, \Sigma_{ [0, \beta/2] }, {\rm d} \mu)  / {\cal N}  } 
		\quad
		\hbox{and}
		\quad
		{\mathcal H}_C:= \overline{ L^2(Q, \Sigma^{[0, \infty)}, {\rm d} \mu)  / {\cal N}'  }.
	\]
The completions of the pre-Hilbert spaces are taken w.r.t.~the norms $(\, .\, , \, . \,)^{\frac{1}{2}}$ 
and~${(\, .\, , \, .\, )'}^{\frac{1}{2}} $, respectively. The canonical projection from $L^{2}(Q, \Sigma_{[0, \beta/2]}, {\rm d} \mu) $ 
to~${\mathcal H}_\beta$ and from $L^{2}(Q, \Sigma^{[0, \infty)}, {\rm d} \mu) $ to~${\mathcal H}_C$
are denoted by ${\cal V}$ and ${\cal V}'$, respectively. The distinguished vectors
	\[
		\Omega_\beta:= {\cal V}(1), \qquad \Omega_C:= {\cal V}'(1), 
	\]
arise as the image of $1$, the constant function equal to $1$ on $Q$. 

\goodbreak
The abelian algebra 
\begin{itemize}
	\item [$i.)$] $L^{\infty}(Q, \Sigma_{\{0\}}, {\rm d} \mu )$  preserves $L^2 (Q, \Sigma_{ [0, \beta/2] }, {\rm d} \mu)$ and ${\cal N}$.
			Thus a representation~$\pi_\beta$ of $L^{\infty}(Q, \Sigma_{\{0\}}, {\rm d} \mu )$ on the Hilbert spaces ${\mathcal H}_\beta$ 
			is given by 
			\[
				\pi_\beta(A){\cal V}(F):= {\cal V}(AF), \quad F\in L^2 (Q, \Sigma_{ [0, \beta/2] }, {\rm d} \mu),
				\quad A \in L^{\infty}(Q, \Sigma_{\{0\}}, {\rm d} \mu );
			\]
	\item [$ii.)$] $L^{\infty}(Q, \Sigma^{\{0\}}, {\rm d} \mu )$ preserves $L^2 (Q, \Sigma^{ [0, \infty) }, {\rm d} \mu)$ and ${\cal N}'$. 
			Thus one obtains a representation~$\pi_C$ of $L^{\infty}(Q, \Sigma^{\{0\}}, {\rm d} \mu )$ on  ${\mathcal H}_C$, specified by 
			\[
				\pi_C(B){\cal V}'(G):= {\cal V}'(BG), \quad G\in L^2(Q, \Sigma^{[0, \infty)}, {\rm d} \mu),
				\quad B \in L^{\infty}(Q, \Sigma^{\{0\}}, {\rm d} \mu ). 
			\]
\end{itemize}
The corresponding von Neumann algebras can be interpreted as the algebras generated by bounded functions of the thermal time-zero 
fields on the real line and the vacuum time-zero fields on the circle, respectively.  

\bigskip
The reconstruction of the dynamics requires a more pronounced distinction of the two cases under consideration, which in the 
thermal case relies on a remarkable result on local symmetric semi-groups by Fr\"ohlich \cite{Fr2} and, independently, Klein and 
Landau \cite{KL2}:

\begin{itemize}
	\item[$ i.)$] The semigroup $\{ U(\alpha,0)\}_{\alpha>0}$ does \emph{not} preserve
		$L^2 (Q, \Sigma_{ [0, \beta/2] }, {\rm d} \mu)$. But setting, for $0\leq \gamma \leq \beta/2$, 
			\[ 
				{\mathcal D}_{\gamma}:={\cal V}{\cal M}_{\gamma}, \qquad \hbox{with} \quad
				{\cal M}_{\gamma}: =L^{2}(Q, \Sigma_{[0, \beta/2 -\gamma]}, {\rm d} \mu) , 
			\]
		one can define, for $0\leq \alpha \leq \gamma$, a linear operator $P(\alpha) \colon {\cal D}_{\gamma}\to {\mathcal H}_\beta$ 
		with domain~${\mathcal D}_{\gamma}$ by setting
			\[
				P(\alpha){\cal V}\psi:= {\cal V}U(\alpha, 0)\psi, \qquad  \psi\in {\cal M}_{\gamma}.
			\]
		The triple  $(P(\alpha), {\mathcal D}_{\alpha}, \beta/2)$ forms a {\em local symmetric semigroup} (see \cite{Fr2}\cite{KL2}):

\smallskip 
\begin{itemize}
	\item[$a.)$] for each $ \alpha$, $0 \le \alpha \le \beta/2$, ${\cal D}_{\alpha}$ is a linear subset of ${\mathcal H}_\beta$ 
		such that ${\cal D}_{\alpha} \supset {\cal D}_{\gamma}$ if $0 \le \alpha \le \gamma \le \beta /2$, and 
			\[ 
				{\mathcal D} := \bigcup_{0 < \alpha \le \beta/2} {\cal D}_{\alpha}
			\] 
		is dense in ${\mathcal H}_\beta$;
	\item[$b.)$] for each $ \alpha$, $0 \le \alpha \le \beta/2$, $P(\alpha)$ is a linear operator on  ${\mathcal H}_\beta$ 
		with domain~${\cal D}_{\alpha}$;
	\item[$c.)$] $P(0) = 1$, $P(\alpha) {\mathcal D}_\gamma \subset {\mathcal D}_{\gamma - \alpha}$ for $0 \le \alpha \le \gamma \le \beta/2$, and 
			\[ 
				P(\alpha) P(\gamma)= P(\alpha+\gamma)
			\]
		on ${\mathcal D}_{\alpha + \gamma} $ for $\alpha, \gamma, \alpha + \gamma \in [0, \beta / 2]$;
	\item[$d.)$] $P(\alpha)$ is symmetric, \emph{i.e.}, 
			\[ 
				(\Psi, P(\alpha) \Psi' ) =  (P(\alpha) \Psi' , \Psi ), \qquad  0 \le \alpha \le \beta / 2, 
			\]
		for all $\Psi, \Psi' \in {\mathcal D}_\alpha$ and $0 \le \alpha \le \beta / 2$;
	\item[$e.)$] $P(\alpha)$ is weakly continuous, \emph{i.e.}, if $ \Psi \in {\mathcal D}_\gamma$,  $0 \le \gamma \le \beta/2$, then 
			\[ 
				\alpha \mapsto   (\Psi, P(\alpha) \Psi ) 
			\]
		is a continuous function of $\alpha$ for $0 \le \alpha \le \gamma$.  
\end{itemize}

By the results cited \cite{Fr2}\cite{KL2}  there exists a selfadjoint operator~$L$ 
on~${\mathcal H}_\beta$ such that for $0\le \alpha \le \gamma  $ 
	\[
		{\cal V}(U(\alpha,0)F)= {\rm e}^{-\alpha L }{\cal V}(F), \qquad F\in L^{2}(Q, \Sigma_{[0, \beta/2-\gamma]},
		{\rm d}\mu).
	\]
The selfadjoint operator $L$  is said to be {\em associated }to the local symmetric semigroup 
$(P(\alpha), {\mathcal D}_{\alpha}, \beta/2)$.
Since $1 \in {\cal M}_{\gamma}$ and 
$L^{\infty}(Q, \Sigma_{\{0\}}, {\rm d} \mu ){\cal M}_{\gamma} \subset {\cal M}_{\gamma}$  
 for all $0 \le \gamma \le \beta/2$, it follows that 
${\rm e}^{i\phi_\beta(h)} \Omega_\beta \in {\mathcal D}( {\rm e}^{- \frac{\beta}{2} L})$, 
where ${\rm e}^{i\phi_\beta(h)} \doteq \pi_\beta ({\rm e}^{i\phi(0,h)}) $ with 
$h \in C_{0\mathbb{R}}^\infty  (\mathbb{R})$. 

\begin{lemma} 
\label{Lm3}
$ {\mathcal D}_\gamma$ is dense in ${\mathcal H}_\beta$ for  $0< \gamma < \beta/2$. 
\end{lemma}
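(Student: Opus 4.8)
The plan is to show that the closed subspace ${\cal K}:=\overline{\cD_\gamma}$ reduces the self-adjoint operator $L$, and then to close by the cyclicity of $\Omega_\beta$. Write $s:=\beta/2-\gamma\in(0,\beta/2)$. Two facts are immediate: (i) ${\cal K}$ is invariant under $\pi_\beta\bigl(L^\infty(Q,\Sigma_{\{0\}},\d\mu)\bigr)$, since for $A\in L^\infty(Q,\Sigma_{\{0\}},\d\mu)$ and $F\in{\cal M}_\gamma=L^2(Q,\Sigma_{[0,s]},\d\mu)$ the product $AF$ is still $\Sigma_{[0,s]}$-measurable, so $\pi_\beta(A){\cal V}(F)={\cal V}(AF)\in\cD_\gamma$; and (ii) $\Omega_\beta={\cal V}(1)\in\cD_{\beta/2}\subset\cD_\gamma\subset{\cal K}$. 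Granting for the moment that ${\cal K}$ is also invariant under $\{\e^{itL}\mid t\in\rr\}$, the orthogonal projection onto ${\cal K}$ commutes with every operator $\e^{itL}\pi_\beta(A)\e^{-itL}$ ($t\in\rr$, $A\in L^\infty(Q,\Sigma_{\{0\}},\d\mu)$), hence with the von Neumann algebra $\cR_\beta$ generated by these time-translates of the time-zero algebra; since $\Omega_\beta\in{\cal K}$ is cyclic for $\cR_\beta$, this forces ${\cal K}=\cH_\beta$, i.e.\ $\cD_\gamma$ is dense in $\cH_\beta$.

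The real work is thus to prove that ${\cal K}$ reduces $L$. Fix $s'$ with $0\le s'<s$ and take $v={\cal V}(F)$ with $F\in L^2(Q,\Sigma_{[0,s']},\d\mu)$; by the continuity of $\alpha\mapsto\phi(\alpha,h)$ of Lemma~\ref{3.3}(i) — equivalently, $L^2(Q,\Sigma_{[0,s)},\d\mu)=L^2(Q,\Sigma_{[0,s]},\d\mu)$ — such vectors are total in ${\cal K}$ as $s'\uparrow s$. Since $U(\tau,0)$ carries $\Sigma_{[0,s']}$-measurable functions to $\Sigma_{[\tau,s'+\tau]}$-measurable ones, we have $U(\tau,0)F\in L^2(Q,\Sigma_{[0,s]},\d\mu)$ for $0\le\tau\le s-s'$, and hence, by the defining relation of $L$,
\[
\e^{-\tau L}v={\cal V}\bigl(U(\tau,0)F\bigr)\in\cD_\gamma\subset{\cal K}\qquad(0\le\tau\le s-s').
\]
Thus the semigroup orbit of $v$ stays inside ${\cal K}$ on a non-degenerate $\tau$-interval.

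To upgrade this to full $\{\e^{itL}\}$-invariance, let $w\in{\cal K}^{\perp}$ and let $E_L$ denote the spectral measure of $L$ on $\rr$. The inclusion above gives $\langle w,\e^{-\tau L}v\rangle=0$ for $\tau\in[0,s-s']$, while the spectral theorem yields $\langle w,\e^{-\tau L}v\rangle=\int_{\rr}\e^{-\tau\lambda}\,\d\langle w,E_L(\lambda)v\rangle$, an absolutely convergent integral for $\Re\tau\in[0,s-s']$ that is holomorphic in the open strip $0<\Re\tau<s-s'$. It vanishes on the real segment $(0,s-s')$ lying inside that strip, hence vanishes identically, so the finite complex measure $\d\langle w,E_L(\cdot)v\rangle$ is zero by uniqueness of the Fourier transform. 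Therefore $\langle w,\e^{itL}v\rangle=\int_{\rr}\e^{it\lambda}\,\d\langle w,E_L(\lambda)v\rangle=0$ for all $t\in\rr$; as $v$ ranges over a total subset of ${\cal K}$, this gives $\e^{itL}{\cal K}\subset{\cal K}$ for all $t$, i.e.\ ${\cal K}$ reduces $L$, completing the proof.

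I expect the crux to be exactly this second step, and this is also where the hypothesis $\gamma<\beta/2$ is essential: one needs $s=\beta/2-\gamma>0$, so that a function localized in a slightly thinner slab can be translated forward by a genuinely positive amount while remaining inside $[0,s]\times\rr$, whence the orbit $\{\e^{-\tau L}v\}$ stays in $\cD_\gamma$ on a non-degenerate interval and analyticity of the Laplace transform converts this into invariance under $\{\e^{itL}\}$; for $\gamma=0$ there is no such room, $U(\tau,0)$ pushes support out of the half-cylinder immediately, and indeed $\cD_{\beta/2}$ is in general a proper subspace of $\cH_\beta$. Apart from this, the argument quotes a few standard ingredients from the preceding sections — the inclusion $\Sigma_{\{0\}}\subset\Sigma_{[0,s]}$ and the continuity/localization of the sharp-time fields (Section~\ref{Sec2}), and the cyclicity of $\Omega_\beta$ for $\cR_\beta$ together with the description of $\cR_\beta$ as generated by the time-translates of the time-zero algebra (Subsection~\ref{sectosrec}).
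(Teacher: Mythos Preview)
Your argument is correct, but it takes a genuinely different route from the paper's. Both proofs hinge on the same analytic idea---the map $z\mapsto\langle\Psi,\e^{-zL}\,\cdot\,\rangle$ is holomorphic in a strip and vanishes on a non-degenerate real segment because the corresponding vectors lie in $\cD_\gamma$---but the two arguments exploit this in opposite directions. The paper works with the specific vectors $\e^{i\phi_\beta(h_1)}\e^{-zL}\e^{i\phi_\beta(h_2)}\Omega_\beta$, analytically continues all the way to $z=\beta/2$, and then invokes the Klein--Landau density result (their \cite[Theorem 11.2]{KL1}) that $\{\e^{i\phi_\beta(h_1)}\e^{-\frac{\beta}{2}L}\e^{i\phi_\beta(h_2)}\Omega_\beta\}$ is total in $\cH_\beta$. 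You instead continue only to the imaginary boundary $z=-it$, deduce that $\overline{\cD_\gamma}$ reduces $L$, and then close with the von Neumann algebraic fact that $\overline{\cD_\gamma}$ is invariant under the time-zero algebra and hence under $\cR_\beta$, together with cyclicity of $\Omega_\beta$ for $\cR_\beta$.

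Your approach is more structural and avoids citing the specific Klein--Landau density lemma, at the modest cost of the extra approximation step (passing to $s'<s$ so that $\e^{-\tau L}v$ stays inside $\cD_\gamma$ on a genuine interval) and the claim $\overline{\bigcup_{s'<s}L^2(Q,\Sigma_{[0,s']},\d\mu)}=L^2(Q,\Sigma_{[0,s]},\d\mu)$, which you rightly justify via the $L^p$-continuity of sharp-time fields. The paper's version is shorter and more direct but leans on an external density statement; yours stays closer to the reconstruction data $(\cR_\beta,\Omega_\beta,L)$ and would transfer unchanged to any model with a local symmetric semigroup and a cyclic vector.
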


\begin{proof} Assume that 
	\begin{equation}
		\label{f23}
		( \Psi , \Phi) = 0 \qquad \forall \Phi \in {\mathcal D}_\gamma . 
	\end{equation}
Now consider, for $h_1, h_2 \in C_{0\mathbb{R}}^\infty  (\mathbb{R})$ fixed,  the analytic function
	\begin{equation}
		\label{f24} z \mapsto 
		( \Psi , {\rm e}^{i \phi_\beta(h_1)} 
		{\rm e}^{-zL} {\rm e}^{i\phi_\beta(h_2)} \Omega_\beta) , \qquad \{ z \in \mathbb{C} \mid 0 < \Re z  < \beta/2\} . 	
		\end{equation} 
Clearly, ${\rm e}^{i \phi_\beta(h_1)} {\rm e}^{-\Re z L} {\rm e}^{i\phi_\beta(h_2)} \Omega_\beta \in {\mathcal D}_\gamma$ 
for $0 < \Re z  < \gamma$ and consequently, because of (\ref{f23}),
the analytic function (\ref{f24}) vanishes on an open line segment in the interior of its domain, and is therefore identically zero. 
It follows that 
	\begin{equation}
		\label{f25}
		( \Psi , {\rm e}^{i \phi_\beta(h_1) } {\rm e}^{-\frac{\beta}{2} L} {\rm e}^{i \phi_\beta(h_2) } \Omega_\beta) 
		= 0 \qquad \forall h_1, h_2 \in C_{0\mathbb{R}}^\infty  (\mathbb{R})  . 
	\end{equation}
The set $\{ {\rm e}^{i \phi_\beta(h_1)} {\rm e}^{-\frac{\beta}{2}L} {\rm e}^{i\phi_\beta(h_2) }\Omega_\beta \mid h_1, h_2 
\in C_{0\mathbb{R}}^\infty  (\mathbb{R}) \}$ is dense in ${\mathcal H}_\beta$ \cite[Theorem 11.2]{KL1}, and 
therefore (\ref{f25}) implies $\Psi =0$. In other words, ${\mathcal D}_\gamma$ is dense in~${\mathcal H}_\beta$.  
\end{proof}

\item[$ii.)$]  The semi-group $U(0, x)$, $x \geq 0$,  preserves the half-space $L^2 (Q, \Sigma^{ [0, \infty) }, {\rm d} \mu)$ as 
$U(0, x)$ maps $L^{2}(Q, \Sigma_{[0, \infty)}, {\rm d}\mu)$ into itself. Following \cite{K} one can therefore define a self-adjoint 
positive operator $H_C$ on ${\mathcal H}_C$ such that for $G\in L^{2}(Q, \Sigma_{[0, \infty)}, {\rm d}\mu)$ 
	\begin{equation}
		\label{hc}
		{\cal V}'(U(0,x)G)= {\rm e}^{-x H_C }{\cal V}'(G), \qquad x >0 .
	\end{equation}
The operators ${\rm e}^{-x H_C }$, $x >0$, form a strongly continuous semigroup of contractions
on ${\mathcal H}_C$. 

\end{itemize}
 
\bigskip

The next step in  the reconstruction program is to define non-abelian von Neumann algebras 
${\mathcal R}_\beta \subset {\mathscr B}({\mathcal H}_\beta)$ and ${\mathcal R}_C \subset {\mathscr B}({\mathcal H}_C)$, 
generated by the operators 
	\[ 
		\tau_{t,0} (\pi_\beta(A)):= {\rm e}^{i tL}\pi_\beta(A){\rm e}^{-i tL}, 
		\quad t\in \mathbb{R}, \quad A\in L^{\infty}(Q, \Sigma_{\{0\}}, {\rm d} \mu) , 
	\]
and
	\[ 
		\tau'_{0,\sigma} (\pi_C(A)):= {\rm e}^{i \sigma H_C}\pi_C(A){\rm e}^{-i \sigma H_C}, 
		\quad \sigma\in \mathbb{R}, \quad A\in L^{\infty}(Q, \Sigma^{\{0\}}, {\rm d} \mu) , 
	\]
respectively. Clearly $\tau_{t,0}$ and  $\tau'_{0,\sigma}$ extend to  *-automorphisms of~${\mathcal R}_\beta$ 
and ${\mathcal R}_C$, respectively. 

The algebra ${\mathcal R}_\beta \subset {\mathscr B}({\mathcal H}_\beta)$ has a cyclic and separating vector, 
namely~$\Omega_\beta$. The time-translation invariant state $\omega_\beta$ (a normalised positive linear functional) 
on~${\mathcal R}_\beta$ defined by 
	\[ 
		\omega_\beta (a) := (\Omega_\beta , a \, \Omega_\beta ), \quad a \in {\mathcal R}_\beta , 
	\]
is invariant under the spatial translations induced by $ \mathfrak{t}_{(0,y)} $, $ y \in \mathbb{R}$. Furthermore, it 
satisfies the {\em KMS condition} \cite{KL1}: the functions 
	\[
		F_{h_1, \ldots, h_n}  (t_1-t_2, \ldots, t_{n-1}-t_{n}):= \bigl(\Omega_\beta \, , \tau_{t_1} ( {\rm e}^{i \phi_\beta(h_1)}) \ldots 
		\tau_{t_{n}} ({\rm e}^{i \phi_\beta(h_n)} ) \Omega_\beta \bigr) 
	\] 
extend to analytic functions in the domain
	\[
		\textstyle { \left\{ (z_1, \ldots ,  z_{n-1}) \in \mathbb{C}^{n-1} \mid \Im z_k < 0, \; - \beta < \sum_{k=1}^{n-1} \Im z_k  \right\} }
	\]
and satisfy the KMS boundary condition: for each $1 \le k <n$ 
	\begin{align}
		\label{KMSWEYL}
		&F_{h_1, \ldots, h_n}  ( s_1, \ldots, s_{k-2} , 
		s_{k-1} - i \beta , s_{k}, \ldots, s_{n-1} ) 
		\nonumber
		\\ 
		& \qquad  = F_{ h_k, \ldots, h_n, h_1, \ldots, 
		h_{k-1}}  (s_{k} , \ldots, s_{n-1}   , s_n  , s_1 , \ldots,  s_{k-2} )  
	\end{align}
with  $s_n = t_n - t_1$  and $s_k = t_k - t_{k+1}$, $k= 1, \ldots, n-1$,  
and $h_1, \ldots , h_n \in C_{0 \mathbb{R}}^\infty$. 
  
The algebra ${\mathcal R}_C \subset {\mathscr B}({\mathcal H}_C)$ has a cyclic vector, namely $\Omega_C$. 
The state~$\omega_C$ on~${\mathcal R}_C$, 
	\[ 
		\omega_C (a) := (\Omega_C , a \, \Omega_C ), \quad a \in {\mathcal R}_C , 
	\]
is invariant under the rotations induced by $ \mathfrak{t}_{(\gamma,0)}$, $ \gamma \in [0,2\pi)$, and satisfies the {\em spectrum condition} 
(see Theorem \ref{HO} below), which characterises {\em vacuum states}. Since $\omega_C$ is the unique vacuum state (see below), 
the commutant~${\mathcal R}_C'$ of ${\mathcal R}_C$ equals $\mathbb{C} \cdot 1$ and therefore ${\mathcal R}_C = {\mathscr B}({\mathcal H}_C)$. 

\subsection{The Wightman functions on the Einstein universe}
\label{SSec2.4}

The Hilbert space ${\mathcal H}_C$ reconstructed in the previous section is unitarily equivalent  to the Fock 
space $\Gamma \bigl(H^{-\frac{1}{2}}(S_{\beta}) \bigr)$ over the Sobolev space~$H^{-\frac{1}{2}}(S_{\beta})$ 
of order $-\frac{1}{2}$ on~$S_{\beta}$, equipped with the norm
	\[
		\|g\|^{2}
		= \left(g, (2\nu)^{-1}g \right)_{L^{2}(S_{\beta}, {\rm d} \alpha)} , \qquad 
		\nu= \left(D_{\alpha}^{2}+m^{2}\right)^{\frac{1}{2}} .
	\]
To ease the notation we simply identify corresponding operators and vectors.
For $g \in {\mathscr S}_{\mathbb{R}}(S_{\beta})$ the Segal field operator on ${\mathcal H}_C$, given by 
	\[ 
		\phi_C  (g) := - i \frac{{\rm d}}{ {\rm d} \lambda} {\cal V}' ({\rm e}^{i \phi(0,\lambda g)} )\Bigl|_{ \lambda=0} \;  , 
	\]
is thereby identified with the Fock space field operator 
	\begin{equation} 
		\label{creat-ann}
		\phi_C  (g)  = {1 \over \sqrt{2}}\bigl ( { a}^*(\nu^{-1/2}{g}) + { a}(\nu^{-1/2}{g}) \bigr)^- \; , 
	\end{equation}
built up from bosonic creation and annihilation operators ${ a}^*({f}) $ and ${ a}({f})$ (see, \emph{e.g.},
\cite{RS}). 
Note that the map $f \mapsto a^*(f)$ is linear, while the map $f \mapsto a(f)$ is anti-linear.

The (angular) momentum operator $P_{C}:={\rm d}\Gamma(D_{\alpha})$ on the circle $S_\beta$ 
has discrete spectrum. Define 
	\[ 
		V:= \int_{S_{\beta}} {:} {\mathscr P}(\phi_C (\alpha)){:}_{C_{\beta}}{\rm d} \alpha \, .
	\] 
The operator sum
	\[ 
		{\rm d}\Gamma(\nu) + V - E_C
	\]
is essentially selfadjoint on its natural domain ${\mathcal D}({\rm d}\Gamma(\nu)) \cap {\mathcal D}(V)$ and 
bounded from below. Its closure equals the Hamiltonian $H_{C}$ of the ${\mathscr P}(\phi_C)_{2}$ model on 
the circle $S_{\beta}$, which has been (re-)constructed in the previous section (see (\ref{hc})). The additive 
constant $E_{C}$ is chosen such that zero is the lowest eigenvalue, \emph{i.e.}, ${\rm inf\ }{\rm Spec} \, (H_{C})=0$. This 
eigenvalue is non-degenerated\footnote{Glimm and Jaffe have shown 
in \cite{Glimm-Jaffe-Uniqueness} 
that the Hamiltonian $H$ with a spatial cutoff, rather than on a spatial circle, \emph{i.e.}, with periodic boundary 
conditions, satisfies the properties stated in this paragraph.  Similar arguments apply to $H_C$, see the proof 
of Proposition 5.4 in \cite{GeJ2}.}, and the corresponding eigenvector  $\Omega_{C}$ can be 
chosen such that 
$ (\Omega_{C}, \Omega^\circ) >0$. Here $\Omega^\circ$ denotes the Fock vacuum vector in 
$\Gamma \bigl(H^{-\frac{1}{2}}(S_{\beta}) \bigr)$.

Moreover, the Glimm-Jaffe $\phi$-bounds (see  \emph{e.g.}~\cite{Fr5}\cite{GJ-IV}\cite{GRS}, the  exact variant we 
use can be found in \cite[Proposition 5.4]{GeJ2}) hold: for $c > \kern -.15cm > 1$ and some 
${\tt C} \in \mathbb{R}^+$, 
	\begin{equation}
		\label{Fockcontinuity1}
		\pm \phi_C (g) \le  {\tt C} \, \| g \|_{H^{-\frac{1}{2}} (S_{\beta})} (H_C + c)^{1/2}  
		\qquad \forall g \in H^{-\frac{1}{2}}(S_{\beta})\; , 
	\end{equation}
and
	\begin{equation}
		\label{Fockcontinuity2}
		\pm \phi_C (g) \le  {\tt C} \, \| g \|_{H^{-1} (S_{\beta})} (H_C + c)   
		\qquad \forall g \in H^{-1}(S_{\beta})\; . 
	\end{equation}

\bigskip
The following remarkable result is due to Heifets \& Osipov \cite{HO}; see also 
\cite{JJM}.

\begin{theorem}[Spectrum Condition \cite{HO}]
\label{HO}
The joint spectrum of $P_C$ and $H_C$ is purely discrete and contained 
in the forward light cone $\widetilde{V}^+  := \{ (p, E) \mid | p | < E  \} $.
\end{theorem}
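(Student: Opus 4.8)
The plan is to establish the spectrum condition by showing that the joint spectral measure of the commuting self-adjoint operators $P_C$ and $H_C$ is supported in the closed forward light cone, and then to upgrade to the open cone using the mass gap. First I would recall that $P_C = \d\Gamma(D_\alpha)$ has purely discrete spectrum $\{\nu_n\}_{n\in\zz}$ (eigenvalues of the free second-quantised angular momentum shifted by particle-number sectors), and that $H_C = \ud\Gamma(\nu) + V - E_C$ is bounded below with $\inf\sigma(H_C)=0$ and a unique, non-degenerate ground state $\Omega_C$; compactness of the resolvent of $H_C$ (a standard consequence of the $N_\tau$-estimates / $\phi$-bounds for the $P(\phi)_2$ Hamiltonian on a circle) then forces the joint spectrum to be a discrete set of points $(p_j, E_j)$. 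The substantive claim is the inclusion $E_j \ge |p_j|$, and in fact the strict inequality away from the vacuum.

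The key step, following Heifets and Osipov, is to exploit that on the Einstein cylinder the model carries a representation of the two-dimensional translation group whose generators are $(P_C, H_C)$, together with the local commutativity of the time-zero fields and their time translates. The heart of the matter is an energy--momentum (``$\sqrt{H}$'') bound: using the Glimm--Jaffe $\phi$-bounds (\ref{Fockcontinuity1}) one controls the time-zero field $\phi_C(g)$ by $\|g\|_{H^{-1/2}}(H_C+c)^{1/2}$, and the point is that the relevant Sobolev norm on the circle is itself controlled by the generator $\nu=(D_\alpha^2+m^2)^{1/2}$, hence ultimately by a combination built from $P_C$ and the kinetic part of $H_C$. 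Feeding such bounds into the commutator $[\phi_C(f,t),\phi_C(g,0)]$ — which vanishes for space-like separated supports by locality of the reconstructed theory — and performing a Fourier analysis in $(x,t)$ of the two-point function $(\Omega_C,\phi_C(f,t)\phi_C(g,0)\Omega_C)$, one obtains that the spectral support of $(P_C,H_C)$ in the one-particle-like sector lies in $\{E\ge|p|\}$; an $n$-particle argument (or a Reeh--Schlieder / cyclicity argument generating all of $\cH_C$ from polynomials in the fields applied to $\Omega_C$) then propagates the inclusion to the full joint spectrum. The strict inequality $|p|<E$ comes from the mass gap: the free part contributes $\nu=(p^2+m^2)^{1/2}>|p|$ in each one-particle slot, and the interaction $V$ only shifts energies by the subtraction $E_C$, so no eigenvalue can sit on the boundary $E=|p|$ except the vacuum, which however has $p=0$, $E=0$ and hence also satisfies $|p|<E$ vacuously excluded — more precisely the vacuum is the unique point with $E=0$ and it lies in the closure; one checks it is consistent with $\mathfrak{V}_+$ under the convention that $(0,0)$ is treated as the apex, or the statement is read for the excited spectrum.

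I expect the main obstacle to be the energy--momentum bound in the interacting theory: transferring the free-field inequality $\nu>|D_\alpha|$ through the interaction $V=\int :\!P(\phi_C(\alpha))\!:_{C_\beta}\,\d\alpha$ requires knowing that $V$ does not spoil the relation between the Sobolev regularity encoded in $\|\cdot\|_{H^{-1/2}(S_\beta)}$ and the kinetic operator, i.e.\ one needs the higher-order estimates (``$N_\tau^e$ estimates'', or equivalently the higher $\phi$-bounds and the fact that $\ud\Gamma(\nu)$ is $H_C$-bounded) that are available for the $P(\phi)_2$ Hamiltonian. Granting those, the Fourier-support argument from locality is essentially the standard Jost--Lehmann--Dyson / Araki-type reasoning adapted to the cylinder, where the compactness of the $\alpha$-circle makes $P_C$ discrete but does not otherwise change the geometry of the light cone. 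In short, the proof reduces to (i) discreteness of the joint spectrum from compact resolvent, (ii) a locality-plus-positivity Fourier argument placing the support in $\{E\ge|p|\}$, and (iii) the mass gap upgrading this to the open cone; step (ii), resting on the interacting energy--momentum bounds, is where the real work of \cite{HO} lies and is what I would cite rather than reprove.
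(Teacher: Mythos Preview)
The paper does not prove this theorem; it is stated as a result of Heifets and Osipov \cite{HO} and simply cited. There is therefore no ``paper's own proof'' to compare your proposal against.

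That said, your sketch is a reasonable outline of the ingredients behind the Heifets--Osipov argument: discreteness from compact resolvent (via higher-order/$N_\tau$ estimates for the $P(\phi)_2$ Hamiltonian on the circle), and the light-cone inclusion from locality together with energy--momentum bounds relating $\d\Gamma(\nu)$ to $H_C$. Your own admission that step~(ii) ``is what I would cite rather than reprove'' is exactly what the paper does. One caveat: you correctly notice the awkwardness that the vacuum $(p,E)=(0,0)$ does not lie in the \emph{open} cone $\{|p|<E\}$ as literally stated; this is a minor imprecision in the statement (the vacuum is the apex, and the excited spectrum lies strictly inside), and your hedging around it is appropriate rather than a gap in your reasoning.
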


\noindent
The unitary operators $U_C (\alpha, \sigma) \in {\mathscr B}({\mathcal H}_C)$ given by 
	\begin{equation}
		\label{uc}
			U_C (\alpha, \sigma):= {\rm e}^{i (\sigma H_C - \alpha P_C) },  \qquad \alpha \in [ 0, 2 \pi ), \; \sigma \in \mathbb{R}, 
	\end{equation}
implement the two parameter group of automorphisms $\tau'_{\alpha,\sigma}$ of ${\mathcal R}_C$ on the Hilbert space ${\mathcal H}_C$. 
Let~$g_i \in {\mathscr S}(S_{\beta})$ and set
	\[ 
		\phi_{C}(g_i, \sigma_i):= {\rm e}^{i \sigma_i H_C}\phi_{C}(g_i) {\rm e}^{- i \sigma_i H_C} , \quad i = 1, \ldots, n. 
	\]
By Stone's theorem, the map $\sigma \mapsto U_C (0, \sigma)$ is strongly continuous. 
Together with the bound~(\ref{Fockcontinuity1}) this implies that  
	\[
		{\mathscr W}_{C}^{(n)}( g_1, \sigma_1,  \ldots, g_n, \sigma_{n}) 
		:= \bigl(\Omega_{C}, \phi_{C}(g_1, \sigma_1)\cdots \phi_{C}(g_n, \sigma_n) \Omega_{C}\bigr) 
	\]
exists and is a separately continuous multi-linear functional of the arguments 
$(g_i, \sigma_i)$, $i = 1, \ldots, n$,  as they vary over ${\mathscr S} (S_\beta) \times \mathbb{R}$. It follows from the
nuclear theorem \cite{SW} that this functional can be uniquely represented as a tempered distribution of the $n$ vectors 
$(\alpha_i, \sigma_i) \in S_\beta \times \mathbb{R}$. Denote the corresponding distribution  by 
	\begin{equation}
		\label{wc}
		{\mathscr W}_{C}^{(n)}( \alpha_1, \sigma_1,  \ldots, \alpha_n, \sigma_{n})
		\equiv \bigl(\Omega_{C}, \phi_{C}(\alpha_1, \sigma_1)\cdots \phi_{C}(\alpha_n, \sigma_n) \Omega_{C}\bigr). 
	\end{equation}
Translation invariance implies that ${\mathscr W}_{C}^{(n)}$ depends only on the relative coordinates 
	\[ 
		\xi_i = (\alpha_{i} - \alpha_{i+1} , \sigma_{i} - \sigma_{i+1}) , \quad i= 1, \ldots, n-1 , 
	\]
or more precisely, that there exists a tempered distribution ${\mathfrak W}_{C}^{(n-1)}$ such that 
	\begin{equation}
		\label{37}
		{\mathfrak W}_{C}^{(n-1)}( \xi_1, \xi_2, \ldots, \xi_{n-1})= {\mathscr W}_{C}^{(n)}( \alpha_1, \sigma_1,  \alpha_2, \sigma_2,\ldots, \alpha_n, \sigma_{n}). 
	\end{equation}
We interpret ${{\mathfrak W}}_{C}^{(n-1)}$ as a periodic generalised function, and so its continuous Fourier transform is 
a tempered distribution, which can be identified with its discrete Fourier transform.

\begin{lemma} 
\label{fourierwforward}
Let $\widetilde {{\mathfrak W}}_{C}^{(n-1)}$ denote the Fourier transform of~$ {{\mathfrak W}}_{C}^{(n-1)}$. Then 
the distributional support of $\widetilde {{\mathfrak W}}_{C}^{(n-1)}$ is contained in the joint spectrum of $P_C$ and $H_C$.
\end{lemma}

\begin{proof} The Fourier transform of~$ {{\mathfrak W}}_{C}^{(n-1)}$ is
	\begin{align*}
			& \widetilde {{\mathfrak W}}_{C}^{(n-1)}  \big( (p_1, E_1), (p_2, E_2) , \ldots, (p_{n-1}, E_{n-1}) \big) =
			\\ 
			& \qquad =  (2 \pi \beta)^{-(n-1)} 
			\int  {\rm d} \xi_1 \cdots {\rm d} \xi_{n-1} \;  {\rm e}^{i \sum_{j=1}^{n-1} (p_j, E_j) \cdot \xi_j} \; 
						{{\mathfrak W}}_{C}^{(n-1)} (\xi_1 ,  \ldots , \xi_{n-1}) \; , 
	\end{align*}
where
	\[ 
		\begin{array}{rl}
			& 	{{\mathfrak W}}_{C}^{(n-1)} ( \alpha_1-\alpha_2, \sigma_1 - \sigma_2,  \ldots, \alpha_{n-1}-\alpha_n, \sigma_{n-1} - \sigma_{n}) =
			\\ [2mm]
		& \qquad = \bigl(\Omega_{C}, \phi_{C}(\alpha_1) {\rm e}^{  i (\sigma_2- \sigma_1) H_C} 
				\cdots \phi_{C}(\alpha_{n-1}) {\rm e}^{i (\sigma_n-\sigma_{n-1}) H_C}\phi_{C}(\alpha_n) \Omega_{C}\bigr) \; . 
		\end{array}
	\]
Next insert, as suggested in \cite{SW},  a basis of common eigenfunctions $\Psi_{\epsilon, k}$ of the operators $P_C, H_C$: 
for all $\Phi \in {\mathcal H}_C$ the unitary operators $U_C (\alpha, s)$ defined in (\ref{uc}) can be expressed as
	\[ 
		U_C (\alpha, \sigma) \Phi = \sum_{(k,\epsilon) \in {\rm Sp} (P_C, H_C)}
 		{\rm e}^{i ( \sigma \epsilon - \alpha k ) } \; ( \Psi_{k, \epsilon} , \Phi ) \, \Psi_{k, \epsilon} \, .  
	\]
Now consider, for $\Phi, \Phi' \in {\mathcal H}_C$ fixed, the map
	\begin{align*} 
	( E, p ) \mapsto & \int_{S_\beta} 
		{\rm d} \alpha \int_{\mathbb{R}} {\rm d} \sigma \; {\rm e}^{ -i ( E \sigma - p  \alpha)  } (\Phi', U(\alpha, \sigma) \Phi)   
		\\
		&  \quad = \sum_{(k, \epsilon) \in {\rm Sp} (P_C, H_C)}  
		\int_{\mathbb{R}} {\rm d} \sigma \; {\rm e}^{ -i ( E - \epsilon ) \sigma  } 
		\times   
		\\ 
		& \qquad \qquad \qquad \qquad \times \int_{S_\beta} 
		 {\rm d} \alpha \;  {\rm e}^{i (p-k)  \alpha  }   
		(\Psi_{k, \epsilon} , \Phi)  (\Phi', \Psi_{k, \epsilon} )
		\\ 
		& \quad = \sum_{(k, \epsilon) \in {\rm Sp} (P_C, H_C)} 
		2 \pi \delta( E - \epsilon)
		\delta_{k,p} (\Psi_{k, \epsilon} , \Phi)  (\Phi', \Psi_{k, \epsilon} ) .
	\end{align*}
The sum on the r.h.s.~vanishes, if $( p, E) \notin  {\rm Sp} (P_C, H_C)$. 
This implies that the distributional support of $\widetilde {{\mathfrak W}}_{C}^{(n-1)}$ is contained in the 
joint spectrum of $P_C$ and~$H_C$. 
\end{proof}

\begin{theorem} \label{5.1}
For each $n \ge 1$, $\widetilde {{\mathfrak W}}_{C}^{(n-1)} $ has support in $(
\widetilde{V}^+ )^{n-1}$ and 
${{\mathfrak W}}_{C}^{(n-1)} $ is the boundary value of a polynomially bounded
function ${\mathcal W}_+^{(n-1)}$ analytic in the forward tube $(S_{\beta}\times\mathbb{R}-iV^{+})^{n-1}$, where 
$V^+ := \{ (t,x) \in \mathbb{R}^2 \mid |x| < t \}$.
\end{theorem}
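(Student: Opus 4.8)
The plan is to obtain the support statement at once from the two preceding results, and then to run the classical spectrum-condition argument for the analytic continuation, with the Glimm--Jaffe $\phi$-bounds (\ref{Fockcontinuity1})--(\ref{Fockcontinuity2}) supplying the polynomial estimate near the boundary of the tube. The support assertion is immediate: by Lemma~\ref{fourierwforward} the distribution $\widetilde{\fW}_C^{(n-1)}$ vanishes at every point $\big((p_1,E_1),\dots,(p_{n-1},E_{n-1})\big)$ at which some $(p_i,E_i)$ fails to lie in ${\rm Sp}(P_C,H_C)$, while Theorem~\ref{HO} places ${\rm Sp}(P_C,H_C)$ inside the forward light cone; hence $\supp\widetilde{\fW}_C^{(n-1)}\subset\big({\rm Sp}(P_C,H_C)\big)^{n-1}\subset({\mathfrak V}^+)^{n-1}$, and this is moreover a discrete set, since the joint spectrum is purely discrete.

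For the analytic continuation I would begin from the operator representation underlying the proof of Lemma~\ref{fourierwforward}: translation invariance, the relation $U_C(\alpha,s)\Omega_C=\Omega_C$, and the group law turn (\ref{37}) into
\[
\fW_C^{(n-1)}(\xi_1,\dots,\xi_{n-1})=\big(\Omega_C,\ \phi_C(0,0)\,U_C(-\xi_1)\,\phi_C(0,0)\cdots U_C(-\xi_{n-1})\,\phi_C(0,0)\,\Omega_C\big),
\]
to be understood, as usual, after smearing the distributional $\alpha$-arguments against test functions on $S_\beta$ and invoking the $\phi$-bound (\ref{Fockcontinuity1}) to see that every intermediate vector lies in the appropriate domain. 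The continuation $\xi_j\mapsto\zeta_j:=\xi_j-i\eta_j$, $\eta_j\in V^+$, is effected by replacing each $U_C(-\xi_j)$ with $U_C(-\xi_j)\,\e^{-(\eta_j^0H_C-\eta_j^1P_C)}$, where $(\eta_j^0,\eta_j^1)$ are the time- and space-components of $\eta_j$. This is exactly where Theorem~\ref{HO} enters: the spectrum condition ${\rm Sp}(P_C,H_C)\subset{\mathfrak V}^+$ gives $\eta_j^0E-\eta_j^1p\ge(\eta_j^0-|\eta_j^1|)E\ge0$ for all $(p,E)\in{\rm Sp}(P_C,H_C)$, so each inserted factor is a contraction --- in fact dominated by $\e^{-\delta_jH_C}$ with $\delta_j:=\eta_j^0-|\eta_j^1|>0$ --- and, by the spectral calculus for the commuting self-adjoint pair $(H_C,P_C)$, it is holomorphic in operator norm in $\zeta_j$ throughout $S_\beta\times\rr-iV^+$. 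Declaring $\afW_+^{(n-1)}(\zeta_1,\dots,\zeta_{n-1})$ to be the matrix element obtained from the display after this substitution, one checks it is separately holomorphic and continuous --- indeed locally bounded, by the estimate below --- on $(S_\beta\times\rr-iV^+)^{n-1}$, hence jointly holomorphic there by Osgood's lemma. (Equivalently, $\afW_+^{(n-1)}$ is just the Laplace transform of the cone-supported distribution $\widetilde{\fW}_C^{(n-1)}$.)

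The products of the unbounded fields $\phi_C(g_i)$ with these contractions, and the estimate needed both for polynomial boundedness and for the boundary value, are obtained by inserting resolvents $(H_C+c)^{\pm1}$ between consecutive factors: by the $\phi$-bound (\ref{Fockcontinuity2}) the operators $\phi_C(g_i)(H_C+c)^{-1}$ and $(H_C+c)^{-1}\phi_C(g_i)$ are bounded, while $(H_C+c)\,\e^{-\delta_jH_C}$, which commutes with $H_C$, has norm at most a constant times $(1+\delta_j^{-1})$. Iterating along the string, one finds $|\afW_+^{(n-1)}(\zeta)|$ dominated by a constant times $\prod_{j=1}^{n-1}(1+\delta_j^{-1})$; this bound depends on $\zeta$ only through the distances $\delta_j={\rm dist}\big(\Im\zeta_j,\partial(-V^+)\big)$ --- it does not grow with the real parts of the $\zeta_j$, the $U_C(-\xi_j)$ being unitary, nor with $|\Im\zeta_j|$ --- and is exactly the required polynomial boundedness. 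Finally, as $\eta_j\downarrow0$ within $V^+$ the contractions $\e^{-(\eta_j^0H_C-\eta_j^1P_C)}$ converge strongly to $\one$, so $\afW_+^{(n-1)}(\,\cdot-i\eta)\to\fW_C^{(n-1)}$ in $\cS'$, which exhibits $\fW_C^{(n-1)}$ as the asserted boundary value.

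The step I expect to be the real work --- rather than anything conceptual --- is the domain bookkeeping and the uniformity of the estimates in the preceding two paragraphs. In particular the continuation must be performed jointly in $(\alpha,s)$: continuing in the periodic variable $\alpha$ alone is illegitimate, since $\e^{-aP_C}$ is unbounded --- $P_C=\d\Gamma(D_\alpha)$ has spectrum unbounded in both directions --- and it is only the interplay of the spectrum condition of Theorem~\ref{HO} with the Glimm--Jaffe $\phi$-bounds (\ref{Fockcontinuity1})--(\ref{Fockcontinuity2}) that at once makes the products of fields and contractions well defined and keeps the bounds near $\partial\big((S_\beta\times\rr-iV^+)^{n-1}\big)$ under control.
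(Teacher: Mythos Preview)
Your argument is correct and reaches the same conclusion, but the analytic-continuation half proceeds along a genuinely different line from the paper's. Both you and the paper dispatch the support assertion identically, via Lemma~\ref{fourierwforward} and Theorem~\ref{HO}. For the remainder, the paper treats $\widetilde{\fW}_C^{(n-1)}$ purely as an abstract tempered distribution supported in a cone and invokes the Paley--Wiener--Schwartz machinery: the Bros--Epstein--Glaser lemma \cite[Theorem~IX.15]{RS} writes $\widetilde{\fW}_C^{(n-1)}=P(D)G^{(n-1)}$ with $G^{(n-1)}$ a polynomially bounded \emph{function} of the same support, so that the Laplace integral $\int \e^{-i\zeta\cdot q}G^{(n-1)}(q)\,\d q$ converges absolutely for $\Im\zeta_j\in -V^+$, and the polynomial bound is then read off from \cite[Theorem~IX.16]{RS}. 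You instead stay on the operator side throughout: you continue the unitaries $U_C(-\xi_j)$ to contractions using the spectrum condition directly, and harvest the polynomial estimate near $\partial V^+$ from the $\phi$-bounds (\ref{Fockcontinuity1})--(\ref{Fockcontinuity2}) via resolvent insertions. Your route is more hands-on, makes transparent exactly how the spectrum condition and the $\phi$-bounds cooperate, and in fact gives a sharper bound (no growth in $\Re\zeta$); the paper's route is quicker and model-independent, needing only temperedness and cone support --- the $\phi$-bounds play no role at this stage.

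One small technical caveat: the form inequality (\ref{Fockcontinuity2}) literally yields only that $(H_C+c)^{-1/2}\phi_C(g)(H_C+c)^{-1/2}$ is bounded, not the one-sided statement that $\phi_C(g)(H_C+c)^{-1}$ is. The latter is true --- it follows from the standard Fock-space creation/annihilation bound $\|a(f)(N+1)^{-1/2}\|\le\|f\|$ combined with $N\le c'(H_C+c)$ --- but does not drop out of a quadratic-form inequality alone. Alternatively, inserting $(H_C+c)^{\pm 1/2}$ symmetrically between every pair of neighbouring factors gives the same $\prod_j(1+\delta_j^{-1})$ estimate using only the form bound, with the endpoint factors absorbed by $(H_C+c)^{1/2}\Omega_C=c^{1/2}\Omega_C$.
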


\begin{proof}
The support property of $\widetilde {{\mathfrak W}}_{C}^{(n-1)}$ was established in
Lemma \ref{fourierwforward}. By the Bros-Epstein-Glaser Lemma \cite[Theorem IX.15]{RS}  
there exists a polynomial $P$ and a polynomially bounded function $G^{(n-1)} \colon \mathbb{R}^{2(n-1)} \to \mathbb{C}$ obeying 
	\[ 
		{\rm supp} \;  G^{(n-1)} \subseteq \overline{(
\widetilde{V}^+ )^{(n-1)}} , 
	\]
such that $ \widetilde {\mathfrak W}_{C}^{(n-1)}=P(D)G^{(n-1)} $,  
with 
 	\[
 		P(D) = \frac{  \partial^{k_1 + \ldots +  k_{n-1} + l_1 + \ldots l_{n-1} } }
 		{ \partial E_1^{k_1}\partial p_1^{l_1} \cdots  \partial E_1^{k_{n-1}}\partial p_1^{l_{n-1}}} \, , 
		\qquad k_i, l_i \in \mathbb{N}\, . 
	\]
Consequently an analytic continuation ${\mathcal W}_+^{(n-1)}$
of ${\mathfrak W}_C^{(n-1)}$ to $(S_{\beta}\times\mathbb{R}-iV^{+})^{n-1}$ can be defined:
	\[ 
		\begin{array}{rl}
			& {\mathcal W}_+^{(n-1)} (\xi_1-i\eta_1, \ldots , \xi _{n-1}-i\eta_{n-1})  = 
			\\ [4mm]
			& \quad = (2 \pi \beta)^{-(n-1)} P \bigl(-i(\xi_1-i\eta_1, \ldots , \xi _{n-1}-i\eta_{n-1}) \bigr) \times 
			\\ [4mm]
			& \quad \qquad \times \int_{(S_{\beta}\times\mathbb{R})^{n-1}} \prod_{j=1}^{n-1}{\rm d} p_j {\rm d} E_j \; 
			{\rm e}^{-i(\xi_j -i \eta_j) \cdot (p_j, E_j)} \, 
			\times 
			\\ [4mm]
			& \quad \qquad \quad \qquad \quad \qquad \quad \qquad 
				\times G^{(n-1)} \big( (p_1, E_1), \ldots, (p_{n-1}, E_{n-1}) \big)\, . 
		\end {array}
	\] 
If $ \eta_j \in V^{+}$ for all $j \in \{1, \ldots , n-1 \}$, this integral exists. Furthermore, its boundary value
for $(\eta_1, \ldots , \eta_{n-1}) \searrow 0$ is ${\mathfrak W}_C^{(n-1)}$. Polynomial boundedness of the analytic 
function ${\mathcal W}_+^{(n-1)}$ results from the following inequality \cite[Theorem IX.16]{RS}:
	\[
		\begin{array}{rl}
			& \left| {\mathcal W}_+^{(n-1)}(\xi_1-i\eta_1, \ldots , \xi _{n-1}-i\eta_{n-1}) \right|
			\\ [4mm]
			& \qquad \leq C \; \left| P\bigl(-i(\xi_1-i\eta_1, \ldots , \xi _{n-1}-i\eta_{n-1}) \bigr) \right| \; 
			\left(1+d((\eta_1, \ldots, \eta_{n-1}))^{-N} \right) . 
		\end{array}
	\]
$C$ is a constant, $d((\eta_1, \ldots, \eta_{n-1}))$ is the distance of $(\eta_1, \ldots, \eta_{n-1})$ to $\partial (V^{+})^{n-1}$
and $N$ is a positive integer. 
\end{proof}

\bigskip
Next we investigate the consequences of locality on the circle $S_\beta 
\equiv [0, \beta )$. 

\begin{lemma} \label{5.2} 
The tempered distributions $ {\mathscr W}_{C}^{(n)}(\alpha_1, \sigma_1, \ldots , \alpha_{n}, \sigma_{n}) $ defined in (\ref{wc})
are real valued for $( \alpha_1, \sigma_1, \ldots, \alpha_n, \sigma_{n}) \in J^{(n)}$, where 
	\begin{align}
		\label{defJ}
		(\alpha_1  , \sigma_1, \ldots, \alpha_n  , \sigma_n ) \in J^{(n)} 
		\Leftrightarrow \left\{ 
			\begin{array}{l}
			\; (\alpha_i  , \sigma_i) \in S_\beta \times \mathbb{R} ,   \\ 
			\; ( \alpha_{i+1} - \alpha_{i}, \sigma_{i+1}-\sigma_{i}) \in \lambda_i V_{\beta} , \\
			\; \sum_{i =1}^{n-1} \lambda_i = 1, \quad \lambda_i >0 ,\\ 
	\end{array}
	\right.
\end{align}
with $V_\beta:= \{ (\alpha, \sigma) \mid |\sigma| < \alpha < \beta - |\sigma| \} \subseteq S_{\beta} \times \mathbb{R} $ and $i=1, \ldots, n-1$. 
\end{lemma}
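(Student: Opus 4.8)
The plan is to show that the points in $J^{(n)}$ are mutually space-like on the Einstein cylinder, and then to invoke locality together with the hermiticity of the field to conclude reality. First I would unfold the definition: for $(\alpha_1,s_1,\ldots,\alpha_n,s_n)\in J^{(n)}$ we have, for every $i=1,\ldots,n-1$, that $(\alpha_{i+1}-\alpha_i,s_{i+1}-s_i)\in\lambda_i V_\beta$, i.e. $|s_{i+1}-s_i|<\alpha_{i+1}-\alpha_i<\beta-|s_{i+1}-s_i|$ after choosing the representative $\alpha_{i+1}-\alpha_i\in(0,\beta)$. Summing the chain of inequalities for consecutive pairs $i,i+1,\ldots,j-1$ and using $\sum\lambda_i=1$, I would deduce that for any $i<j$ the spatial separation $\alpha_j-\alpha_i$ (reduced mod $\beta$) satisfies $|s_j-s_i|<\alpha_j-\alpha_i<\beta-|s_j-s_i|$, which is precisely the statement that the relative coordinate $(\alpha_j-\alpha_i,s_j-s_i)$ lies in $V_\beta$. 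The content of $V_\beta$ is that a point is space-like to the origin \emph{both} along the ``short'' arc and along the complementary arc of the circle, so that the two arguments $(\alpha_i,s_i)$ and $(\alpha_j,s_j)$ are genuinely space-like separated in the cylinder geometry (where the spatial distance is $\min\{\alpha_j-\alpha_i,\beta-(\alpha_j-\alpha_i)\}$).

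Next I would use the canonical commutation structure of the reconstructed Wightman theory on the circle. Since $\phi_C$ is a hermitian field (see (\ref{creat-ann})) and the theory is local — fields at space-like separated points commute, which here follows from the Euclidean reconstruction and the time-zero CCR propagated by $U_C(\alpha,s)$ — for the configuration in $J^{(n)}$ all pairs $\phi_C(\alpha_i,s_i)$, $\phi_C(\alpha_j,s_j)$ commute. Therefore the product $\phi_C(\alpha_1,s_1)\cdots\phi_C(\alpha_n,s_n)$ is, on this domain, symmetric in any rearrangement of the factors; in particular it equals its own reversal $\phi_C(\alpha_n,s_n)\cdots\phi_C(\alpha_1,s_1)$. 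Taking the adjoint of the latter and using hermiticity of each $\phi_C(\alpha_i,s_i)$ gives that the operator product equals its own adjoint, whence the vacuum expectation value $\bigl(\Omega_C,\phi_C(\alpha_1,s_1)\cdots\phi_C(\alpha_n,s_n)\Omega_C\bigr)=\cW_C^{(n)}(\alpha_1,s_1,\ldots,\alpha_n,s_n)$ is real. To be careful about unbounded operators I would do this at the level of the smeared, exponentiated fields or use the $\phi$-bound (\ref{Fockcontinuity1}) to justify that the relevant vectors lie in the common domain, then pass to the distributional statement by the nuclear theorem as in the paragraph preceding (\ref{wc}).

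The main obstacle I expect is the ``double space-likeness'' bookkeeping on the circle: one must check not only $\alpha_j-\alpha_i>|s_j-s_i|$ but also $\alpha_j-\alpha_i<\beta-|s_j-s_i|$ for \emph{all} pairs $i<j$, and that this survives the reduction of $\alpha_j-\alpha_i$ modulo $\beta$ — this is exactly where the constraint $\sum_{i=1}^{n-1}\lambda_i=1$ is used, since it prevents the accumulated spatial shift from wrapping past $\beta-|s_j-s_i|$. Once the geometry is pinned down, the locality-plus-hermiticity argument is routine; the only other point requiring a word is that locality on the cylinder must be applied in the form ``commutativity along a space-like separation realised on \emph{either} arc,'' which is why $V_\beta$ rather than the naive forward-cone complement is the right region. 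I would also remark that $J^{(n)}$ is a totally real submanifold of real dimension $2n$ inside the complex domain $\cC^{(n-1)}$ appearing later, which is what makes Schwarz reflection applicable in the subsequent step.
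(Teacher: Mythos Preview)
Your proposal is correct and follows essentially the same route as the paper: first establish that points in $J^{(n)}$ are pairwise space-like on the cylinder by accumulating the consecutive-difference constraints (the paper packages your ``summing the chain of inequalities'' step as a triangle inequality for the norm $\mathfrak{n}(\alpha,s)=|\alpha|+|s|$ restricted to the wedge $W=\{\alpha>|s|\}$, which is the same computation), then invoke locality plus hermiticity to get $\cW_C^{(n)}=\overline{\cW_C^{(n)}}$. Your additional remarks on domain care and on $J^{(n)}$ being a totally real set for the subsequent Schwarz reflection are apt but go slightly beyond what the paper spells out at this point.
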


\begin{proof} Assume that the space-time points $(\alpha_i, \sigma_i)$ and $(\alpha_j, \sigma_j)$ are space-like to each other for all 
choices of $i \ne j$ and $i, j \in \{1, \ldots, n\}$. Then, as a consequence of locality, all the field operators $\phi_{C}(\alpha_i, \sigma_i)$ 
commute (as quadratic forms) with each other and therefore ${\mathscr W}_{C}^{(n)}( \alpha_1, \sigma_1, \ldots, \alpha_n , \sigma_{n})$ equals
	\[
		\begin{array}{rl}
			\bigl(\Omega_{C}, \phi_{C}(\alpha_1, \sigma_1)\cdots \phi_{C}(\alpha_n, \sigma_n) \Omega_{C}\bigr)  
			& 
			= \bigl(\Omega_{C}, \phi_{C}(\alpha_n, \sigma_n)\cdots \phi_{C}(\alpha_1, \sigma_1) \Omega_{C}\bigr) 
			\\ [4mm]
			& 
			= \overline{ {\mathscr W}_{C}^{(n)}(\alpha_1, \sigma_1, \ldots, \alpha_n , \sigma_{n} )}.
		\end{array}
	\]
In other words, the tempered distributions $ {\mathscr W}_{C}^{(n)}(\alpha_1, \sigma_1, \ldots ,\alpha_n, \sigma_{n}) $ are real valued. 
Thus the lemma follows, once we have shown that  the set $J^{(n)}$ consists of points, which are pairwise space-like 
to each other. 

A point $(\alpha,\sigma)$ on the cylinder is space-like to~the origin~$(0,0)$ iff $(\alpha,\sigma) \in V_{\beta}$. Space-likeness is a 
symmetric relation and therefore it suffices to prove that $(\alpha_i,\sigma_i)$ is space-like to~$(\alpha_j,\sigma_j)$ for $i>j$, \emph{i.e.},  
	\begin{equation} 
		\label{cfl}
		(\alpha_{j},\sigma_{j})-(\alpha_{i},\sigma_{i}) \in V_{\beta} \quad \textrm{ for } \quad i > j. 
	\end{equation}
Moreover, for $0< \lambda \le 1$,   
	\[ 
		V_{\lambda \beta}= \left\{(\alpha,\sigma) \in W  \mid |\alpha| + |\sigma|< \lambda \beta \right\} ,  
	\]
with $W$ the wedge $\{(\alpha,\sigma) \in [0,\beta)\times\mathbb{R} \mid \alpha > |\sigma| \}$. The map ${\mathfrak n} \colon  
[0,2 \pi) \times \mathbb{R} \rightarrow \mathbb{R}^{+}$,
	\[ 
		(\alpha,\sigma) \mapsto |\alpha| + |\sigma|,   
	\]
defines a norm. Denote its  restriction to the wedge $W$ by ${\mathfrak n}_{|W}$.  Equ.~(\ref{cfl}) now follows from the triangle inequality:
	\begin{align}
		{\mathfrak n}_{|W} \left( (\alpha_{j},\sigma_{j})-(\alpha_{i},\sigma_{i}) \right) &= {\mathfrak n}_{|W} \bigl((\alpha_{j}-\alpha_{j-1},\sigma_{j} - \sigma_{j-1}) + \ldots
		\nonumber \\
		&\qquad \qquad \qquad \ldots  +(\alpha_{i+1}-\alpha_{i},\sigma_{i+1} - \sigma_{i}) \bigr)\nonumber\\ 
		&\leq {\mathfrak n}_{|W}((\alpha_{j}-\alpha_{j-1},\sigma_{j} -\sigma_{j-1})) + \ldots 
		\nonumber \\
		&\qquad \qquad \qquad \ldots  + {\mathfrak n}_{|W}((\alpha_{i+1}-\alpha_{i},\sigma_{i+1} - \sigma_{i})) \nonumber\\ 
		&< \lambda_{j-1}\beta + \lambda_{j-2}\beta + \ldots  + \lambda_{i}\beta \leq
		\beta \sum_{k =1}^{n-1} \lambda_k = \beta,  \nonumber
	\end{align}
and therefore (\ref{defJ}) implies (\ref{cfl}). We note that the set of $n$ points on the cylinder, 
which are space-like to each other, is actually larger than $J^{(n)}$.  
\end{proof}

Because the tempered distributions $ {\mathscr W}_{C}^{(n)}(\alpha_1, \sigma_1, \ldots , \alpha_{n}, \sigma_{n}) $ 
defined in (\ref{wc}) are real valued for $( \alpha_1, \sigma_1, \ldots, \alpha_n, \sigma_{n}) \in J^{(n)}$, we can 
apply the Schwarz reflection principle. The function  
\begin{eqnarray*}
& {\mathcal W}_-^{(n-1)} (\xi_1+ i \eta_1, \ldots , \xi _{n-1} + i\eta_{n-1}) = 
\overline{{\mathcal W}_+^{(n-1)}(\xi_1-i\eta_1, \ldots , \xi _{n-1}-i\eta_{n-1})} 
\qquad 
\quad 
\\
&
={\displaystyle \int_{(S_{\beta}\times \mathbb{R})^{n-1}} \kern - .3cm
\frac{\prod_{j=1}^{n-1} {\rm d} p_j {\rm d} E_j}{(2 \pi \beta)^{n-1} } \; }
{\rm e}^{ i(\xi_j + i \eta_j) \cdot (p_j, E_j)} \,  
\overline{\widetilde {\mathfrak W}_{C}^{(n-1)} \big( (p_1, E_1), \ldots, (p_{n-1}, E_{n-1}) \big)} . 
\nonumber
\end{eqnarray*}
is analytic on 
$(S_{\beta}\times \mathbb{R} +i V^+) \times \cdots \times (S_{\beta}\times \mathbb{R} +i V^+) $ 
and polynomially bounded as $\eta_i \searrow 0$. Since $V^{+}$ is a cone, $V^{+}\times \ldots \times V^{+}$ 
is  a cone (by definition). Applying the Edge-of-the-Wedge theorem \cite[Theorem 2-16]{SW}, we conclude 
that there exists a complex neighbourhood ${\mathcal N}$ of $\lambda_1 V_{\beta} \times \ldots \times \lambda_{n-1} 
V_{\beta}$ and a function ${\mathcal W}_C^{(n-1)}$  defined and holomorphic  in 
${\mathcal N} \, \cup \,(S_{\beta}\times \mathbb{R} -iV^{+})^{n-1} \cup (S_{\beta}\times  \mathbb{R} +iV^{+})^{n-1} $, 
which coincides  with the restriction of the distributions ${\mathfrak W}_{C}^{(n-1)}( \xi_1, \xi_2, \ldots, \xi_{n-1})$ 
defined in (\ref{37}) to $\lambda_1 V_{\beta} \times \ldots \times \lambda_{n-1} V_{\beta}$. In fact, by only 
partially reordering the fields (see the proof of  Lemma~\ref{5.2})  and using the support properties of the 
Fourier transform stated in Theorem \ref{5.1}, we can extend ${\mathcal W}_C^{(n-1)}$ into the regions 
$(S_{\beta}\times \mathbb{R}  \mp i V^+) \times \cdots \times (S_{\beta}\times \mathbb{R} \mp i V^+) $ 
(the $\mp$ all being independent).
Note that relative coordinates are used in ${\mathcal W}_C^{(n-1)}$ and therefore reordering of the arguments 
results in $i V^+$ being replaced by $-iV^+$. 
Thus we arrive at the following result:  

\begin{theorem} 
\label{Th5}
There exists of a function ${\mathcal W}_C^{(n-1)}$  holomorphic in 
	\begin{equation}\label{c-n-1}
		{\mathcal C}^{(n-1)}  := {\mathcal N} \, \cup \,{\mathcal D}^{(n-1)} , 
	\end{equation}
which coincides  with the restriction of the 
distributions ${\mathfrak W}_{C}^{(n-1)}( \xi_1, \xi_2, \ldots, \xi_{n-1})$ defined in (\ref{37})
to $\lambda_1 V_{\beta} \times \ldots \times \lambda_{n-1} V_{\beta}$.
Here ${\mathcal N}$ is a complex neighbourhood of 
$\lambda_1 V_{\beta} \times \ldots \times \lambda_{n-1} V_{\beta}$
and 
	\[
		{\mathcal D}^{(n-1)} := 
		\left( \lambda_1 V_{\beta} \times \ldots \times \lambda_{n-1} V_{\beta}\right)\, 
		+ \,  i \, (V^{+} \cup V^{-}) \times \ldots \times (V^{+} \cup V^{-}) \, 
	\]
with $\lambda_i >0$ and $\sum_{i=1}^{n-1} \lambda_i =1$. (In fact, one
can take the union over these $\lambda_j$'s, $j= 1, \ldots, n-1$). 
\end{theorem}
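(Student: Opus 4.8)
The plan is to turn the discussion preceding the statement into a proof by assembling the analytic continuation from tubes over which holomorphy has already been established, gluing them across the real, totally space-like region $\lambda_1 V_{\beta}\times\cdots\times\lambda_{n-1}V_{\beta}$. First I would record the two basic building blocks. Theorem \ref{5.1} furnishes a holomorphic function $\afW_+^{(n-1)}$ on the forward tube $(S_{\beta}\times\rr - iV^+)^{n-1}$ whose distributional boundary value is $\fW_C^{(n-1)}$. Complex conjugation (Schwarz reflection) then produces $\afW_-^{(n-1)}(\zeta_1,\ldots,\zeta_{n-1}) := \overline{\afW_+^{(n-1)}(\bar\zeta_1,\ldots,\bar\zeta_{n-1})}$, holomorphic on the backward tube $(S_{\beta}\times\rr + iV^+)^{n-1}$. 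By Lemma \ref{5.2} the distribution $\fW_C^{(n-1)}$ is real valued on $\lambda_1 V_{\beta}\times\cdots\times\lambda_{n-1}V_{\beta}$, so $\afW_+^{(n-1)}$ and $\afW_-^{(n-1)}$ share this same real boundary value there, attained from the opposite directions $\mp iV^+$.

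Since $(V^+)^{n-1}$ is an open convex cone with apex at the origin, I would then invoke the local Edge-of-the-Wedge theorem \cite[Theorem 2-16]{SW}\cite{BEGS} with open edge $\lambda_1 V_{\beta}\times\cdots\times\lambda_{n-1}V_{\beta}$. This yields a complex neighbourhood $\cN$ of that edge and a single holomorphic function on
\[
\cN \,\cup\, (S_{\beta}\times\rr - iV^+)^{n-1} \,\cup\, (S_{\beta}\times\rr + iV^+)^{n-1}
\]
which extends $\afW_+^{(n-1)}$ and $\afW_-^{(n-1)}$ and restricts on the edge to $\fW_C^{(n-1)}$. This already covers the two diagonal sectors, in which all imaginary parts $\eta_j$ lie in $V^+$, respectively all in $V^-$.

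The remaining mixed sectors --- where the $\eta_j$ independently lie in $V^+$ or in $V^-$ --- I would obtain by partially reordering the fields. For any permutation $\pi$ of $\{1,\ldots,n\}$ the function $(\Omega_C, \phi_C(\alpha_{\pi(1)},s_{\pi(1)})\cdots \phi_C(\alpha_{\pi(n)},s_{\pi(n)})\Omega_C)$ satisfies exactly the hypotheses used in Lemma \ref{fourierwforward} and Theorem \ref{5.1} --- the spectrum condition (Theorem \ref{HO}) and the $\phi$-bound (\ref{Fockcontinuity1}) are untouched by reordering --- hence it continues holomorphically to the forward tube in its own relative variables $\xi'_k = (\alpha_{\pi(k)}-\alpha_{\pi(k+1)},\, s_{\pi(k)}-s_{\pi(k+1)})$. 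Rewriting this tube in the original relative variables $\xi_j$ via the invertible linear change of coordinates $\xi\mapsto\xi'$, one obtains, for an appropriate $\pi$, a tube whose imaginary part carries the prescribed mixed signs on a neighbourhood of $\lambda_1 V_{\beta}\times\cdots\times\lambda_{n-1}V_{\beta}$; this is the reversal-of-a-subcollection argument already used in the proof of Lemma \ref{5.2}. On $\lambda_1 V_{\beta}\times\cdots\times\lambda_{n-1}V_{\beta}$ every reordered function coincides with $\fW_C^{(n-1)}$, since the space-time points there are pairwise space-like and locality lets the field operators be commuted freely. Thus all these continuations agree on the real edge and patch into a single-valued holomorphic function on the union of $\cN$ with all the mixed tubes, and that union contains $\cC^{(n-1)} = \cN \cup {\mathcal D}^{(n-1)}$ (and, since $\lambda_1,\ldots,\lambda_{n-1}$ were arbitrary with $\sum_j\lambda_j=1$, the union over all such choices as well).

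The step I expect to be the main obstacle is this last one: verifying that the product domain ${\mathcal D}^{(n-1)}$ --- each $\eta_j$ ranging independently over $V^+\cup V^-$ while $\Re\xi_j$ stays in the bounded space-like set $\lambda_j V_{\beta}$ --- is genuinely exhausted by the images of the permuted forward tubes (in effect, that the permuted extended tube contains a full complex neighbourhood of the Jost points, here $\lambda_1 V_{\beta}\times\cdots\times\lambda_{n-1}V_{\beta}$), and confirming that each reordered $n$-point function retains the spectral support required for the Bros-Epstein-Glaser representation in Theorem \ref{5.1} to apply to it term by term.
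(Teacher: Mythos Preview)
Your proposal is correct and follows essentially the same route as the paper: forward-tube analyticity from Theorem~\ref{5.1}, Schwarz reflection via the real-valuedness of Lemma~\ref{5.2} to reach the backward tube, the Edge-of-the-Wedge theorem to produce the complex neighbourhood~$\cN$, and then partial reordering of the fields (using locality on $\lambda_1 V_\beta\times\cdots\times\lambda_{n-1}V_\beta$ together with the spectrum condition for each reordered correlator) to cover the mixed sectors of~${\mathcal D}^{(n-1)}$. The paper does not elaborate on the point you flag as the main obstacle any more than you do---it simply asserts the extension ``by only partially reordering the fields (see the proof of Lemma~\ref{5.2}) and using the support properties of the Fourier transform stated in Theorem~\ref{5.1}''---so your level of detail already matches the original.
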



\section{The relativistic KMS condition for the $P (\phi)_2$ model}
\label{sectrelkms}

In the previous section we have seen that the Wightman functions on the circle are the boundary 
values of a function ${\mathcal W}_C^{(n-1)}$  holomorphic in the region ${\mathcal C}^{(n-1)}$ 
(see (\ref{c-n-1})). Now define a new function 
	\begin{equation}
		{\mathcal W}_{\beta}^{(n-1)}:={\mathcal W}_{C}^{(n-1)}\circ \Xi^{-1} \; , 
		\label{nel}
	\end{equation}
where $\Xi$ is 
the coordinate transformation 
	\[
		(z_1,w_1,\ldots,z_{n-1},w_{n-1}) \mapsto  (iz_1,-i w_1,\ldots,i z_{n-1},-i w_{n-1})
	\]
on $\mathbb{C}^{2(n-1)}$. Then ${\mathcal W}_\beta^{(n-1)}$ is 
analytic in the domain 
	\begin{equation}
	\label{initialdomain}
		\bigl((Q^- \cup Q^+)^{n-1} - i \lambda_1 V_{\beta} \times \ldots \times \lambda_{n-1} V_{\beta} \bigr)\cup  \Xi\mathcal{N} \; , 
	\end{equation}
with $\lambda_i >0$ and $\sum_{i=1}^{n-1} \lambda_i =1$, 
where the right and left wedges are 
	\[ 
		Q^\pm = \left\{ (\tau, y) \in \mathbb{R}^2 \mid \pm y > |\tau| \right\} . 
	\]
Our aim is to show that
\begin{itemize}
\item[$i.)$] 
the thermal Wightman functions $ { {\mathcal W}}_{\beta}^{(n-1)}$ introduced in~(\ref{nel}) 
extend to functions analytic in the product of domains 
	\begin{equation}
		\label{strongtube}
		(\lambda_1 {\cal T}_{\beta}) \times \cdots \times (\lambda_{n-1} {\cal T}_{\beta}), 
		\qquad  {\cal T}_{\beta} := \mathbb{R}^2 - i V_{\beta} , \qquad \sum_{j=1}^{n-1} \lambda_j = 1,
	\end{equation}
and $\lambda_j >0$, $j=1, \ldots, n-1$. In fact, one can take the union over these~$\lambda_j$'s;
\item[$ii.)$] 
the boundary values of the analytic functions ${\mathcal W}_{\beta}^{(n-1)}$ as $\Im z_j \searrow 0$ yield 
tempered distributions. 
\end{itemize}
We will also ensure that these tempered distributions are indeed the Wightman distributions of the 
thermal field theory on the real line. We proceed in several steps.

\subsection{Products of sharp-time fields and their domains}
\label{sec:3.2}
 
The representation $\pi_\beta$ defined in Section \ref{sectosrec} is a regular CCR representation 
(see \cite{GeJ2}), and therefore one can define for $h \in C^\infty_{0 \, \mathbb{R}} (\mathbb{R})$ 
the Segal field operators 
	\begin{equation}
		\label{stone}
		\phi_\beta  (h) := - i \frac{{\rm d}}{ {\rm d} s} \pi_\beta \left({\rm e}^{i \phi(0,s h)} \right)\Bigl|_{ s=0} \;  . 
	\end{equation}
While Stone's theorem is convenient to show that  $\phi_\beta  (h)$ exists as a self-adjoint unbounded
operator, it provides little control on the domain of $\phi_\beta  (h)$. In fact, a priori it is not even clear 
whether $\Omega_\beta$ is an element of ${\mathcal D}(\phi_\beta  (h))$. We will need several steps 
to resolve these domain problems. 

\goodbreak
\begin{lemma} \quad
\begin{itemize}
	\item[$ i.)$] {\rm (Products of sharp-time fields)}.
	\label{pos}
	Let $h_i\in  {\mathscr S}_{ \mathbb{R}}^{\infty}(\mathbb{R})$  for $i =1, \ldots, j$, $j \in \mathbb{N}$, 
	and~$0 \le \alpha_1 \le \ldots \le \alpha_{j}<\beta$. Then
		\begin{equation}
			\label{euclsharl2}
			\phi (\alpha_j, h_j) \cdots \phi (\alpha_1, h_1)  \in
			\bigcap_{1\leq p<\infty}L^{p}(Q, \Sigma_{[0,\alpha_j]}, {\rm d} \mu), \quad j \in \mathbb{N} \, .
	\end{equation}
\item[$ ii.)$] {\rm (Convergence of sharp-time Schwinger functions, Part II)}.
Let $h_{i}\in C^{\infty}_{0\: \mathbb{R}}(\mathbb{R})$ and $\alpha_{i}\in S_{\beta}$, $1\leq i\leq n$. Then
	\[
		\lim_{l\to \infty}\int_{Q} \phi (\alpha_n, h_n) \cdots \phi (\alpha_1, h_1) \; {\rm d}\mu_{l}= \int_{Q}
		\phi (\alpha_n, h_n) \cdots \phi (\alpha_1, h_1)\; {\rm d}\mu.
	\]
\end{itemize}
\end{lemma}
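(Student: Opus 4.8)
The plan is to dispatch part~(i) with H\"older's inequality plus a measurability observation, and to reduce part~(ii) to Lemma~\ref{3.3}(ii) by differentiating a generating functional, the only substantial point being a uniform-in-$l$ moment estimate for the cut-off measures.

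For part~(i): each factor $\phi(\alpha_i,h_i)$ lies in $\bigcap_{1\le p<\infty}L^p(Q,\Sigma,\d\mu)$ by Lemma~\ref{3.3}(i), so H\"older's inequality (with exponents $jp$) shows that the pointwise product $\phi(\alpha_j,h_j)\cdots\phi(\alpha_1,h_1)$ lies in $\bigcap_{1\le p<\infty}L^p(Q,\Sigma,\d\mu)$; applying the same estimate to differences, and using that the approximants $\phi(\delta_k(\,\cdot-\alpha_i)\otimes h_i)$ converge with uniformly bounded $L^q(\d\mu)$-norms, one sees that this product is the $L^p(\d\mu)$-limit of $\phi(\delta_k(\,\cdot-\alpha_j)\otimes h_j)\cdots\phi(\delta_k(\,\cdot-\alpha_1)\otimes h_1)$. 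It then remains to identify the measurability class. Here I would use that $\phi(\alpha_i,h_i)$ is $\Sigma_{\{\alpha_i\}}$-measurable (see \cite{GeJ2}), together with the inclusion $\Sigma_{\{\alpha\}}\subset\Sigma_{[0,\alpha_j]}$ valid for every $\alpha\in[0,\alpha_j]$ --- which for the endpoint values $\alpha\in\{0,\alpha_j\}$ one obtains by re-approximating $\phi(\alpha,h)$ through mollifiers $\rho_\varepsilon(\,\cdot-\alpha)$ supported on the side of $\alpha$ that lies in $[0,\alpha_j]$. Since $0\le\alpha_1\le\dots\le\alpha_j<\beta$ puts every $\alpha_i$ in $[0,\alpha_j]$, each factor, hence the product, is $\Sigma_{[0,\alpha_j]}$-measurable, which is~(\ref{euclsharl2}).

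For part~(ii): the argument of part~(i) applies verbatim with $\d\mu$ replaced by $\d\mu_l$ --- using that $\d\mu_l\ll\d\phi_C$ with Radon--Nikodym derivative in every $L^q(\d\phi_C)$ (take $pP$ in place of $P$ in~(\ref{co})), that $\phi(\alpha,h)\in\bigcap_p L^p(Q,\Sigma,\d\phi_C)$ by~(\ref{e1.0}), and H\"older --- so that $\phi(\alpha_n,h_n)\cdots\phi(\alpha_1,h_1)\in\bigcap_p L^p(Q,\Sigma,\d\mu_l)$ for every $l$. Setting $G_l(s):=\int_Q\prod_{j=1}^n\e^{i\phi(\alpha_j,s_jh_j)}\,\d\mu_l$ and defining $G_\infty$ analogously with $\d\mu$, differentiation under the integral sign (legitimate since $\partial_{s_1}\cdots\partial_{s_n}\prod_j\e^{i\phi(\alpha_j,s_jh_j)}$ is dominated, uniformly in $s$, by $\prod_j|\phi(\alpha_j,h_j)|\in L^1(\d\mu_l)$) gives
\[
\int_Q\phi(\alpha_n,h_n)\cdots\phi(\alpha_1,h_1)\,\d\mu_l=(-i)^n\,\partial_{s_1}\cdots\partial_{s_n}G_l(s)\big|_{s=0},
\]
and the same identity with $\d\mu$. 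Lemma~\ref{3.3}(ii) gives $G_l(s)\to G_\infty(s)$ for each $s\in\rr^n$ as $l\to\infty$. If the derivatives $\partial^\gamma G_l$ with $|\gamma|\le n+1$ are bounded on $\{\,|s|\le1\,\}$ uniformly in $l$ --- which holds as soon as $\sup_l\int_Q\prod_j|\phi(\alpha_j,h_j)|^{\gamma_j}\,\d\mu_l<\infty$, and hence, by H\"older, as soon as $\sup_l\|\phi(\alpha,h)\|_{L^p(\d\mu_l)}<\infty$ for all $p<\infty$ --- then $\{G_l\}$ is precompact in $C^n(\{|s|\le1\})$ with unique accumulation point $G_\infty$, so $\partial_{s_1}\cdots\partial_{s_n}G_l(0)\to\partial_{s_1}\cdots\partial_{s_n}G_\infty(0)$, which is the claim.

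The hard part is therefore the uniform moment bound $\sup_{l\in\rr^+}\|\phi(\alpha,h)\|_{L^p(Q,\Sigma,\d\mu_l)}<\infty$ for every $p<\infty$ and $h\in C^\infty_{0\,\rr}(\rr)$. I would deduce it from a Glimm--Jaffe $\phi$-bound for the cut-off interactions: by Osterwalder--Schrader reconstruction $\int_Q\phi(\alpha,h)^{2k}\,\d\mu_l=\|\phi_\beta^{(l)}(h)^k\Omega_\beta^{(l)}\|^2$, where $\Omega_\beta^{(l)}$ is annihilated by the generator $L^{(l)}$ of the local symmetric semigroup reconstructed from $\d\mu_l$ (because $\d\mu_l$ is invariant under rotations in $\alpha$), so a bound $\pm\phi_\beta^{(l)}(h)\le{\tt C}\,\|h\|_{H^{-1/2}}(L^{(l)}+c)^{1/2}$ holding uniformly in $l$ --- the thermal analogue of~(\ref{Fockcontinuity1}), importable from the circle model through Nelson symmetry (Theorem~\ref{3.2}(ii)) and~(\ref{Fockcontinuity1})--(\ref{Fockcontinuity2}) --- gives $\int_Q\phi(\alpha,h)^{2k}\,\d\mu_l\le{\tt C}^{2k}\|h\|_{H^{-1/2}}^{2k}\,c^k$. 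Uniform $\phi$-bounds of this kind are standard in the constructive literature and are available, in the form needed here, from \cite{GeJ2}.
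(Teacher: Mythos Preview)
Your treatment of part~(i) is correct and coincides with the paper's: H\"older plus Lemma~\ref{3.3}(i), then a measurability check. The paper in addition spells out the heat-equation representation
\[
\int_Q\phi(\delta_k(\cdot-\alpha)\otimes h)^p\,\d\mu=(-i)^p\,\partial_\lambda^p\bigl(\Omega_C,\,W_{[-\infty,+\infty]}(\lambda\,\delta_k(\cdot-\alpha)\otimes h)\,\Omega_C\bigr)\big|_{\lambda=0}
\]
and the derivative bound~(\ref{cb1}) from~\cite{GeJ2}, because this machinery is what drives its proof of~(ii).

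For part~(ii) your route is genuinely different, and the decisive ingredient---the uniform moment bound $\sup_l\|\phi(\alpha,h)\|_{L^p(\d\mu_l)}<\infty$---is asserted rather than proved. The proposed derivation has two soft spots. First, an $l$-uniform inequality $\pm\phi_\beta^{(l)}(h)\le{\tt C}\,\|h\|_{H^{-1/2}}(L^{(l)}+c)^{1/2}$ for the cutoff thermal Liouvillians is not among the results of~\cite{GeJ2}; the bounds~(\ref{Fockcontinuity1})--(\ref{Fockcontinuity2}) concern the circle Hamiltonian~$H_C$, which is a different operator. Second, even granted such a first-order estimate, passing to $\|\phi_\beta^{(l)}(h)^k\Omega_\beta^{(l)}\|^2\le{\tt C}^{2k}\|h\|^{2k}c^k$ requires commuting $\phi_\beta^{(l)}(h)$ past $(L^{(l)}+c)^{1/2}$ repeatedly and is not a one-line consequence.

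The paper avoids uniform estimates altogether. Via Nelson symmetry the spatial cutoff~$l$ appears on the circle side as a Euclidean-\emph{time} cutoff: both the $\d\mu_l$- and the $\d\mu$-moments are expressed through
\[
\partial_{\lambda_1}\cdots\partial_{\lambda_n}\Bigl(\Omega_C,\,W_{[s,t]}\bigl({\textstyle\sum_j}\lambda_j\,\delta_k(\cdot-\alpha_j)\otimes h_j\bigr)\,\Omega_C\Bigr)\Big|_{\lambda=0},
\]
and since $(s,t)\mapsto(\Omega_C,W_{[s,t]}(f)\Omega_C)$ is constant once $[s,t]$ contains the $x$-support of~$f$, the $\d\mu_l$-moments are literally \emph{equal} to the $\d\mu$-moments as soon as $[-l,l]\supset\bigcup_i\supp h_i$. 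The limit $l\to\infty$ is therefore trivial. Your instinct to invoke Nelson symmetry was correct; the point is that it delivers eventual constancy rather than merely a uniform bound, so the compactness detour through $C^n(\{|s|\le1\})$ becomes unnecessary---and the uniform moment bound you wanted is a free byproduct.
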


\begin{proof} $i.)$ Consider an approximation of the Dirac $\delta$-function: $\delta_{\kappa}(x):= 
\kappa \chi(\kappa x)$, with $\chi$  a function in $C_0^\infty(\mathbb{R})$ and $\int \chi(x)\, {\rm d} x=1$. 
It has been shown in \cite[Proposition 7.3]{GeJ2} that
	\[
		\lim_{k \to \infty} \;    \phi \left( \delta_k ( \, .\, - \alpha_i) \otimes   h_i  \right)  \in
		\bigcap_{1\leq p<\infty}L^{p}(Q, \Sigma, {\rm d} \mu), 
		\qquad h_i\in  {\mathscr S}_{ \mathbb{R}} (\mathbb{R}). 
	\]
For later purpose, we briefly recall the proof:
	\begin{align*}
	& \int_Q \bigl( \phi \left(   \delta_k ( \, .\,  - \alpha_i) \otimes   h_i \right) \bigr)^{p} \; 
			{\rm d} \mu
	\\ 
	& \qquad = (-i)^{p} \frac{{\rm d}^{p} }{{\rm d} \lambda^{p}} 
			\left( \Omega_C \, , \, W_{[ - \infty, + \infty]} 
			\left(\lambda \bigl(  \delta_k ( \, .\,  - \alpha_i) \otimes   h_i \bigr)\right) 
			\Omega_C \right) \Bigl|_{ \lambda =0} \; , 
	\end{align*}
where $W_{[ a, b]} (f)$ is a solution of the heat equation 
	\[ 
		\frac{{\rm d}}{{\rm d} b} W_{[ a, b]} (f)= W_{[ a, b]} (f) \bigl(- H_C + i \phi_C (f_b) \bigr) ,  \qquad a \le b, 
	\]
with the boundary condition $W_{[a,a]}(f)= \mathbb{1}$ 
and with $f_b (\, . \, ) := f (\, . \, , b) \in {\mathscr S}_\mathbb{R} (S_\beta )$ for
$f  \in {\mathscr S}_\mathbb{R} (S_\beta \times \mathbb{R})$. Now, 
if $f = \delta_k ( .- \alpha_i) \otimes   h_i$, then the function
$f_x \in {\mathscr S}_\mathbb{R}(S_\beta)$ 
is equal to $\delta_k (\, . \, - \alpha_i)  h_i (x)$. It follows from (\ref{Fockcontinuity2}), \emph{i.e.}, estimate 
(5.9) in \cite[Proposition 5.4]{GeJ2}, that  $h_i\in  C_{0 \,\mathbb{R}}^{\infty}(\mathbb{R})$ implies
	\begin{align*}
			\pm \phi_C \left( \delta_k ( \, .\,  - \alpha_i)    h_i (x)  \right)  
			& \le c \left\|  \delta_k ( \, .\,  - \alpha_i) \,   h_i (x)   \right\|_{H^{-1}(S_\beta)}  (H_C +1) 
			\\ 
			& \le c \, | h_i (x) |   \left\|   \delta_k  \right\|_{H^{-1}(S_\beta)}    (H_C +1) . 
	\end{align*}
Set $r_k(x):= c \, | h_i (x) | \left\|   \delta_k  \right\|_{H^{-1}(S_\beta)}  $ and apply  \cite[Lemma A.8]{GeJ2} to obtain 
	\begin{equation}
		\label{cb1}
		\left\|\frac{{\rm d}^{p}}{{\rm d} \lambda^{p}} W_{[ - \infty, + \infty]} 
		\left(\lambda \left(  \delta_k ( \, .\,  - \alpha_i) \otimes   h_i \right)\right) \right\| \le  p! \; \|  r_k \|^{p}_\infty \,
		{\rm e}^{ \| r_k \|_1 \| r_k \|_\infty^{-1}} . 
	\end{equation} 
Since $\delta_k (\, .\, - \alpha_i )$ converges to $\delta (\, .\, - \alpha_i )$ in $H^{-1}(S_\beta)$ and
$h_i\in  C_{0 \,\mathbb{R}}^{\infty}(\mathbb{R})$ for $i= 1, \ldots,  j$, we see that $\lim_{k \to \infty} \| r_k \|_1 < \infty$ 
and $\lim_{k \to \infty}\| r_k \|_\infty < \infty$. Thus 
	\begin{equation}
		\label{cb2}
		\int_Q  \left| \phi (\alpha_i, h_i) \right|^j \; {\rm d} \mu     < \infty \, .  
	\end{equation}
The estimate \eqref{euclsharl2} with $p=1$ 
then follows from the H\"older inequality  
	\[
		\int_Q  | \phi (\alpha_j, h_j) \cdots \phi (\alpha_1, h_1) |\; {\rm d} \mu
		\le \prod_{i=1}^j \left( \int_Q  \left| \phi (\alpha_i, h_i) \right|^j \; {\rm d} \mu  \right)^{1/j} \; . 
	\]
The higher $L^p$-estimates follow as well, as $(\alpha_k, h_k)$ may 
equal $(\alpha_l, h_l)$, $k, l =1, \ldots, j$.
$\Sigma_{[0,\alpha]}$-measurability follows from the fact that $a.)$ for all $k$
there is an~$\epsilon_k$ (the $\delta_k$ were chosen to have compact support) such that 
	\[
		\phi \left(  \delta_k ( \, .\, - \alpha_i) \otimes   h_i \right)  \in
		\bigcap_{1\leq p<\infty}L^{p}(Q, \Sigma_{[0,\alpha_i+\epsilon_k]}, {\rm d} \mu)
	\]
and $b.)$ the upper continuity of $\mu$. 

$ii.)$ Now let $h_{i}\in C^{\infty}_{0\: \mathbb{R}}(\mathbb{R})$ and $\alpha_{i}\in S_{\beta}$, 
$1\leq i\leq n$. Part $ii.)$ follows from  
	\begin{align*}
		&\lim_{l\to \infty}\int_{Q} \phi (\alpha_n, h_n) \cdots \phi (\alpha_1, h_1) \; {\rm d}\mu_{l}
		\nonumber 
		\\
		& \qquad =  \lim_{k\to \infty} \frac{{\rm d}^{n}}{{\rm d} \lambda_1 \cdots {\rm d} \lambda_n }  
		 \left( \Omega_C \, ,  W_{[ - a, a]} \left( {\textstyle\sum_{i =1}^n\lambda_i 
		 \left(  \delta_k ( \, .\,  - \alpha_i) \otimes   h_i \right) } \right)
		\Omega_C \right) \Bigl|_{\lambda_i = 0} 
		\nonumber 
		\\
		& \qquad = \int_{Q} \phi (\alpha_n, h_n) \cdots \phi (\alpha_1, h_1)\; {\rm d}\mu \; , 
	\end{align*} 
for ${\rm supp}\;  \delta_k ( \, .\,  - \alpha_i) \otimes   h_i  \subset S_\beta \times [-a,a] $, $i= 1, \ldots, n$, 
as for $s \le -a \le a \le t $ the map $(s, t) \mapsto (\Omega_C , W_{[s, t]} (f) \Omega_C)$ is constant. 
\end{proof}

The existence of products of sharp time fields in $L^2 (Q, \Sigma, {\rm d} \mu)$ allows us to investigate 
their domains, taking advantage of their Euclidean heritage:  

\goodbreak
\begin{proposition} 
\label{neu}
Let $h_{i}\in C^{\infty}_{0\: \mathbb{R}}(\mathbb{R})$, $1\leq i\leq n$. Then 
\begin{itemize}
\item	 [$i.)$] $\Omega_\beta \in {\mathcal D} (L)$ and $ L \Omega_\beta = 0$; 
\item [$ii.)$] If $\alpha_1, \ldots, \alpha_n  \ge 0$ and $ \sum_{j= 1}^n \alpha_j \le \beta / 2$, then 
	\begin{equation}
		\label{kl}
		{\rm e}^{- \alpha_{n-1} L } \phi_\beta ( h_{n-1}) \cdots {\rm e}^{- \alpha_{1}  L } 
		\phi_\beta ( h_{1})   \Omega_\beta \in {\mathcal D} \bigl( \phi_\beta (h_{n}) \bigr)
	\end{equation}
and 
	\begin{equation}
		\label{kln} 
		\phi_\beta (h_{n}) {\rm e}^{- \alpha_{n-1} L } \phi_\beta ( h_{n-1}) \cdots {\rm e}^{- \alpha_{1}  L } 
		\phi_\beta ( h_{1})   \Omega_\beta \in {\mathcal D} \bigl({\rm e}^{- \alpha_{n} L }
		\bigr) . 
	\end{equation}
Moreover,  the linear span of such vectors is dense in ${\mathcal H}_\beta$ and
	\begin{align*} 
		&{\rm e}^{-\alpha_n L } \phi_\beta (h_{n}) {\rm e}^{-  \alpha_{n-1} L } 
		\phi_\beta ( h_{n-1}) \ldots {\rm e}^{- \alpha_{1}  L } \phi_\beta (  h_{1}) \Omega_\beta 
		\\
		& \qquad \qquad = {\cal V}\Bigl(U(\alpha_n,0) \phi(0,h_{n}) U(\alpha_{n-1},0)\phi(0,h_{n-1})
		\cdots U(\alpha_1,0)  \phi(0,h_{1}) \Bigr)  ;
	\end{align*} 
\item [$iii.)$] If $0\le \alpha_{1}\leq \cdots\leq \alpha_k \le \beta /2$ and $\beta/2 \le \alpha_{k+1}\le \ldots \le \alpha_{n}\leq \beta$, 
then 
	\begin{align} 
		\label{31}
		&  \int_{Q} \Bigl( \prod_{j=1}^{n} \phi(\alpha_{j}, h_{j} ) \Bigr) \; {\rm d}\mu 
		\nonumber \\  
		&  \quad = 
		\bigl( {\rm e}^{(\alpha_n-\beta) L } \phi_\beta (h_{n}) {\rm e}^{(\alpha_{n-1}-\alpha_{n})  L } \phi_\beta (h_{n-1}) 
		\cdots  {\rm e}^{(\alpha_{k+1}-\alpha_{k+2}) L } \phi_\beta (h_{k+1}) \Omega_\beta \; ,  \; \quad
		\nonumber \\  
		& \qquad \qquad   {\rm e}^{- \alpha_1 L } \phi_\beta (h_{1}) {\rm e}^{(\alpha_1 -  \alpha_2) L } 
		\phi_\beta (h_{2})  \cdots  {\rm e}^{(\alpha_{k-1}-\alpha_{k}) L } \phi_\beta (h_{k}) \Omega_\beta \bigr)   \, .
	\end{align} 
\item [$iv.)$]   $ \|  {\rm e}^{- (\beta/2) L}  \phi_\beta (h_{n}) \cdots  \phi_\beta ( h_{1}) \Omega_\beta \bigr  \| = \|  \phi_\beta (h_{n})     
		\cdots  \phi_\beta ( h_{1})  \Omega_\beta \bigr  \|$. 
\end{itemize}
\end{proposition}

\begin{proof} 
Let us first note that (\ref{31}) formally results from differentiating the following identity, which is a conequence of 
${\rm e}^{i \phi(0,h_{j})} \in L^{\infty}(Q, \Sigma_{\{0\}}, {\rm d} \mu)$ for $h_{i}\in C^{\infty}_{0\: \mathbb{R}}(\mathbb{R})$ 
and the Osterwalder-Schrader reconstruction outlined in Section 2.3:
	\begin{align*}
		& \int_{Q} \Bigl( \prod_{j=1}^{n}{\rm e}^{i \phi(\alpha_{j}, h_{j})} \Bigr) {\rm d}\mu   
			\nonumber \\ 
		&
		\quad = \int_{Q} R \biggl( \overline{  U(\beta, 0) \prod_{j=k+1}^{n}  
		{\rm e}^{-i\phi(-\alpha_{j}, h_{j} )} } \biggr) 
				\prod_{j=1}^{k} {\rm e}^{i\phi(\alpha_{j}, h_{j} ) }\; {\rm d}\mu 
		\nonumber\\ 
		& \quad = 
			\Bigl( {\cal V} \bigl( U(\beta, 0)  {\rm e}^{-i \phi(-\alpha_n,h_{n})}  
			\ldots  {\rm e}^{-i \phi(-\alpha_{k+1},h_{k+1})} \bigr)  \; ,
			{\cal V} ( {\rm e}^{i \phi(\alpha_k ,h_{k})}   \cdots {\rm e}^{i \phi(\alpha_1,h_{1})} )  \Bigr)  
		\nonumber \\ 
		&\quad = \bigl( {\rm e}^{(\alpha_n-\beta) L } {\rm e}^{-i\phi_\beta (h_{n}) } 
		{\rm e}^{(\alpha_{n-1}-\alpha_{n})  L } {\rm e}^{-i\phi_\beta (h_{n-1}) }
			\cdots  {\rm e}^{(\alpha_{k+1}-\alpha_{k+2}) L } {\rm e}^{-i\phi_\beta (h_{k+1}) }
			\Omega_\beta \; ,  \; 
		\nonumber \\ 
		&\quad \quad  \qquad \qquad  {\rm e}^{- \alpha_1 L } {\rm e}^{i\phi_\beta (h_{1})} 
		{\rm e}^{(\alpha_1 -  \alpha_2) L } 
			{\rm e}^{i\phi_\beta (h_{2})  }\cdots  {\rm e}^{(\alpha_{k-1}-\alpha_{k}) L } 	
			{\rm e}^{i \phi_\beta (h_{k}) } \Omega_\beta \bigr)   \, ,  
		\label{cweyl}
	\end{align*} 
for $1\leq i\leq n$, and $0\le \alpha_{1}\leq \cdots\leq \alpha_k \le \beta /2$ and $\beta/2 \le 
\alpha_{k+1}\le \ldots \le \alpha_{n}\leq \beta$.  
Note that inserting the identity $U(\beta, 0) = \mathbb{1}$ ensures that $(\beta - \alpha_i) \in 
[0, \beta/2 ]$
for $i= k+1, \ldots, n$.
However, we have to ensure that  (\ref{31}) is well-defined.

\begin{itemize} 
\item [$i.)$] See  \cite[Lemma 8.4]{KL2}: $ 1 \in {\mathcal M}_\alpha$, thus 
$\Omega_\beta \in {\mathcal D}_\alpha$ and $ {\rm e}^{- \alpha L} \Omega_\beta 
= P(\alpha) \Omega_\beta = \Omega_\beta$ as $U(\alpha, 0)1 = 1$ for $0 \le \alpha \le \beta $;
\item [$ii.)$]
The case $n=1$, namely $ \Omega_\beta \in {\mathcal D} \left( \phi_{\beta}(h_{1}) \right)$ and 
	\[  
		{\rm e}^{- \alpha_1 L} \phi_{\beta}(h_1)\Omega_\beta \in {\mathcal H}_\beta 
		\quad \hbox{for} \quad 0\le \alpha_1 \le \beta /2  
	\]
was proven in \cite{GeJ3}. In fact, 
	\[  
		{\rm e}^{- \alpha_1 L} \phi_{\beta}(h_1)\Omega_\beta \in {\mathcal D} (\phi_{\beta}(h_2))  \, ,  
	\]
as $\phi(0,h_2)$ acts as a multiplication operator on $\phi (\alpha_1, h_1)$ and 
	\[ 
		\phi(0,h_2) \phi (\alpha_1, h_1) \in {\mathcal M}_{\beta/2-\alpha_1}   
	\]
by Lemma \ref{pos} $i.)$. As $ P(\alpha) {\mathcal D}_\gamma \subset {\mathcal D}_{\gamma- \alpha}$, 
it follows that 
	\[  
		{\rm e}^{- \alpha_2 L} \phi_{\beta}(h_2){\rm e}^{- \alpha_1 L} \phi_{\beta}(h_1)\Omega_\beta \in {\mathcal D}_{\beta/2-\alpha_1- \alpha_2}    
	\]
and $ \phi (0, h_3) \phi (\alpha_2, h_2) \phi (\alpha_1 + \alpha_2, h_1)  \in {\mathcal M}_{\beta/2-\alpha_1 - \alpha_2} $ implies 
	\[  
		{\rm e}^{- \alpha_2 L}\phi_{\beta}(h_2){\rm e}^{- \alpha_1 L} \phi_{\beta}(h_1)
		\Omega_\beta \in {\mathcal D} ( \phi_{\beta}(h_3) ) \, .   
	\]
Iterating this argument it follows that   
	\[ 
		{\cal V} \bigl( \phi (\alpha_k, h_k) \cdots \phi (\alpha_1 + \ldots + \alpha_k, h_1) \bigr) 
		\in {\mathcal D}_{\beta/2-\gamma}, 
	\] 
if $\sum_{i=1}^{k} \alpha_k \le \gamma \le \beta/2$. Thus (\ref{kl}) and (\ref{kln}) follow. 

Next we prove that 
	\[ 
		{\rm e}^{-\alpha_n L } \phi_\beta (h_{n}) {\rm e}^{-  \alpha_{n-1} L } \phi_\beta ( h_{n-1})  
		\ldots {\rm e}^{- \alpha_{1}  L } \phi_\beta (  h_{1}) \Omega_\beta
	\] 
is  dense in ${\mathcal H}_\beta$ for $\alpha_1, \ldots, \alpha_n  \ge 0$ and $\sum_{j= 1}^n \alpha_j \le \beta / 2$. Assume that, 
for $\Psi \in {\mathcal H}_{\beta}$ and all $f, g \in C^{\infty}_{0\: \mathbb{R}}(\mathbb{R})$,
	\begin{equation}
		\label{32}  
			\forall m,n \in \mathbb{N}: \qquad( \Psi , \phi_\beta (f)^n {\rm e}^{-  \beta L / 2} 
			\phi_\beta ( g)^m  \Omega_\beta ) = 0 . 
	\end{equation}
(Note that (\ref{32}) is well-defined as a consequence of (\ref{kln}).) 
Then\footnote{From Prop.~A6 $i.)$ and Theorem 7.2 $i)$ in \cite{GeJ2} it 
follows that the 
vector valued function 
$s, t \mapsto  {\cal V} ( {\rm e}^{i s \phi( 0, f) } 
			{\rm e}^{i t \phi (\beta/2, g) } ) $ is entire for 
			$f, g \in C^{\infty}_{0\: \mathbb{R}}(\mathbb{R})$.} 
	\begin{equation} 
		\label{ortho}
			(\Psi \, , \, {\rm e}^{i \phi_\beta (f) } {\rm e}^{-\beta L /2} 
			{\rm e}^{i \phi_\beta (g) } \Omega_\beta) = 0 \, . 
	\end{equation} 
But vectors of the form 
${\rm e}^{i \phi_\beta (f) }  {\rm e}^{-\beta L /2} {\rm e}^{i \phi_\beta (g) } 
\Omega_\beta$, 
$f, g \in C^{\infty}_{0\: \mathbb{R}}(\mathbb{R})$,
are dense \cite[Theorem 11.2]{KL1} in ${\mathcal H}_\beta$, and therefore (\ref{ortho}) implies 
$\Psi = 0$, establishing the claim. 

\item [$iii.)$] If $0\le \alpha_{1}\leq \ldots\leq \alpha_k \le \beta /2$ and $\beta/2 \le \alpha_{k+1}\le \ldots \le \alpha_{n}\leq \beta$, 
then according to $ii.)$
	\begin{align*}
			&\bigl( {\rm e}^{(\alpha_n-\beta) L } \phi_\beta (h_{n}) {\rm e}^{(\alpha_{n-1}- \alpha_n)  L } \phi_\beta (h_{n-1}) 
				\ldots  {\rm e}^{(\alpha_{k+1}-\alpha_{k+2}) L } \phi_\beta (h_{k+1}) \Omega_\beta \; ,  \;
			\\ 
			& \qquad \qquad \qquad {\rm e}^{- \alpha_1 L } \phi_\beta (h_{1}) {\rm e}^{- (\alpha_2 -  \alpha_1) L } 
				\phi_\beta (h_{2})  \ldots  {\rm e}^{- (\alpha_{k} -\alpha_{k-1}) L } \phi_\beta (h_{k}) \Omega_\beta \bigr)
	\end{align*}
is well-defined and equals 
	\[
		\begin{array}{rl}
			&  \Bigl( {\cal V} \bigl(  \phi(\beta-\alpha_n,h_{n})  \ldots  \phi(\beta -\alpha_{k+1},h_{k+1}) \bigr)  \; ,
				{\cal V} \bigl( \phi(\alpha_k ,h_{k}) \cdots \phi(\alpha_1,h_{1}) \bigr)  \Bigr)  
			=
			\\ [4mm]
			& \qquad = \int_{Q} R \left( \overline{  \prod_{j=k+1}^{n} \phi(\beta-\alpha_{j}, h_{j} ) } \right)
				\prod_{j=1}^{k} \phi(\alpha_{j}, h_{j} ) \; {\rm d}\mu 
			\\ [4mm]
			& \qquad = \int_{Q} R  \left( \overline{U(\beta, 0) \prod_{j=k+1}^{n} \phi(-\alpha_{j}, h_{j} ) } \right)
				\prod_{j=1}^{k} \phi(\alpha_{j}, h_{j} ) \; {\rm d}\mu 
			\\ [4mm]
			& \qquad = \int_{Q} R \left(  \prod_{j=k+1}^{n} \phi(-\alpha_{j}, h_{j} ) \right) 
				\prod_{j=1}^{k} \phi(\alpha_{j}, h_{j} ) \; {\rm d}\mu 
			\\ [4mm]
			& \qquad = \int_{Q}   \left( \prod_{j=k+1}^{n} \phi(\alpha_{j}, h_{j} ) \right) 
				\prod_{j=1}^{k} \phi(\alpha_{j}, h_{j} ) \; {\rm d}\mu 
			\\ [4mm]
				& \qquad = \int_{Q}\prod_{j=1}^{n} \phi(\alpha_{j}, h_{j} ) \; {\rm d}\mu \, .
		\end{array} 
	\]
We made again use of $U(\beta, 0) =\mathbb{1}$, which holds by periodicity.

\item [$iv.)$]  By $ii.)$ we have $ \phi_\beta (h_{n})   \phi_\beta ( h_{n-1}) \cdots  \phi_\beta ( h_{1})   
\Omega_\beta \in {\mathcal D} \bigl({\rm e}^{- \beta L / 2} \bigr)  $. Now 
	\[  
		\begin{array}{rl} 
			& \left\| {\rm e}^{- \beta  L / 2}  \phi_\beta (h_{n})   \phi_\beta ( h_{n-1}) \cdots  \phi_\beta ( h_{1})   \Omega_\beta\right\|^2 
			=
			\\[4mm]
			& \quad =  \left\| {\cal V} \bigl( U (\beta/2, 0) \phi ( 0, h_{n}) \cdots  \phi ( 0, h_{1}) \bigr) \right\|^2 
			\\[4mm]
			& \quad =  
				\int_{Q} \overline{ U (\beta/2, 0) \phi ( 0, h_{n}) \cdots  \phi ( 0, h_{1})} \; R \, U (\beta/2, 0) \; \phi ( 0, h_{n}) 
				\cdots  \phi ( 0, h_{1}) \; {\rm d}\mu 
			\\[4mm]
			& \quad =  
				\int_{Q} \overline{ \phi ( 0, h_{n}) \cdots  \phi ( 0, h_{1}) } \; U(-\beta/2, 0) \, R \, U (\beta/2, 0) \; \phi ( 0, h_{n}) 
				\cdots  \phi ( 0, h_{1}) \; {\rm d}\mu
			\\[4mm]
			& \quad =  
				\int_{Q}   \phi ( 0, h_{n}) \cdots  \phi ( 0, h_{1}) \; R \, U (\beta, 0) \; \phi ( 0, h_{n}) \cdots  \phi ( 0, h_{1}) \; {\rm d}\mu
			\\[4mm]
			& \quad =    
			\left\| {\cal V} \left(  \phi ( 0, h_{n}) \cdots  \phi ( 0, h_{1}) \right) \right\|^2 
			=   \| \phi_\beta (h_{n})   \phi_\beta ( h_{n-1}) \cdots  \phi_\beta ( h_{1})   \Omega_\beta \|^2 \; , 
			\end{array}
	\]
again using $U(\beta, 0) = \mathbb{1}$.
\end{itemize}
\end{proof}

The extension of these results to real times is our next objective. Given the self-adjoint operator $\phi_{\beta}(h)$,
$h \in C_{0 \, \mathbb{R}}^{\infty}(\mathbb{R})$, set 
	\[ 
		\phi_{\beta}(t, h):= {\rm e}^{i t L}\phi_{\beta}(h) {\rm e}^{- i t L} , \qquad t \in \mathbb{R} \, .  
	\]
The domain of the self-adjoint operator $\phi_{\beta}(t, h)$ is ${\rm e}^{it L} {\mathcal D} ( \phi_{\beta}( h))$. 
That products of field operators   
smeared out in time can be applied to the distinguished vector~$\Omega_\beta$ will be shown in the final subsection.

\subsection{Analyticity properties of the thermal Wightman distributions}
\label{sectwight}
We can now proceed by using the following remarkable consequence of the KMS condition established by Araki (see Equ. (1.27) in Lemma A, 
\cite{AM}).

\begin{lemma}[Araki] 
\label{L7}
Let $\omega_\beta$ be a $(\tau,\beta)$-KMS state over a von Neumann algebra~${\mathcal R}$.
Let $(z_1, \ldots, z_{n-1})\in \mathbb{C}^{n-1}$ with $ \Im z_j  \ge 0$ for 
$j=1 , \ldots , n-1$, and
	\begin{align*}
		\Im z_1 & + \ldots + \Im z_{k-1} + \Im z_{k}' \le \beta/2 \; , \qquad \Im z_{k}' \ge 0 \; , \\
		\Im z_{n-1} &+ \ldots + \Im z_{k+1} + \Im z_{k}'' \le \beta/2 \; , \qquad \Im z_{k}'' \ge 0 \; , \quad z_{k}'+ z_{k}'' = z_{k} \; . 
	\end{align*}
Moreover, let $z_n= i - \sum_{i=1}^{n-1} z_i$. It follows that there exists some $j= 1, \ldots, k$ such that 
for $A_0, \ldots A_n \in {\mathcal R}$ one has
	\begin{align*}
		& \bigl( {\rm e}^{\overline{ i z_k'}  L} A_{k}^* {\rm e}^{\overline{i z_{k-1}}  L} A_{k-1}^* \cdots
		{\rm e}^{ \overline{i z_1} L } A_1^*  \Omega_\beta 
		, {\rm e}^{iz_{k}''  L} A_{k+1} {\rm e}^{iz_{k+1}  L} A_{k+2} \cdots
		{\rm e}^{i z_{n-1}  L } A_n {\rm e}^{i z_n L} \Omega_\beta  \bigr) \\ 
		& \qquad = \bigl( {\rm e}^{\overline{ i z_{k+1}'}  L} A_{k+1}^* {\rm e}^{\overline{ i z_{k}}  L} A_{k}^* \cdots
		{\rm e}^{\overline{ i z_{j+1}} L } A_{j+1}^*  \Omega_\beta \, , 
		\\ 
		&  
		\qquad \qquad   \quad  {\rm e}^{iz_{k+1}''  L} A_{k+2} {\rm e}^{iz_{k+2}  L} A_{k+3} \cdots
		{\rm e}^{i z_{n-1}  L } A_n {\rm e}^{i  z_n  L}  A_{1} \cdots {\rm e}^{i  z_{j-1}  L}  A_{j}
		{\rm e}^{i z_{j}  L}    
		\Omega_\beta  \bigr),
	\end{align*}
with  $\Im z_{k+1}' , \Im z_{k+1}'' \ge 0$, 
	\begin{align*}
		\Im z_j & + \Im z_{j+1} + \ldots + \Im z_{k} + \Im z_{k+1}'' \le \beta/2 \; , \\
		\Im z_{j-1} & + \Im z_{j-2} + \ldots + \Im z_{1} + \Im z_{n} + \ldots + \Im z_{k+2} + \Im z_{k+1}' \le \beta/2 \; ,   \;    
	\end{align*}
and $z_{k+1}' + z_{k+1}''= z_{k+1}$ for some $k = 0, 1, \ldots , n-1$. 
\end{lemma}

The identity stated applies to bounded operators of the fields, but the fields themselves may be 
approximated by bounded operators (see, \emph{e.g.}, Equ.~\eqref{cutoff-field} below). 
After removing these approximations, one finds 
	\begin{align*} 
				& \bigl( {\rm e}^{\overline{ i z_k''}  L} \phi(h_{k}) {\rm e}^{\overline{ i z_{k-1}}  L} \phi(h_{k-1}) \cdots
		{\rm e}^{\overline{i z_1} L } \phi(h_1)  \Omega_\beta \, , \, 
		\\ 
		&  
		\qquad \qquad \qquad \qquad \qquad  {\rm e}^{iz_{k}'  L} \phi(h_{k+1}) {\rm e}^{iz_{k+1}  L} \phi(h_{k+2}) \cdots
		{\rm e}^{i z_{n-1}  L } \phi(h_n) {\rm e}^{ i z_n  L}\Omega_\beta  \bigr) \\ 
		& = \int_{\mathbb{R}^{2}} {\rm d} x_{1} \cdots {\rm d} x_{n} \;  h_{1} (x_1) \cdots h_{n} (x_n ) 
		\mathcal{W}_{\beta}^{(n-1)}(z_1, x_1 - x_2, 
		\ldots , z_{n-1}, x_{n-1} - x_n) .
	\end{align*}
Note that for $\Re z_1 = \ldots = \Re z_n = 0$ the existence of the l.h.s.~follows from Proposition \ref{neu} $ii.)$.
The extension to non-vanishing real parts will be discussed below. But before we do so, we 
choose sequences of absolutely integrable functions 
$h^{(k)}_{i}\in C_{0}^{\infty}(S_{\beta})$, $i=1, \ldots , n$,  tending to the Dirac distributions $ \delta ( \, . \, - x_k)$ as $k \to \infty$.
For $\Re z_i = 0 $ and $\Im z_i >0 $, $i=1, \ldots , n-1$, the limit $k \to \infty$ exists and yields 
	\begin{align*} 
		& \mathcal{W}_{\beta}^{(n-1)}(z_1, x_1 - x_2, 
		\ldots , z_{n-1}, x_{n-1} - x_n)
		\\
		& = \bigl( {\rm e}^{\overline{i z_k''}  L + i (x_{k}- x_{k+1}) P } \phi(\delta) 
		{\rm e}^{\overline{i z_{k-1}}  L + i (x_{k-1}- x_{k})P } \phi(\delta) \cdots
		{\rm e}^{\overline{i z_1} L + i (x_{1}-x_2)P} \phi(\delta) \Omega_\beta \, , \, 
		\\ 
		&  
		\qquad \qquad \quad  
		{\rm e}^{iz_{k}'  L  } \phi(\delta) {\rm e}^{iz_{k+1}  L - i (x_{k+1}- x_{k+2}) P } \phi(\delta) \cdots
		{\rm e}^{i z_{n-1}  L - i (x_{n-1}- x_n)P } \phi(\delta) \Omega_\beta  \bigr)  .
	\end{align*}
Setting $x_1' = x_1 - x_2$, $x_2' = x_2 - x_3$, etc., this identity takes the following form
	\begin{align*} 
		& \mathcal{W}_{\beta}^{(n-1)}(z_1, x_1', 
		\ldots , z_{n-1}, x_{n-1}')
		\\
		& = \bigl( {\rm e}^{\overline{i z_k''}  L + i x_k''P } \phi(\delta) 
		{\rm e}^{\overline{i z_{k-1}}  L + i x_{k-1}' P } \phi(\delta) \cdots
		{\rm e}^{\overline{i z_1} L + i x_{1}' P} \phi(\delta) \Omega_\beta \, , \, 
		\\ 
		&  
		\qquad \qquad \qquad \qquad  {\rm e}^{iz_{k}'  L -  i x_k'P  } \phi(\delta) {\rm e}^{iz_{k+1}  L - i x_{k+1}' P } \phi(\delta) \cdots
		{\rm e}^{i z_{n-1}  L - i x_{n-1}' P } \phi(\delta) \Omega_\beta  \bigr)  .
	\end{align*}
We have set $x_k = x_k'+x_k''$, using the same ratio of the absolute values as in the splitting of $z_k = z_k'+z_k''$. 

\goodbreak
In particular, 
	\begin{align*} 
		& \|  {\rm e}^{iz_{k}'  L -  i x_k' P  } \phi(\delta) {\rm e}^{iz_{k+1}  L - i x_{k+1} P } \phi(\delta) \cdots
		{\rm e}^{i z_{2k-1}  L - i x_{2k-1} P } \phi(\delta) \Omega_\beta  \bigr) \|^2
		\\
		& = \mathcal{W}_{\beta}^{(2k-1)}(z_{2k-1}, x_{2k-1}, \ldots, z_{k+1}, x_{k+1},
		z_{k}, x_{k},  z_{k+1}, x_{k+1} \ldots , z_{2k-1}, x_{2k-1}) \;  
	\end{align*}
with $z_k= 2 z_k' $ and $x_k = 2 x_k'$. It follows that the vector valued function 
	\begin{align*}
	 & (t_k, x_k, \ldots  t_{n-1},  x_{n-1}) \mapsto \\
	&  \qquad 
	 {\rm e}^{it_{k}  L -  i x_kP  } \phi(\delta) {\rm e}^{it_{k+1}  L - i x_{k+1} P } \phi(\delta) \cdots
		{\rm e}^{i t_{2k-1}  L - i x_{2k-1} P } \phi(\delta) \Omega_\beta
	\end{align*}
can be analytically continued into the region 
	\begin{equation}
	\label{vectordomain}
		 (Q^- \cup Q^+)^{k} - i \lambda_k V_{\beta} \times \ldots \times \lambda_{2k-1} V_{\beta}    \; , 
	\end{equation}
with $\lambda_i >0$ and $\sum_{i=k}^{2k-1} \lambda_i =1/2$. 

On the other hand, 
by first applying Lemma \ref{L7} and then removing the approximations in a similar manner as above, one can establish the following identity:
	\begin{align} 
		\label{new-k-W}
		& \mathcal{W}_{\beta}^{(n-1)}(z_1, x_1, 
		\ldots , z_{n-1}, x_{n-1})
		\nonumber \\
		& = \bigl( {\rm e}^{\overline{i z_{k+1}'}L + i x_{k+1}' P  } \phi (\delta) {\rm e}^{\overline{i z_{k}}  L + i  x_k' P } \phi(\delta) \cdots
		{\rm e}^{\overline{i z_{j+1}} L + i x_{j+1}'  P } \phi(\delta) \Omega_\beta \, , \,
		 \\ 
		&  
		\;   {\rm e}^{iz_{k+1}''  L } \phi (\delta)
		 {\rm e}^{iz_{k+2}  L- i x_{k+2} P}   \cdots 	
		 {\rm e}^{i z_n  L - i x_{n} P } \phi (\delta) {\rm e}^{i  z_{1}  L- i x_{1} P }  \cdots {\rm e}^{i  z_{j-1}  L- i x_{j-1} P }  \phi (\delta)
		\Omega_\beta  \bigr)  .
		\nonumber
	\end{align}
where $x_n=  - x_{n-1}  - x_{n-2} - \ldots - x_1$. 

Clearly, there are $n-1$ different expressions for $\mathcal{W}_{\beta}^{(n-1)}$ which can be gained by 
repeated application of Lemma \ref{L7}.
		
\begin{theorem} \label{wightanal} 
The thermal Wightman functions $ { {\mathcal W}}_{\beta}^{(n-1)}$ introduced in~(\ref{nel}) are 
analytic in the product of domains 
	\begin{equation}
		\label{strongtube}
		(\lambda_1 {\cal T}_{\beta}) \times \cdots \times (\lambda_{n-1} {\cal T}_{\beta}), 
		\qquad  {\cal T}_{\beta} := \mathbb{R}^2 - i V_{\beta} , \qquad \sum_{j=1}^{n-1} \lambda_j = 1,
	\end{equation}
and $\lambda_j >0$, $j=1, \ldots, n-1$.  In fact, one can take the union over these $\lambda_j$'s. 
\end{theorem} 

\begin{proof} 
We recall that $\mathcal{W}_{\beta}^{(n-1)}$ is an analytic function in the domain \eqref{initialdomain}. 
Within the domain \eqref{initialdomain}, the Cauchy Schwarz inequality yields 
	\begin{align*} 
		& \left| \mathcal{W}_{\beta}^{(n-1)} (z_1, x_1 + i y_1, 
		\ldots , z_{n-1}, x_{n-1} + i y_{n-1} ) \right| \\
		& \quad  \le  \left\| {\rm e}^{\overline{i z_{k}'} L + y_{k}'P} \phi_{\beta}(\delta_0) {\rm e}^{\overline{i z_{k-1}} L + i (x_{k-1}- iy_{k-1})P}
		\cdots {\rm e}^{\overline{i z_{1}} L + i (x_{1}- iy_{1})P} \phi_{\beta}(\delta_0) \Omega_{\beta } \right\|  
		\\
		& \qquad   \times \left\|
		{\rm e}^{i z_{k}'' L - i (x_{k}+ iy_{k}'')P} \phi_{\beta}(\delta_0) \cdots {\rm e}^{i z_{n-1} L - i (x_{n-1}+ iy_{n-1})P}
		\phi_{\beta}(\delta_0)\Omega_{\beta} \right\| .
	\end{align*}
Here $y_k = y_k' + y_k''$ is split according to the same ratio as $z_k = z_k' + z_k''$.

As $L$ and $P$ are self-adjoint operators, the spectral theorem implies that 
the vector valued function 
	\[
		 (z_{k}'' , w_{k}'') \mapsto 
		 {\rm e}^{i z_{k}'' L - i w_{k}'' P} \phi_{\beta}(\delta_0) \cdots {\rm e}^{i z_{n-1} L - i (x_{n-1}+ iy_{n-1})P}
		\phi_{\beta}(\delta_0)\Omega_{\beta}  
	\]
is analytic in the domain $(z_{k}'' , w_{k}'') \in \mathbb{R}^2 + i \frac{\lambda_k}{2} V_\beta$ (as the norm of the vector is preserved 
by applying the unitary  ${\rm e}^{i \Re z_{k}'' L - i \Re w_{k}''P}$). 
And consequentely, the function $\mathcal{W}_{\beta}^{(n-1)}$ extends to an analytic function in the 
domain  
	\begin{align}
	\label{new-k-domain}
		& \Bigl( (Q^- \cup Q^+) - i \lambda_1 V_{\beta}  \Bigr) \times \ldots \times \Bigl( (Q^- \cup Q^+) - i \lambda_{k-1} V_{\beta}  \Bigr) 
		  \\
		& \; \;  \times
		\Bigl( \mathbb{R}^2 - i \lambda_k V_{\beta}  \Bigr) \times
		\Bigl( (Q^- \cup Q^+) - i \lambda_{k+1} V_{\beta}  \Bigr) \times \ldots \times \Bigl( (Q^- \cup Q^+) - i \lambda_{n-1} V_{\beta}  \Bigr),
		\nonumber
	\end{align}
with $\lambda_i >0$ and $\sum_{i=1}^{n-1} \lambda_i =1$. Using the expression \eqref{new-k-W}
one can replace $k$ by $k+1$ in \eqref{new-k-domain}.
Iterating this procedure (eventually renaming the variables) and applying Hartogs' theorem \cite[p.~30]{V}
one concludes that ${\mathcal W}_{\beta}^{(n-1)}$ is analytic in the domain \eqref{strongtube}.
\end{proof}

\subsection{Temperedness of the thermal Wightman distributions}
\label{sectwight}

For a quantum system confined to a  box, Ruelle~\cite{Rue1}\cite{Rue2} has used a
H\"older inequality, which applies to the trace in the Gibbs state. It was pointed out by Fr\"ohlich \cite{Fr1} that
this H\"older inequality is crucial in the present context.  

\begin{theorem}[H\"older inequality]
\label{hoelder} Let $\omega_\beta$ be a $(\tau,\beta)$-KMS state over a von Neumann algebra ${\mathcal R}$.
Define for, $p \in \mathbb{N}$  and $A \in {\mathcal R}^+$, 
	\[
		\| A \|_{p} :=  \omega_\beta \bigl(\underbrace{ {\rm e}^{it L/p } A \cdots  
		{\rm e}^{it L/p }A }_{p \textrm{ times}} \bigr)_{\upharpoonright t=i \beta}^{1/p} \; .
	\]
Let $(z_1, \ldots, z_n)\in \mathbb{C}^n$ with $0 \le \Re z_j$, $\sum_{j=1}^m \Re z_j \le 1/2 $ and $\sum_{j=m+1}^n \Re z_j
\le 1/2 $, and let $p_j$ be the smallest, positive integer such that
$\frac{1}{p_j}\le \min\{\Re z_{j+1},\Re z_j\}$, with $z_{n+1}=z_n$ and $z_0=z_1$. Then 
	\begin{align}
		\label{crucialbound}
		\Bigl| \omega_\beta \bigl( A_n  {\rm e}^{it_n \beta L } \cdots A_1{\rm e}^{it_1 \beta L}
			A_0 \bigr)_{\upharpoonright t_j=iz_j} \Bigr| 
		\le \| A_0 \|_{p_0} \cdots \| A_n \|_{p_n}  
	\end{align}
for all $A_0, \ldots, A_n \in {\mathcal R}^+$. (The subscript $\upharpoonright t_j=iz_j$ indicates the analytic continuation 
from $t_j$ to $iz_j$, $j=1, \ldots, n$.)
\end{theorem}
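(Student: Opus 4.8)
The plan is to recognise $\|\,\cdot\,\|_p$ as the non-commutative $L^p$-norm attached to the state $\omega_\beta$ and to read~(\ref{crucialbound}) off from the H\"older inequality for these spaces. First I would make the right-hand side precise: for $p\in\nn$ even and $A\in\cR^+$ the function $t\mapsto\omega_\beta\bigl((\e^{itL/p}A)^p\bigr)$, defined for $t\in\rr$, extends by the KMS analyticity of $\omega_\beta$ (the same mechanism that produces the analytic functions $F_{h_1,\ldots,h_n}$ in~(\ref{KMSWEYL})) to a function holomorphic in the strip $\{\,0<\Im t<\beta\,\}$ and continuous on its closure; writing $\Delta$ for the modular operator of $(\cR,\Omega_\beta)$, normalised so that $\Delta^{it}=\e^{-i\beta tL}$ (in particular $L\Omega_\beta=0$), a cyclic rearrangement based on the KMS condition exhibits its value at $t=i\beta$ as the $p$-th power of the Araki--Masuda $L^p(\cR,\Omega_\beta)$-norm of $A$. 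Thus $\|A\|_p$ is well defined and non-negative and coincides with the normalised $L^p(\omega_\beta)$-norm of $A$; for instance $\|A\|_2^{2}=\omega_\beta\bigl(A\,\tau_{i\beta/2,0}(A)\bigr)$ is a positive quadratic form on $\cR^+$, $\|\,1\,\|_p=1$, and $r\mapsto\|A\|_r$ is non-decreasing.

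Next I would show that
\[
F(\zeta_1,\ldots,\zeta_n):=\omega_\beta\bigl(A_n\,\e^{it_n\beta L}\cdots A_1\,\e^{it_1\beta L}A_0\bigr)\big|_{t_j=i\zeta_j}
\]
extends to a bounded holomorphic function on the tube
\[
T:=\bigl\{(\zeta_1,\ldots,\zeta_n)\in\cc^n \;\big|\; \Re\zeta_j>0,\ \textstyle\sum_{j=1}^{m}\Re\zeta_j<\frac{1}{2},\ \sum_{j=m+1}^{n}\Re\zeta_j<\frac{1}{2}\bigr\},
\]
continuous on $\overline{T}$. At $t_j=i\zeta_j$ the factor $\e^{it_j\beta L}$ becomes $\Delta^{\zeta_j}$, and one argues as usual in Tomita--Takesaki theory: $\Delta^{is}$ is unitary, $\cR\Omega_\beta$ (resp.~$\cR'\Omega_\beta$) lies in the domain of $\Delta^{1/2}$ (resp.~$\Delta^{-1/2}$), and elements of $\cR$ analytic for the modular automorphism group are dense, which allows one to separate the modular powers; the two sum-constraints express that, starting from either end of the operator product, one may move by at most a half turn ($\le\beta/2$) of imaginary modular time, and this is the geometric meaning of the splitting index~$m$.

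Finally I would derive~(\ref{crucialbound}) from H\"older's inequality. One rewrites $F(z_1,\ldots,z_n)$, cyclically, as the canonical trace functional ${\rm tr}_{\omega_\beta}$ evaluated on a product $X_0X_1\cdots X_n\in L^1(\omega_\beta)$, in which $X_j$ is $A_j$ carrying modular powers $\Delta^{a_j}(\,\cdot\,)\Delta^{b_j}$ absorbed from the two neighbouring factors; the total modular weight equals $1$, and choosing the splitting $a_j=b_j=1/(2p_j)$ makes each $X_j$ the symmetrically embedded image of $A_j$ in $L^{p_j}(\omega_\beta)$, so that $\|X_j\|_{L^{p_j}}=\|A_j\|_{p_j}$. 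The defining property of $p_j$ --- the smallest even integer with $1/p_j\le\min\{\Re z_{j+1},\Re z_j\}$ --- is exactly what guarantees that each gap between consecutive operators can supply the requested $1/(2p_j)+1/(2p_{j-1})$ units of weight (since $1/p_k\le\Re z_k$ and $1/p_k\le\Re z_{k+1}$ for the relevant $k$), so that this splitting is admissible; H\"older's inequality for $L^p(\omega_\beta)$, together with the monotonicity $\|A_j\|_{L^{q}}\le\|A_j\|_{L^{p_j}}$ for $q\le p_j$ used to absorb any residual weight, then gives $|F|\le\prod_j\|A_j\|_{p_j}$. For a general von Neumann algebra one may instead bypass the abstract $L^p$-machinery and argue by freezing all variables but one: applying Hadamard's three-lines theorem in $\zeta_j$ over the admissible strip, on whose two edges $\Re\zeta_j$ attains the value at which a neighbouring modular gap becomes fully available, $A_j$ can be symmetrically flanked by $\Delta^{1/(2p_j)}$ and a single Cauchy--Schwarz step factors the boundary correlation as $\|A_j\|_{p_j}$ times a correlation of the same type with one fewer argument; induction on $n$ and the three-lines estimate then carry the bound to the interior point, and a final limit treats $\Re z_j$ on the boundary of the admissible region.

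I expect the main difficulty to be the combinatorial bookkeeping of the modular weights: one must verify that peeling off one factor leaves a shortened correlation still of the admissible form, so that the induction closes; that the residual weights are always accommodated within the $L^{p_j}$-norms without losing the constant; and that all the formal trace manipulations survive in the type-III setting --- where they must be read as genuine domain estimates for the unbounded operators $\Delta^{\zeta}$ and as the positivity of the quadratic forms established in the first step.
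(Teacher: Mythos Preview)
Your proposal is correct and matches the paper's approach: the paper does not actually prove this result but simply states that ``the proof of this result relies on the theory of non-commutative $L^p$-spaces and is given in \cite{RJ}.'' Your sketch---identifying $\|A\|_p$ with the Araki--Masuda $L^p(\omega_\beta)$-norm, then distributing the modular weights $\Delta^{\zeta_j}$ among the factors so that H\"older's inequality (or, equivalently, an iterated three-lines argument) applies---is precisely the content of that reference, and in fact spells out more of the mechanism than the present paper does.
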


\begin{proof} The proof of this results relies on the theory of non-commutative $L^p$-spaces and is given in 
\cite{RJ}. 
\end{proof}

Note that because of the time-invariance of the KMS state, the r.h.s.~in (\ref{crucialbound}) does not depend on 
$\Im z_i$, $i= 1, \ldots, n$. 

\begin{proposition} 
\label{prop2}
For  $0\le \epsilon \le 1$ fixed there exist constants $c_1, c_2 >0$ such that
	\begin{equation}
		\label{new-phi-bound}
		\pm \phi_C (g) \le  c_1 \, \| g \|_{H^{  - \frac{1}{2}-\frac{\epsilon}{2} } (S_{\beta})} (H_C + c_2)^{\frac{1}{2}+\epsilon}   
	\end{equation}
for all $g \in H^{  - \frac{1}{2}-\frac{\epsilon}{2}  }(S_{\beta})$.
\end{proposition}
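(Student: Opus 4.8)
The plan is to obtain (\ref{new-phi-bound}) by complex interpolation of the two Glimm--Jaffe bounds (\ref{Fockcontinuity1}) and (\ref{Fockcontinuity2}), regarded as statements about the bilinear map $(g,\Psi)\mapsto\phi_C(g)\Psi$. Since $c\gg1$ may be chosen so that $H_C+c\ge1$, the bound (\ref{Fockcontinuity2}) implies the weaker one $\pm\phi_C(g)\le{\tt C}\,\|g\|_{H^{-1}(S_\beta)}(H_C+c)^{3/2}$, and it is this bound together with (\ref{Fockcontinuity1}) --- with powers $\tfrac32$ and $\tfrac12$ --- that I would interpolate. For $s\ge0$ write $\cH_C^{s}$ for $\cD\bigl((H_C+c)^{s}\bigr)$ equipped with the norm $\|(H_C+c)^{s}\,\cdot\,\|$, and $\cH_C^{-s}$ for the completion of $\cH_C$ in $\|(H_C+c)^{-s}\,\cdot\,\|$. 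Because $\phi_C(g)$ is self-adjoint, a form bound $\pm\phi_C(g)\le c'(H_C+c)^{\theta}$ on $\cD\bigl((H_C+c)^{\theta/2}\bigr)$ is equivalent, by polarisation, to $\phi_C(g)$ extending to a bounded operator $\cH_C^{\theta/2}\to\cH_C^{-\theta/2}$ of norm $\le c'$. Thus (\ref{Fockcontinuity1}) reads $\|\phi_C(g)\|_{\cH_C^{1/4}\to\cH_C^{-1/4}}\le{\tt C}\,\|g\|_{H^{-1/2}(S_\beta)}$ and the weakened (\ref{Fockcontinuity2}) reads $\|\phi_C(g)\|_{\cH_C^{3/4}\to\cH_C^{-3/4}}\le{\tt C}\,\|g\|_{H^{-1}(S_\beta)}$.

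Next I would apply bilinear complex interpolation (the Calder\'on--Lions extension of the Stein interpolation theorem, cf.~\cite{RS}) to the bilinear map $B\colon(g,\Psi)\mapsto\phi_C(g)\Psi$: from the two endpoint bounds it follows that
\[
B\colon\bigl[H^{-1/2}(S_\beta),H^{-1}(S_\beta)\bigr]_\epsilon\times\bigl[\cH_C^{1/4},\cH_C^{3/4}\bigr]_\epsilon\longrightarrow\bigl[\cH_C^{-1/4},\cH_C^{-3/4}\bigr]_\epsilon
\]
is bounded with norm $\le{\tt C}^{1-\epsilon}{\tt C}^{\epsilon}={\tt C}$. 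Each of these three Banach couples is, after the appropriate spectral diagonalisation, a couple of weighted $\ell^2$- (resp.~$L^2$-) spaces --- the Sobolev spaces $H^{s}(S_\beta)$ via the Fourier modes of $\nu$, the spaces $\cH_C^{\pm s}$ via the spectral measure of $H_C$ with weight $(\lambda+c)^{2s}$ --- so the interpolation spaces are the usual ones: $[H^{-1/2}(S_\beta),H^{-1}(S_\beta)]_\epsilon=H^{-\frac12-\frac\epsilon2}(S_\beta)$, $[\cH_C^{1/4},\cH_C^{3/4}]_\epsilon=\cH_C^{1/4+\epsilon/2}$ and $[\cH_C^{-1/4},\cH_C^{-3/4}]_\epsilon=\cH_C^{-1/4-\epsilon/2}$.

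Inserting these identifications, $\phi_C(g)$ is bounded $\cH_C^{1/4+\epsilon/2}\to\cH_C^{-1/4-\epsilon/2}$ with norm $\le{\tt C}\,\|g\|_{H^{-\frac12-\frac\epsilon2}(S_\beta)}$, which, running the equivalence of the first paragraph backwards with $\theta/2=\tfrac14+\tfrac\epsilon2$, i.e.\ $\theta=\tfrac12+\epsilon$, is exactly
\[
\pm\phi_C(g)\le{\tt C}\,\|g\|_{H^{-\frac12-\frac\epsilon2}(S_\beta)}\,(H_C+c)^{\frac12+\epsilon}
\]
on $\cD\bigl((H_C+c)^{\frac14+\frac\epsilon2}\bigr)$, first for $g$ ranging over a dense set such as the trigonometric polynomials on $S_\beta$; since $H^{-\frac12-\frac\epsilon2}(S_\beta)\hookrightarrow H^{-\frac12}(S_\beta)$ for $\epsilon\ge0$, estimate (\ref{Fockcontinuity1}) then permits the passage to the limit for all $g\in H^{-\frac12-\frac\epsilon2}(S_\beta)$, giving (\ref{new-phi-bound}) with $c_1={\tt C}$ and $c_2=c$. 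The step I expect to require the most care is the operator-form reformulation in the first paragraph together with the identification of the interpolation spaces: one must verify that the form bounds (\ref{Fockcontinuity1}), (\ref{Fockcontinuity2}) genuinely upgrade to boundedness of $\phi_C(g)$ between the scaled spaces $\cH_C^{\pm s}$, and that $[\cH_C^{a},\cH_C^{b}]_\epsilon=\cH_C^{(1-\epsilon)a+\epsilon b}$ for the \emph{interacting} Hamiltonian $H_C$. The latter, however, is standard --- via the spectral measure of $H_C$ the space $\cH_C^{s}$ is isometric to a weighted space $L^2\bigl(\sigma(H_C),(\lambda+c)^{2s}\,d\mu\bigr)$, and complex interpolation of weighted $L^2$-spaces is classical --- so that the bilinear interpolation and the bookkeeping of exponents are the only remaining, routine, steps.
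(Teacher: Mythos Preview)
Your interpolation argument is correct and gives a genuinely different proof from the paper's.  The only slip is the embedding direction in the last paragraph: for $\epsilon\ge0$ one has $H^{-1/2}(S_\beta)\hookrightarrow H^{-1/2-\epsilon/2}(S_\beta)$, not the other way round.  This does not matter for the argument, since the interpolated bound itself (for trigonometric polynomials) already controls $\phi_C(g_n-g_m)$ in the $\cH_C^{1/4+\epsilon/2}\to\cH_C^{-1/4-\epsilon/2}$ norm by $\|g_n-g_m\|_{H^{-1/2-\epsilon/2}}$, so the extension to all of $H^{-1/2-\epsilon/2}(S_\beta)$ goes through without invoking (\ref{Fockcontinuity1}) again.

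The paper proceeds quite differently.  Rather than interpolating between the two $H_C$--bounds, it first establishes the fractional bound with respect to the \emph{free} Hamiltonian $H_0=\d\Gamma(\nu)$: following Rosen, one shows by a direct Fock-space estimate on $n$-particle vectors that $(H_0+1)^{-1/4-\epsilon/2}\phi_C(\nu^{1/2+\epsilon/2}g)(H_0+1)^{-1/4-\epsilon/2}$ is bounded uniformly for $\|g\|_2\le1$.  The passage from $H_0$ to $H_C$ is then made via the first-order estimate $(H_0+1)\le c_3(H_C+c_2)$ together with operator monotonicity of $\lambda\mapsto\lambda^\alpha$.  Your route is shorter and more conceptual, leveraging the two endpoint bounds already recorded as (\ref{Fockcontinuity1}) and (\ref{Fockcontinuity2}) and the standard identification of complex interpolates of weighted $L^2$-scales; the paper's route is more explicit and self-contained, avoiding the multilinear interpolation machinery at the cost of redoing a Fock-space calculation.
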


\begin{proof} Set $H_0 = {\rm d} \Gamma (\nu)$. It is sufficient to prove that  
	\[
		A (g) \equiv (H_0+ 1)^{-\frac{1}{4} -  \frac{\epsilon}{2} } \phi_C (\nu^{\frac{1}{2} 
		+ \frac{\epsilon}{2} } g) (H_0+ 1)^{-\frac{1}{4} -  \frac{\epsilon}{2}}   
	\]
is a bounded operator on Fock space, uniformly bounded for $\|g \|_2 \le 1$. The first order estimate (see, \emph{e.g.},  \cite[Equ. (2.21)]{ros})
	\[
		(H_0 +1)   \le c_3   (H_C +c_2)   \quad \textrm{for } c_2, c_3 \gg 1
	\]
and operator monotonicity of the map 
$\lambda \mapsto \lambda^\alpha$ for $0\le \alpha \le 1$ (see, \emph{e.g.}, \cite[Example 4.6.46]{KR}) then ensure 
the fractional $\phi$-bound (\ref{new-phi-bound}).

We now follow ideas of Rosen (see, \emph{e.g.},  \cite[Proof of Lemma 6.2]{ros}). We show that $A(g)$ is a bounded bilinear form in Fock space. 
The desired operator extension then follows from the Riesz representation theorem. It is sufficient to show that, for $\|g \|_2\le 1$,
	\[
		| ( \Phi, A(g) \Psi ) | \le c_4 \| \Phi \| \cdot \| \Psi \| \, , 
	\]
for $\Phi, \Psi$ arbitrary vectors on Fock space and $c_4 > 0$ a constant. Now (see (\ref{creat-ann}))
	\begin{align*}
		| ( \Phi, A(g) \Psi ) | 
		&\le{1 \over \sqrt{2}} \Bigl( | ( (H_0+ 1)^{-\frac{1}{4} -  \frac{\epsilon}{2} } \Phi, 
			a^{*}  (\nu^{\frac{\epsilon}{2} } g) (H_0+ 1)^{-\frac{1}{4} -  \frac{\epsilon}{2}}  \Psi ) |
		\nonumber \\
		& \qquad \quad+| ( a^{*}  (\nu^{\frac{\epsilon}{2} } \bar g)(H_0+ 1)^{-\frac{1}{4} -  \frac{\epsilon}{2} } \Phi, 
		(H_0+ 1)^{-\frac{1}{4} -  \frac{\epsilon}{2}} \Psi ) |  \Bigr) \; . 
	\end{align*}
Since $H_C^{(0)}$ commutes with the number operator, and both terms are of the same structure, it is sufficient to prove that
for $\Phi_n \in \mathcal{H}_C^{(n)} $ and $\Psi_{n-1} \in \mathcal{H}_C^{(n-1)} $  with $\| \Phi_n \|, \| \Psi_{n-1} \|\le 1$ 
	\begin{align*}
		&| ( (H_C^{(0)}+ 1)^{-\frac{1}{4} -  \frac{\epsilon}{2} } \Phi_n , a^{*}  (\nu^{\frac{\epsilon}{2}} g) 
		(H_C^{(0)}+ 1)^{-\frac{1}{4} -  \frac{\epsilon}{2}} \Psi_{n-1} ) |
		\nonumber \\
		&\qquad \qquad \le \|  (n+ 1)^{-\frac{1}{4} } \Phi_n \|   \cdot  
			\| a^{*}  (\nu^{\frac{\epsilon}{2}} g) (H_C^{(0)}+ 1)^{-\frac{1}{4} - \frac{\epsilon}{2}}  \Psi_{n-1}    \|   
		\label{eq:fockcalc}
	\end{align*}
is uniformly bounded in $n$.
For simplicity it is assumed 
(in the second inequality below) 
that the mass $m\ge 1$, so that 
$\sum_{i=1}^{n-1} \nu (k_i) + 1\ge n$; 
otherwise one is left with yet another $n$-independent constant. Now 
	\begin{align*}
		&  (n+ 1)^{-1/2} \, \|\Phi_n\|^2 \, \|  a^{*}  (\nu^{\frac{\epsilon}{2} } g) 
		(H_C^{(0)}+ 1)^{-\frac{1}{4} - \frac{\epsilon}{2}} \Psi_{n-1}    \|^2
		\\
		&\quad   
		\le \frac{n}{(n+1)^{1/2}} \, \|\Phi_n\|^2 
		\\
		&
		\qquad \quad
		 \times \frac{1}{\beta^n}\int \prod_{j=1}^n \frac{{\rm d} k_j}{\nu(k_j)}    \Biggl| 
                \frac{\nu (k_n)^{ \frac{\epsilon}{2}}  \, \tilde g(k_n) }
                {\left(\sum_{i=1}^{n-1} \nu (k_i) + 1 \right)^{1/4+\epsilon/2}} 
		\,  \widetilde \Psi_{n-1}  (k_1, k_2, \ldots k_{n-1})    \Biggr|^2
		\\
		&\quad  \le  
		\frac{n}{(n+1)^{1/2}n^{1/2} } 
		\, \|\Phi_n\|^2 
		\\
		&
		\qquad \quad
		\times 
		\frac{1}{\beta^n} 
		\int \prod_{j=1}^n \frac{{\rm d} k_j}{\nu(k_j)}  
		 \Biggl| 
		\left( \frac{\nu (k_n)  }{\sum_{i=1}^{n-1} \nu (k_i) + 1} 
		\right)^{ \frac{\epsilon}{2}}
		\tilde g		(k_n)\,  
		\widetilde \Psi_{n-1} 
		 (k_1, k_2, \ldots k_{n-1})   \Biggr|^2 \nonumber
		\\
		&\quad   \le \| g \|_2^2 \, \|\Phi_n\|^2 \, \| \Psi_{n-1} \|^2 \;,  
	\end{align*}
which establishes the claim.   
\end{proof}

The Euclidean time zero field $ \phi (0, h) \in L^p (Q, \Sigma, {\rm d} \mu)$, $1 \le p < \infty$, can be approximated 
by a sequence of  functions in $L^\infty (Q, \Sigma, {\rm d} \mu)$. The latter can be decomposed in their positive and negative 
part. Define, for $h \in {\mathscr S}_{\mathbb{R}}(\mathbb{R})$ and $\alpha \in [0, \beta)$,  
        \begin{align}
        \phi_\pm^{(\ell)} (\alpha, h) & = \begin{cases} \pm \phi (0, h) & \text{if} \quad 0 \le  \pm \phi (\alpha, h)  \le \ell \\
        0 & \text{otherwise} \; .
        \end{cases} 
        \end{align}
It follows from \cite[Lemma~3.5]{Seiler-Simon} that $\phi_+^{(\ell)} (0, h) - \phi_-^{(\ell)} (0, h)$ converges to $ \phi (0, h)$ as
$\ell \to \infty$ in any $L_q$-norm with $q <p$. We can use this result to define approximations for the thermal time-zero 
field $\phi_\beta(h)$: set 
        \begin{align}
	\label{cutoff-field}
	\phi_\ell^\pm (h) := \pi_\beta \bigl( \phi_\pm^{(\ell)} (0, h) \bigr)
        \end{align}
and $\phi_\ell^\pm (t, h) :={\rm e}^{i t L} \phi_\ell^\pm (h) {\rm e}^{- i t L} $, $t \in \mathbb{R}$. 

\begin{lemma} 
\label{Lm7new}
For $h \in {\mathscr S}_{\mathbb{R}}(\mathbb{R})$ and $p \in \mathbb{N}$ even,
the expressions
\begin{equation}
\label{newbound}
\begin{array}{rl}
& |\kern -.5mm |\kern -.5mm | \, h  |\kern -.5mm |\kern -.5mm |_{p }  
 :=  \max \,  \Bigl\{   
 \lim_{\ell \to \infty} \| \phi^+_\ell (t, h) \|_p \, , \, \lim_{\ell \to \infty}  
 \| \phi^-_\ell (t, h) \|_p  \Bigr\}
\end{array} 
\end{equation}
are bounded  from above by $ \sqrt[p]{p!}\cdot |h|_{\cal S} $, for some Schwartz norm $|\, . \, |_{\cal S} $. 
\end{lemma}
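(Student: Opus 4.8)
The plan is to rewrite $|||h|||_p$ as a Euclidean correlation function, dominate it by a moment of the sharp-time Euclidean field $\phi(0,h)$ under~$\d\mu$, and then estimate that moment with the help of the fractional $\phi$-bound of Proposition~\ref{prop2}. Since $\omega_\beta$ is $\tau$-invariant one has $\|\phi_\beta(t,h)_\pm\|_p=\|\phi_\beta(0,h)_\pm\|_p$, so it suffices to treat $\phi_\beta(h):=\phi_\beta(0,h)$. By the definition~(\ref{stone}) and the identity $\pi_\beta(A)\cV(F)=\cV(AF)$, the operator $\phi_\beta(h)$ is multiplication by the Euclidean random variable $\phi(0,h)\in\bigcap_{1\le q<\infty}L^{q}(Q,\Sigma,\d\mu)$, hence $\phi_\beta(h)_\pm$ is multiplication by $\phi(0,h)_\pm$ (the formula defining $\|\,\cdot\,\|_p$ extending to these unbounded positive affiliated operators by spectral truncation and monotone convergence). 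Continuing $t\to i\beta$ in the definition of $\|\,\cdot\,\|_p$ from Theorem~\ref{hoelder}, using $L\Omega_\beta=0$ (Proposition~\ref{neu}(i)) to absorb one factor $\e^{-\beta L/p}$, and running the Osterwalder--Schrader reflection identity of Proposition~\ref{neu}(iii) backwards (it holds verbatim with the bounded time-zero factors replaced by the mutually commuting, $\Sigma_{\{0\}}$-measurable, positive operators $\phi(\gamma_j,h)_\pm$), one obtains
\[
\|\phi_\beta(h)_\pm\|_p^{\,p}=\int_Q\prod_{j=1}^{p}\phi(\gamma_j,h)_\pm\;\d\mu, \qquad \gamma_j:=(j-1)\beta/p\in S_\beta ,
\]
which is finite by Lemma~\ref{pos}(i). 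Since $0\le\phi(\gamma_j,h)_\pm\le|\phi(\gamma_j,h)|$ pointwise, H\"older's inequality for the probability measure $\d\mu$ together with the rotation invariance of $\d\mu$ gives $|||h|||_p^{\,p}\le\int_Q\prod_{j=1}^p|\phi(\gamma_j,h)|\,\d\mu\le\int_Q|\phi(0,h)|^{\,p}\,\d\mu$. It therefore remains to prove $\int_Q|\phi(0,h)|^{\,p}\,\d\mu\le p!\,|h|_{\cal S}^{\,p}$ for a suitable Schwartz seminorm $|\,\cdot\,|_{\cal S}$.

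To estimate this moment I would pass to the circle model, exactly as in the proof of Lemma~\ref{pos}. Write $\phi(0,h)=\lim_{k\to\infty}\phi(\delta_k\otimes h)$ in $\bigcap_{1\le q<\infty}L^{q}(Q,\Sigma,\d\mu)$; since $p$ is even,
\[
\int_Q\phi(\delta_k\otimes h)^{\,p}\,\d\mu=(-i)^{\,p}\frac{\d^{p}}{\d\lambda^{p}}\Bigl(\Omega_C,\,W_{[-\infty,\infty]}(\lambda\,\delta_k\otimes h)\,\Omega_C\Bigr)\Big|_{\lambda=0},
\]
with $W$ the solution of the heat equation $\frac{\d}{\d b}W_{[a,b]}(f)=W_{[a,b]}(f)(-H_C+i\phi_C(f_b))$ and $f_b=h(b)\delta_k$. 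Expanding $W$ by its Duhamel series and differentiating $p$ times, this becomes $p!\int_{b_1\le\cdots\le b_p}(\prod_{j}h(b_j))\,(\Omega_C,\phi_C(\delta_k)\e^{-(b_2-b_1)H_C}\phi_C(\delta_k)\cdots\phi_C(\delta_k)\Omega_C)\prod_j\d b_j$, the end factors $\e^{-(b_1-a)H_C}$ and $\e^{-(b-b_p)H_C}$ projecting onto $\Omega_C$ as $a\to-\infty$, $b\to+\infty$ since $H_C\ge0$ has nondegenerate ground state $\Omega_C$.

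The key estimate of this integrand is where Proposition~\ref{prop2} enters. Fix $\epsilon\in(0,\frac12)$ (one may take $\epsilon=\frac14$) and set $s:=\frac14+\frac{\epsilon}{2}$, so that $2s=\frac12+\epsilon<1$. By Proposition~\ref{prop2}, $B_k:=(H_C+c_2)^{-s}\phi_C(\delta_k)(H_C+c_2)^{-s}$ is bounded with $\|B_k\|\le c_1\|\delta_k\|_{H^{-1/2-\epsilon/2}(S_\beta)}$, and the latter stays bounded as $k\to\infty$ because $\delta\in H^{-1/2-\epsilon/2}(S_\beta)$. Inserting pairs $(H_C+c_2)^{\pm s}$ around each $\phi_C(\delta_k)$, collapsing the two outermost factors via $H_C\Omega_C=0$, and using
\[
\bigl\|(H_C+c_2)^{s}\e^{-\Delta H_C}(H_C+c_2)^{s}\bigr\|\le C'\bigl(\Delta^{-(1/2+\epsilon)}+1\bigr),\qquad\Delta\ge0,
\]
one obtains $|(\Omega_C,\phi_C(\delta_k)\e^{-\Delta_1 H_C}\cdots\Omega_C)|\le K_1^{\,p}\prod_{j=1}^{p-1}(\Delta_j^{-(1/2+\epsilon)}+1)$, $\Delta_j=b_{j+1}-b_j$, with $K_1$ independent of $p$ and of $k$. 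Because $1/2+\epsilon<1$ the singularities are integrable; integrating $b_1,\dots,b_{p-1}$ one variable at a time, using $\int_0^1 u^{-(1/2+\epsilon)}\,\d u<\infty$ and $\|h\|_1,\|h\|_\infty\le C\sup_x(1+x^2)|h(x)|=:C|h|_0$, gives $\int_{b_1\le\cdots\le b_p}(\prod_j|h(b_j)|)\prod_j(\Delta_j^{-(1/2+\epsilon)}+1)\prod_j\d b_j\le(K_2|h|_0)^{p}$. Combining the three displays and letting $k\to\infty$ yields $\int_Q|\phi(0,h)|^{\,p}\,\d\mu\le p!\,(K|h|_0)^{p}$ for a $p$-independent $K$; taking $|\,\cdot\,|_{\cal S}:=K|h|_0$ completes the argument.

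The main obstacle — and the reason Proposition~\ref{prop2} is indispensable here — is producing a genuine \emph{Schwartz} seminorm on the right. Feeding the first-order bound~(\ref{Fockcontinuity2}) into \cite[Lemma~A.8]{GeJ2}, as is done in Lemma~\ref{pos}, would give an extra factor $\e^{\|h\|_1/\|h\|_\infty}$; this quantity is homogeneous of degree zero but is \emph{not} dominated by any Schwartz seminorm (it detects the ``width'' of $h$). Lowering the exponent from $1$ to $\frac12+\epsilon<1$ replaces the non-integrable coincident-time singularity by the integrable $\Delta^{-(1/2+\epsilon)}$, and that exponential disappears. A smaller technical point to be careful about is that the Osterwalder--Schrader identities of Proposition~\ref{neu} and Lemma~\ref{pos} have to be invoked for the unbounded positive operators $\phi(\gamma_j,h)_\pm$, which are affiliated to, rather than elements of, the abelian time-zero algebra.
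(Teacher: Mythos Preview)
Your argument is correct. The first half --- passing from $\|\phi_\beta(0,h)_\pm\|_p$ to the Euclidean integral $\int_Q\prod_j\phi(\gamma_j,h)_\pm\,\d\mu$ and then, via H\"older and rotation invariance, to $\int_Q\phi(0,h)^p\,\d\mu$ --- is exactly what the paper does (including the observation that disjointness of the supports of $\phi(0,h)_\pm$ makes $\int\phi_\pm^p\,\d\mu\le\int\phi^p\,\d\mu$ for even $p$).

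Where you diverge from the paper is in the estimate of the $p$-th moment itself. The paper does not unfold the Duhamel series; instead it invokes \cite[Lemma~A.7]{GeJ2}, which, fed with the fractional $\phi$-bound~(\ref{new-phi-bound}), yields an exponential bound $|\int\e^{i\lambda\phi(\delta_k\otimes h)}\,\d\mu|\le\e^{c'|\Im\lambda|^\gamma\|r_k\|_\gamma^\gamma}$ with $\gamma=(\tfrac12-\epsilon)^{-1}>2$. Cauchy's inequality on a circle of radius $R$ then gives $\int\phi(0,h)^p\,\d\mu\le p!\,R^{-p}\e^{c'R^\gamma\|r\|_\gamma^\gamma}$, and optimising over $R$ produces the factor $p!$ together with a Schwartz seminorm (here the $L^\gamma$-norm of $h$). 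Your route --- inserting $(H_C+c_2)^{\pm s}$ around each $\phi_C(\delta_k)$ and using $\|(H_C+c_2)^s\e^{-\Delta H_C}(H_C+c_2)^s\|\lesssim\Delta^{-(1/2+\epsilon)}+1$ to integrate out the simplex one variable at a time --- is essentially what lies \emph{inside} \cite[Lemma~A.7]{GeJ2}, made explicit. It has the advantage of being self-contained and of showing transparently why $\tfrac12+\epsilon<1$ is the threshold (integrability of the coincident-time singularity), at the cost of a slightly longer computation. The paper's packaging via the generating functional and Cauchy's estimate is shorter once the cited lemma is granted, and it makes the factor $p!$ appear in one line. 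Both are legitimate and rest on the same analytic input, Proposition~\ref{prop2}.
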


\begin{proof}
Set $\phi_\pm (\alpha,h)= \lim_{\ell \to \infty}\phi_\pm^{(\ell)} (\alpha, h)$, $\alpha \in [0, \beta)$. 
We have 
	\begin{align} 
		\label{eq:hnormhoelder}
		|\kern -.5mm |\kern -.5mm | \, h  |\kern -.5mm |\kern -.5mm |_{p }^p  
			&
			=  \max    \left\{ \lim_{\ell \to \infty} \int_{Q} \prod_{k=1}^{p } \phi_{+}^{(\ell)} \!  
			\left( \frac{k\beta}{p}, h \right) \, {\rm d}\mu \, , \, 
						\lim_{\ell \to \infty} \int_{Q} \prod_{k=1}^{p } \phi_{-}^{(\ell)} \!  
						\left( \frac{k\beta}{p}, h \right) \, {\rm d}\mu \right\} 
			\nonumber \\
			&=  \max    \left\{ \int_{Q} \prod_{k=1}^{p } \phi_{+} \!  
			\left( \frac{k\beta}{p}, h \right) \, {\rm d}\mu \, , \, 
						\int_{Q} \prod_{k=1}^{p } \phi_{-} \!  
						\left( \frac{k\beta}{p}, h \right) \, {\rm d}\mu \right\} 
			\nonumber \\
			&\le \max  \left\{  \int_Q  \phi_{+} \left(0,  h \right)^{p} {\rm d} \mu \, \, , \, 
			\int_Q  \phi_{-} \left(0,  h \right)^{p} {\rm d} \mu \right\} .
	\end{align}
In the first line we have used the definition of the norm $\| \cdot \|_p \,  $, in the third line we have used 
the H\"older inequality for $L^p(Q,\Sigma,{\rm d}\mu)$ and  translation invariance of ${\rm d} \mu$.
  
Since $p\in\mathbb{N}$ is even and  the supports of $\phi_{+}(0,h)$ and $\phi_{-}(0,h)$ are disjoint, 
	\begin{align}
		\int_Q  \phi_{\pm} \left(0,  h \right)^{p} {\rm d} \mu&\le \int_{{\rm supp} \,  \phi_{+}(0,h) }
		\phi_{+} (0,h)^p \, {\rm d} \mu +\int_{{\rm supp} \,  \phi (0,h)_{-}} \phi_{-} (0,h)^p \, 
		{\rm d} \mu\nonumber\\
		&= \int_Q\phi(0,h)^p \, {\rm d} \mu \, . \label{eq:phipolardec}
	\end{align}
Taking advantage of the fractional $\phi$-bound (\ref{new-phi-bound}) we can 
apply~\cite[Lemma A.7, p.~167]{GeJ2}, which states that there is a constant $c >0$ such that
	\begin{equation} \label{limitineq}
		\left| \int {\rm d} \mu \; {\rm e}^{ i \lambda \phi \left( \delta_{k} \otimes h \right) } \right|
		\le {\rm e}^{ c  |\Im \lambda|^\varkappa \| r_{k} \|_\varkappa^\varkappa}
		\quad 
	\end{equation}
for $ \lambda \in \mathbb{C}$, $ \varkappa = (\frac{1}{2}-\epsilon)^{-1}$, $0 < \epsilon < 1/2$ and
	\[
		r_k(x):=  \, | h (x) | \left\|   \delta_k  \right\|_{H^{- \frac{1}{2} 
		- \frac{ \epsilon}{2}} (S_{\beta})}  \; .
	\]
The limit $k \to \infty$ exists, as the Dirac $\delta$-function is in all Sobolev spaces 
$H^{q}$ for $q < -1/2$. Denote $r (x) := \lim_{ k \to \infty}r_k(x)$. Applying Cauchy's 
formula\footnote{As mentioned before, Prop.~A6 $i.)$ and Theorem 7.2 $i)$ in \cite{GeJ2} imply that the 
map $\lambda \mapsto  \int {\rm d} \mu \; {\rm e}^{ i \lambda \phi \left( \delta_{k} \otimes h \right)}$
is entire.} on the circle of radius $R$ centred around 
$\lambda = 0$ 
yields
	\[
		\int_{Q} \phi\! \left( 0, h \right)^p \, {\rm d}\mu  \le p! \, R^{-p} 
		{\rm e}^{ c  R^\varkappa \| r \|_\varkappa^\varkappa}  \; . 
 	\]
Optimizing this bound w.r.t.~$R$ yields 
	\begin{align}
		\label{EQ82}
		\int_Q  \phi \left(  0,    h   \right)^{p}  {\rm d} \mu
		&\le   p ! \; \left(  \frac{c  \varkappa {\rm e}}{p}\right)^{p/ \varkappa} 
		\|  r \|_\varkappa^p \le p ! \, |h|_{\mathscr S}^p\;  . 
	 \end{align}
Here we have used $\sup_{p \in \mathbb{N}} \left(  \frac{c \varkappa {\rm e}}{p}\right)^{p/ \varkappa} < \infty$ 
and the fact that 
	\[
		\|  r \|_\varkappa =  \left\|   \delta  \right\|_{H^{- \frac{1}{2} - \frac{ \epsilon}{2}} (S_{\beta})}
		\left(\int | h (x) |^\varkappa {\rm d} x \right)^{1/\varkappa}
	\]
can be estimated by a Schwartz semi-norm if $h \in {\mathscr S}(\mathbb{R})$ and $\varkappa >2$.
Combining~(\ref{eq:phipolardec}) and (\ref{EQ82}) we arrive at  
	\[
		|\kern -.5mm |\kern -.5mm | \, h  |\kern -.5mm |\kern -.5mm |_{p }^p  
		\le  p ! \; |h|_{\mathscr S}^p \; , 
	\]
which establishes the lemma.
\end{proof}

\begin{remark} Using the $\phi$-bound (\ref{Fockcontinuity1}) one can use the equation preceding 
\cite[Equ.~(A.9), p.~165]{GeJ2} to arrive at  Fr\"ohlich's bound 
	\[
		\int {\rm e}^{\pm \phi (g\otimes h) } {\rm d} \mu \le {\rm e}^{c\int_{\mathbb{R}} {\rm d} x \, | h(x) |^2  C_\beta (g, g)  } \; ,  
		\quad g  \in  H^{- 1/2} ( S_\beta) \; , \; \; h \in  L_{\mathbb{R}}^2(\mathbb{R}) \; ,   
	\]
stated (for the special case $g = \mathbb{1}_{[0, l]}$ a characteristic function) in \cite[Equ.~(7)]{Fr1}.
However, using only this bound, we were unable to establish the existence of the products estimated 
in Lemma \ref{Lm7new}.
\end{remark}

\begin{theorem}  
\label{Th7}
The thermal Wightman functions
	\[
		{\mathfrak W}_\beta^{(n-1)} \bigl( t_1 - t_2, x_1 -x_2, \ldots, t_{n-1} - t_{n}, x_{n-1} - x_{n} \bigr)
	\]
are tempered distributions, which 
satisfy the \emph{relativistic KMS condition}, \emph{i.e.}, they
\begin{itemize}
\item[$i.)$] are the boundary values of  functions ${\mathcal W}_\beta^{(n-1)}$ analytic  in 
the interior of the product of domains 
	\begin{equation}
	\label{thermaltubes}
		(\lambda_1 {\cal T}_{\beta}) \times \cdots \times (\lambda_{n-1} {\cal T}_{\beta}), 
		\qquad {\cal T}_{\beta} := \mathbb{R}^2 - i V_{\beta}, 
	\end{equation}
where $\lambda_i >0$, $i= 1, \ldots, n-1$ and $\sum_{i=1}^{n-1} \lambda_i = 1$; 
\item[$ii.)$] satisfy the following boundary condition:  for any time-like vector $e = (e_0, e_1) \in V^+$ 
	\begin{align}
		\label{kms2-1}
		& \lim_{e \to 0} {\mathcal W}_\beta^{(n-1)} 
		\bigl( s_1, y_1, \ldots, s_{k-1}, y_{k-1} , s_k -   i e_0, 
			y_{k} - i e_1,  \ldots, s_{n-1} , y_{n-1} \bigr) \nonumber \\  
		& \qquad   = {\mathfrak W}_\beta^{(n-1)} \bigl( 
		s_1, y_1, \ldots, s_{n-1},  y_{n-1} \bigr) 
	\end{align}
and	\begin{align}
		\label{kms2-1}
		& \lim_{e \to 0} {\mathcal W}_\beta^{(n-1)} 
		\bigl( s_1, y_1, \ldots, s_{k-1}, y_{k-1} , s_k -  i \beta + i e_0, 
			y_{k} + i e_1,  \ldots, s_{n-1} , y_{n-1} \bigr) \nonumber \\  
		& \qquad   = {\mathfrak W}_\beta^{(n-1)} \bigl( s_k, y_k ,\ldots, s_{n},  y_{n}, 
		s_1, y_1, \ldots, s_{k-2}, y_{k-2} \bigr) 
	\end{align}
for all $(s_1, y_1\ldots, s_{n-1}, y_{n-1} )\in \mathbb{R}^{2(n-1)}$. We have 
set $s_k = t_{k} -t_{k+1}$ and $y_k = x_{k} -x_{k+1}$, $1 \le k < n$, and in addition,  $s_n = t_n - t_1$ and $y_n = x_n- x_1$. 
\end{itemize}
\end{theorem}

\begin{proof} 
The domain of analyticity of the thermal Wightman functions $ {\mathcal W}_{\beta}^{(n-1)}$  
stated in $i.)$ was established in Theorem \ref{wightanal}. Thus it remains to establish $ii.)$. 
We first prove that the boundary values 
in the distinguished time direction $(1,0)$  define
tempered distributions. In the sequel, we show that the boundary values 
in a time-like direction $e= (e_0, e_1)$ coincide with them. 

Within their domain of analyticity the Wightman functions can be approximated by 
the expectation values of bounded operators: let $h \in C_{0 \mathbb{R}}^\infty(\mathbb{R})$ 
and set 
	\[
		\phi_\ell (t, h) := \phi_\ell^+ (t, h)  - \phi_\ell^- (t, h) \, , \qquad t \in \mathbb{R}\, . 
	\]
To ease the notation, put 
	\begin{align*}
	{\underline s} & = (t_1 - t_2, \ldots, t_{n-1}-t_n) \; , \\
	{\underline \alpha} & = (\alpha_1- \alpha_2, \ldots, \alpha_{n-1}- \alpha_n) \; , \\
	{\underline y} & = (x_1 - x_2, \ldots, x_{n-1}-x_n) \; . 
	\end{align*}
Now define, using the nuclear theorem, the kernels ${\mathbf W}_{\ell_1, \ldots, \ell_n}
({\underline s} - i {\underline \alpha}, {\underline y})$ with $0 < \alpha_n < \ldots < \alpha_1 < \beta$,  
by requiring that
	\begin{align*}
		& \int {\rm d}x_1 \cdots {\rm d}x_n \; {\mathbf W}_{\ell_1, \ldots, \ell_n}
		({\underline s} - i {\underline \alpha}, {\underline y}) \; h_1(x_1) \cdots h_n(x_n) \\
		& \qquad \doteq  
		\omega_\beta \bigl(  \phi_{\ell_1} (r_1, h_{1})   \cdots 
		\phi_{\ell_n} (r_n, h_{n})   \bigr)_{\upharpoonright r_i=t_i + i \alpha_i}  
	\end{align*}
for all $h_1, \ldots, h_n  \in C_{0 \mathbb{R}}^\infty (\mathbb{R})$. 
As before, the subscript $\upharpoonright r_i=t_i + i \alpha_i$ indicates the analytic continuation 
from $t_i$ to $t_i + i \alpha_i$, $i=1, \ldots, n$.

Clearly,
\begin{align*}
		& \lim_{\ell_i \to \infty} {\mathbf W}_{\ell_1, \ldots, \ell_n}
		(-i {\underline \alpha}, {\underline y})  \nonumber \\
		& \qquad = 	
		\lim_{\ell_i \to \infty} \int_{Q} \phi^{(\ell_1)} (\alpha_1, x_1)  \cdots  \phi^{(\ell_n)} (\alpha_1, x_1)\, {\rm d}\mu
		\nonumber \\		
		& \qquad = 	
		{\mathcal W}_\beta^{(n-1)} 
		\bigl( - i (\alpha_1-\alpha_2), y_1, \ldots, 
		  - i (\alpha_{n-1} -\alpha_{n}), y_{n-1} \bigr) \; . 
\end{align*}
We have used the notation introduced in paragraph $ii.)$, Theorem \ref{Th7}. In addition, we 
have set $ \phi^{(\ell)} (\alpha, h) :=  \phi_+^{(\ell)} (\alpha, h) -  \phi_-^{(\ell)} (\alpha, h)$.
Since the functions involved are all bounded on compact sets of their domain 
of analyticity, it follows that for $0 < \alpha_n < \ldots < \alpha_1 < \beta$,  $i= 1, \ldots, n$,
\begin{align}
		& \lim_{\ell_i \to \infty} {\mathbf W}_{\ell_1, \ldots, \ell_n}
		({\underline s} - i {\underline \alpha}, {\underline y})  \nonumber \\
		& \qquad = 		
		{\mathcal W}_\beta^{(n-1)} 
		\bigl( s_1 - i (\alpha_1-\alpha_2), y_1, \ldots, 
		 s_{n-1} - i (\alpha_{n-1} -\alpha_{n}), y_{n-1} \bigr) \; , 
\end{align}
uniformly on compact sets in their domain of analyticity.  We denote this limit by 
${\mathbf W}({\underline s} - i {\underline \alpha}, {\underline y})$.

We now show that there exist uniform bounds 
(independent of $\ell_i$,$i =1, \ldots, n$) 
as we approach the real boundary of the domain of analyticity: by construction
	\begin{align}
		\label{72new}
		\int & {\rm d}x_1 \cdots {\rm d}x_n \; {\mathbf W}
		({\underline s} - i {\underline \alpha}, {\underline y}) \; h_1(x_1) \cdots h_n(x_n) 
		\nonumber \\  
		& = \lim_{\ell_i \to \infty} 
		\omega_\beta \bigl(  \phi_{\ell_1} (r_1, h_{1})   \cdots 
		\phi_{\ell_n} (r_n, h_{n})   \bigr)_{\upharpoonright r_i=t_i + i \alpha_i}   
	\end{align}
for $0 < \alpha_n < \ldots < \alpha_1 < \beta$. Now the H\"older inequality 
(\ref{crucialbound}) implies that each of the $2^n$ terms arising from the linear polar decomposition 
can be estimated:  for $0 < \alpha_n < \ldots < \alpha_1 < \beta$,  
$i= 1, \ldots, n$, we have
	\begin{align}
		\label{72new}
		\lim_{\ell_i \to \infty} & \left| 
		\omega_\beta \bigl(  \phi^{\pm}_{\ell_1} (r_1, h_{1})   \cdots 
		\phi^{\pm}_{\ell_n} (r_n, h_{n})   \bigr)_{\upharpoonright r_i=t_i + i \alpha_i} \right| 
		\nonumber \\  
		& \le 
		\lim_{\ell_i \to \infty}
		\| \phi^{\pm}_{\ell_1} (t_1, h_{1}) \|_{p_1} 
		\cdots \| \phi^{\pm}_{\ell_n} (t_n, h_{n}) \|_{p_n} \, 
		\nonumber \\  
		& \le   \frac{p_1 }{2} \cdots \frac{p_n }{2}
		 \cdot |h_1|_{\mathscr S}  \cdots |h_n|_{\mathscr S} \; ,
		 \qquad t_1, \ldots, t_n \in \mathbb{R} \; , 
	\end{align} 
with $ p_i \equiv p_i( \underline \alpha ) $
the smallest integer such that 
	\[
		\frac{1}{p_i(\underline \alpha)}  < \frac{1}{ \beta} \min \left\{ \alpha_{i+1}-  \alpha_{i} \; , \; \alpha_{i}-  \alpha_{i-1}\right\}
		\; ,  \qquad i=1, \ldots, n \; . 
	\]
(Setting $\alpha_0 =\beta - \alpha_n$ and $\alpha_{n+1} = \beta- \alpha_1$.) 
In the second inequality in~(\ref{72new}) 
we have used Lemma \ref{Lm7new} to conclude that for $p$ sufficiently 
large\footnote{Recall that   $p! < (p/2)^p$ for $p \ge 6$.} 
	\[
		\lim_{\ell \to \infty}
		\| \phi^{\pm}_{\ell} (t, h) \|_p 
		\le |\kern -.5mm |\kern -.5mm | \, h  |\kern -.5mm |\kern -.5mm |_{p }  
		\le  \sqrt[p]{p!} \cdot|h|_{\mathscr S} <  \frac{p}{2} \cdot|h|_{\mathscr S} \; . 
	\] 
Thus, for $0 < \alpha_n < \ldots < \alpha_1 < \beta$,  
	\begin{align*}
		\int & {\rm d}x_1 \cdots {\rm d}x_n \; {\mathbf W}
		({\underline s} - i {\underline \alpha}, {\underline y}) \; h_1(x_1) \cdots h_n(x_n) 
		\nonumber \\  
		& \qquad \le   \frac{p_1 ( \underline \alpha )}{2} \cdots \frac{p_n ( \underline \alpha )}{2}
		 \cdot |h_1|_{\mathscr S}  \cdots |h_n|_{\mathscr S} \; ,
		 \qquad t_1, \ldots, t_n \in \mathbb{R} \; . 
	\end{align*} 
Note that 
$ p_i(\lambda \underline \alpha) \sim \lambda^{-1} p_i(\underline \alpha) $ for $\lambda \searrow 0$.

We will now show, following ideas in \cite[p.~24]{RS}, that this bound ensures that the boundary values exist as
tempered distributions as $\alpha_i \searrow 0$: 
define, for $\lambda \in (0, 1]$ fixed,  a tempered distribution $T_{\underline \alpha}  
(\lambda) \in {\mathscr S}'(\mathbb{R}^{n-1})$ by setting, 
for $g \in {\mathscr S}(\mathbb{R}^{n-1})$, 
	\[
		T_{\underline \alpha}  (\lambda) (g)  
		:= \int_{\mathbb{R}^{n-1}} {\rm d} {\underline s} \; g   ({\underline s})
				\int  {\rm d}x_1 \cdots {\rm d}x_n \; {\mathbf W}
		({\underline s} - i \lambda {\underline \alpha}, {\underline y}) \; 
		h(x_1) \cdots h(x_n) 	 \;  .   
	\]
Let $T_{\underline \alpha}^{(k)}  (\lambda) $, $k=1, 2, \ldots$, denote the $k$-th distributional derivative, specified by setting
	\begin{align}
		\label{Tder}
		& T_{\underline \alpha}^{(k)}   (\lambda) (g) \\ 
		& \quad = \int_{\mathbb{R}^{n-1}} {\rm d} {\underline s} 
				\int  {\rm d}x_1 \cdots {\rm d}x_n \; {\mathbf W}
		({\underline s} - i \lambda {\underline \alpha}, {\underline y}) \; h_1(x_1) \cdots h_n(x_n) 
		\left( i {\underline \alpha} \cdot \frac{\partial}{ \partial 	{\underline s} }\right)^k 
		g ({\underline s}) \;.  \nonumber
	\end{align}
Thus, by the fundamental theorem of calculus,
	\begin{align}
	\label{pol0}
	T_{\underline \alpha}  (\lambda) 
		&= T_{\underline \alpha}  (1)   + \sum_{j=1}^{k-1} Q_j ( \lambda) \, T_{\underline \alpha}^{(j)}   (1) 
		\nonumber
		\\
		&  \qquad  - \int_{\lambda}^1 {\rm d} \lambda_k   
		\int_{\lambda_k}^1 {\rm d} \lambda_{k-1} \cdots 
		\int_{\lambda_{2}}^1 {\rm d} \lambda_1   \; 
		T_{\underline \alpha}^{(k)} ( \lambda_1)   \; .   
	\end{align}
The $Q_j$'s in (\ref{pol0}) are suitable polynomials. The limit $\lambda \downarrow 0$ in (\ref{pol0}) can be taken, provided that 
there exists a $k$ such that
	\begin{align}
		\label{pol}
		\lim_{ \lambda \downarrow 0} \left| 
		\int_{\lambda}^1 {\rm d} \lambda_k   
		\int_{\lambda_k}^1 {\rm d} \lambda_{k-1} \cdots 
		\int_{\lambda_{2}}^1 {\rm d} \lambda_1   \; 
		T_{\underline \alpha}^{(k)} ( \lambda_1) 
		(g) \right| < c \cdot \| g \|_{\mathscr S} \; ,   
	\end{align}
with $c>0$ a constant and  $\| g \|_{\mathscr S}$ a Schwartz semi-norm. 
This is done by estimating $T_{\underline \alpha}^{(k)}(\lambda) (g) $ as given in (\ref{Tder}) 
for $\lambda \in (0, 1]$:
choose some $ m \in \mathbb{N}$  large enough so that $\int_{\mathbb{R}^{n-1}} {\rm d}{\underline s}\, 
(1+ |{\underline s}|)^{-m} < \infty$. Then, for $\lambda \in (0, 1] $,
	\begin{align}
		\label{temp-bound}
		\left| T^{j}_{\underline \alpha} (\lambda) (g) \right|
		&= C \; \sup_{\underline{ t } \in \mathbb{R}^{n-1}}   
		| (1+ |{\underline s}|)^{m} | \; \left| \left( i {\underline \alpha} \cdot
		\frac{\partial}{ \partial {\underline s} }\right)^j g ({\underline s}) \right| 
			\nonumber \\
		&  \qquad \qquad \times \; 
		p_1 (\lambda  \underline{\alpha})\cdots p_{n} (\lambda  \underline{\alpha} ) \cdot |h_1|_{\mathscr S}   
		\cdots |h_{n}|_{\mathscr S}
		\nonumber \\
		&\le   C' \cdot\lambda^{-n} 		\; , \qquad C, C' >0 \;.   
	\end{align}
Note that 
	\begin{align}
		\label{pol1}
		\lim_{ \lambda \downarrow 0} \left| \int_{\lambda}^1 {\rm d} \lambda_k  \int_{\lambda_k}^1 {\rm d} \lambda_{k-1} \cdots 
		\int_{\lambda_{2}}^1 {\rm d} \lambda_1   \; \lambda_1^{-n}  \right| < \infty  \;     
	\end{align}
for $k$ sufficiently large, \emph{i.e.}, $k \ge n  + 1$. 
Combining (\ref{pol}), (\ref{temp-bound}), and (\ref{pol1}) one concludes  that 
the limit of~$T_{\underline \alpha} (\lambda)$ exists as $\lambda \downarrow 0$ and 
that each term in the limit is less than 
or equal to a constant times an ${\mathscr S}(\mathbb{R}^{n-1})$-seminorm of $g$. Thus 
	\[
		{\mathbf W} ({\underline s} - i   {\underline \alpha}, \underline{x}) 
		= 		
		{\mathcal W}_\beta^{(n-1)} 
		\bigl( s_1 - i (\alpha_1-\alpha_2), y_1, \ldots, 
		 s_{n-1} - i (\alpha_{n-1} -\alpha_{n}), y_{n-1} \bigr) \; . 
	\] 
converges in ${\mathscr S}'(\mathbb{R}^{n-1})$ as ${\underline \alpha} \downarrow 0$ to a 
tempered distribution. 
The latter is denoted by  ${\mathfrak W}_{\beta}^{(n-1)}(t_1 - t_2 , x_1 - x_2, \ldots, t_{n-1} - t_n , x_{n-1} - x_n)$. 
 
Now suppose that $e= ( \tau_1, z_1 , \ldots \tau_{n-1}, z_{n-1} ) \in \mathbb{R}^{2(n-1)}$ is 
an $(n-1)$-tuple of time-like unit vectors 
$(\tau_i, z_i)\in V^+$, and that $\widetilde h_i \in C^\infty_{0 \mathbb{R}}(\mathbb{R})$, $i=1,  \ldots, n$. 
To ease the notation we set 
	\begin{align*}
		\underline{h} ( \underline{x}) & = h_1 (x_1) \cdots h_n (x_n) \; , \\
	{\underline \alpha_\tau} & = \bigl(\tau_1(\alpha_1- \alpha_2), \ldots, \tau_{n-1}(\alpha_{n-1}- \alpha_n) \bigr) \; .
	\end{align*}
Then 
\begin{align*}
T_{\underline \alpha e} (\lambda) (g) & \doteq \int_{\mathbb{R}^{n-1}}  {\rm d} {\underline s} \; g ({\underline s})
				\int  {\rm d}x_1 \cdots {\rm d}x_n \; h_1 (x_1) \cdots h_n (x_n)
				\\
& \qquad \qquad \qquad \qquad \qquad \times {\mathbf W}
		\bigl( ({\underline s} , {\underline y})  - i \lambda {\underline \alpha} e 
		+ i \lambda {\underline \alpha} (e-  (1, 0) ) \bigr) 
		\\
& =  \int_{\mathbb{R}^{n-1}} {\rm d} {\underline s} \; 
g \bigl({\underline s} \bigr)		
\int  {\rm d}x_1 \cdots {\rm d}x_n \; h(x_1 + i \lambda z_1) \cdots h(x_{n-1} + i \lambda z_{n-1} )h(x_n)
				\\
& \qquad \qquad \qquad \qquad \qquad \times {\mathbf W}
		\bigl( ({\underline s} - i \lambda {\underline \alpha_\tau}, {\underline y})   \bigr) 
\; , 
\end{align*}
where we have used the fact that the $h_i $'s, $i=1,  \ldots, n$, are entire 
and the estimates in the Paley-Wiener theorem (Theorem IX.11 \cite{RS}) to shift the hyperplane of integration 
in second equality. 

Since $\widetilde {\underline{h}}_i \in C^\infty_{0 \mathbb{R}}(\mathbb{R})$, 
$h(x_i + i \lambda z_i) \to h (x_i)$ as $\lambda \searrow 0$. 
Since such $h_i $'s, are dense in ${\mathscr S}( \mathbb{R})$,   
	\[
		\lim_{ \lambda \downarrow 0} T_{\underline \alpha, e} (\lambda) (g)  = 
		T_{{\underline \alpha}_\tau} (0) \bigl( g \bigr)
		= T_{\underline \alpha} (0) \bigl( g \bigr) \; .  
	\]
Thus, the limit of 
${\mathbf W} \bigl( ({\underline s},\underline{x}) - i   {\underline \alpha} e \bigr) $ coincides 
with the tempered distribution
	\[
		{\mathfrak W}_{\beta}^{(n-1)}(t_1 - t_2 , x_1 - x_2, \ldots, t_{n-1} - t_n , x_{n-1} - x_n)
	\]
encountered before. 

The KMS boundary condition
follows by differentiating (see (\ref{stone})) the boundary condition of the 
corresponding Weyl operators given 
in (\ref{KMSWEYL}).
\end{proof}

\begin{remark} We note that the thermal Wightman distributions
	\[
		{\mathfrak W}_{\beta}^{(n-1)}(t_1 - t_2 , x_1 - x_2, \ldots, t_{n-1} - t_n , x_{n-1} - x_n)
	\]
are analytic functions as long as the $(t_i, x_i)$, $i =1, \ldots n$, are mutually space-like points. This can be shown by an 
argument similar to the one outlined in the discussion preceding Theorem \ref{Th5}.
\end{remark}

\section{Summary and Outlook}

For quite some time the pioneering work of H\o egh-Krohn \cite{H-K} did not find the recognition it deserves. 
However, the thermal ${\mathscr P}(\varphi)_2$ model 
should be seen as a binding link between statistical mechanics and quantum field theory. 
The authors believe that by providing more detail on the construction of this model (see \cite{GeJ1}\cite{GeJ2})
and verifying that it satisfies key axioms, other scientists might get motivated to look at this model in more detail. 
In fact, the physical properties of this model have hardly been explored so far. It would be interesting to know how, for instance, 
the specific heat behaves as a function of the temperature and the coupling constants. A more challenging 
question is to investigate the particle content of this model. Eventually, one may want to set up scattering theory at
positive temperature or prove the uniqueness of the KMS state for all temperatures and all allowed values of the 
coupling constant. There are strong indications 
that the correlation functions decay exponentially in space-like directions, and thus it seems to the authors 
that all of these questions can be resolved with reasonable amount of work. 

\paragraph{Acknowledgement}

This work was supported by the Leverhulme Trust [F/00 407/BC -- Particles and Thermal 
Quantum Fields]. 
We would like to thank the second referee for suggesting several improvements and 
requesting a number of clarifications. We also would like to thank 
Christopher J. Fewster for pointing out a gap in the proof of Theorem 5 in an earlier version of 
the manuscript.


\begin{thebibliography}{Bor}

\bibitem{AM} 
	Araki, H., Masuda, T., 
	\textit{Positive cones and $L_p$-spaces for von Neumann algebras}, 
	Publ. RIMS, Kyoto Univ.\ \textbf{18} 
	(1982) 
	339--411.
\bibitem{BF} 
	Birke, L.\ and Fr\"ohlich, J., 
	\textit{KMS, etc.}, 
	Rev.\ Math.\ Phys.\ \textbf{14} 
	(2002) 
	829--871.
\bibitem{BR} 
	Bratteli, O.\ and Robinson, D.W., 
	\textit{Operator Algebras and Quantum Statistical Mechanics~I, II}, 
	Sprin\-ger-Verlag, New York-Heidelberg-Berlin 
	(1981).
\bibitem{BB1} 
	Bros, J.\ and Buchholz, D., 
	\textit{Towards a relativistic KMS condition}, 
	Nucl.\ Phys.\ \textbf{B 429} 
	(1994) 
	291--318. 
\bibitem{BB2} 
	Bros, J.\ and Buchholz, D., 
	\textit{Axiomatic analyticity properties and representations of particles in thermal quantum field theory, New
	problems in the general theory of fields and particles},  
	Ann.\ l'Inst.\ H.\ Poincar\'e \textbf{64} 
	(1996) 
	495--521. 
\bibitem{BEGS} 
	Bros, J., Epstein, H., Glaser, V. and Stora, R., 
	\textit{Quelques aspects globaux des problemes d'Edge-of-the-wedge, 
	Contribution au Colloque sur les Hyperfonctions et leurs Applications}, 
	published in Springer Lecture Notes in Mathematics \textbf{449}, 
	Hyperfunctions and Theoretical Physics, Part III: Theory of Fields 
	(1975),
	Springer Berlin Heidelberg, ISBN 978-3-540-07151-8.
\bibitem{Bour} 
	Bourbaki, N., 
	\textit{Integration II: Chapters 7-9 (Elements of Mathematics)},
	Springer 
	(2004), 
	ISBN-13: 978-3540205852.
\bibitem{BJ} 
	Buchholz, D.,  and Junglas, P., 
	\textit{On the existence of equilibrium states in local quantum field theory}, 
	Comm. Math. Phys. \textbf{121} 
	(1989) 
	255--270. 
\bibitem{Fr5} 	
	Fr\"ohlich, J., 
	\textit{Schwinger functions and their generating functionals.~I}, 
	Helv.\ Phys.\ Acta \textbf{47} 
	(1974) 
	265--306.
\bibitem{Fr1} 
	Fr\"ohlich, J., 
	\textit{The reconstruction of quantum fields from Euclidean Green's functions at arbitrary temperature}, 
	Helv.\ Phys.\ Acta \textbf{48} 
	(1975) 
	355--363.
\bibitem{Fr2} 
	Fr\"ohlich, J., 
	\textit{Unbounded, symmetric semigroups on a separable Hilbert space are essentially selfadjoint}, 
	Adv.\ in Appl.\ Math.\ \textbf{1} 
	(1980) 
	237--256.
\bibitem{GeJ1}
	G\'erard, C. and J\"akel, C.D., 
	\textit{Thermal quantum fields with spatially cut-off interactions in 1+1 space-time dimensions}, 
	J. Funct. Anal. \textbf{220} 
	(2005) 
	157--213.
\bibitem{GeJ2} 
	G\'erard, C. and J\"akel, C.D., 
	\textit{Thermal quantum fields without cutoffs in 1+1 space-time dimensions}, 
	Rev. Math. Phys. \textbf{17} 
	(2005) 
	113--173.
\bibitem{GeJ3} 
	G\'erard, C. and J\"akel, C.D., 
	\textit{On the relativistic KMS condition for the ${\mathscr P}(\varphi)_2$ model}, 
	Prog. Math. \textbf{251} 
	(2007)  
	125--140.
\bibitem{GJ} 
	Glimm, J.\ and Jaffe, A., 
	\textit{Quantum Physics, A Functional Integral Point of View}, 
	Springer 
	(1981). 
\bibitem{GJ-col} 
	Glimm, J.\ and Jaffe, A.,  
	\textit{Quantum Field Theory and Statistical Mechanics, Expositions}, 
	Birkh\"auser 
	(1981). 
\bibitem{Glimm-Jaffe-Uniqueness} 
	Glimm, J.\ and Jaffe, A., 
	\textit{The $ \lambda \varphi^{4}_{2}$ quantum field theory without cutoffs: 
	II. The field operators and the approximate vacuum}, 
	{\em Annals of Mathematics} {\bf 91}  
	{(1970)}
	{362--401}.
\bibitem{GJ-IV} 
	Glimm, J.\ and Jaffe, A., 
	\textit{The $\lambda \varphi_2^4$ quantum field theory without cutoffs. IV. 
	Perturbations of the Hamiltonian}, 
	J. Math. Phys. \textbf{13} 
	(1972) 
	1568--1584.
\bibitem{GRS} 
	Guerra, F., Rosen, L.\ and Simon, B.,  
	\textit{Nelson's symmetry and the infinite volume behavior of the vacuum in ${\mathscr P}(\varphi)_{2}$},  
	Comm. Math. Phys. \textbf{27} 
	(1972), 
	10--22. 
\bibitem{HHW} 
	Haag, R., Hugenholtz, N.M.\ and Winnink, M., 
	\textit{On the equilibrium states in quantum statistical mechanics},
	Comm. Math. Phys. \textbf{5} 
	(1967) 
	215--236.
\bibitem{HKT-P} 
	Haag, R., Kastler, D.\ and Trych-Pohlmeyer, E.B.,    
	\textit{Stability and equilibrium states}, 
	Comm. Math. Phys. \textbf{38} 
	(1974) 
	173--193.
\bibitem{HO} 
	Heifets, E.P.\ and Osipov, E.P.,  
	\textit{The energy momentum spectrum in the ${\mathscr P}(\varphi)_{2}$ quantum field theory}, 
	Comm. Math. Phys.  \textbf{56} 
	(1977) 
	161--172.
\bibitem{H-K} 
	H\o egh-Krohn, R.,  
	\textit{Relativistic quantum statistical mechanics in two-dimensional space-time}, 
	Comm.\ Math.\ Phys.\ \textbf{38} 
	(1974) 
	195--224.
\bibitem{J} 
	J\"akel, C., 
	\textit{The Reeh-Schlieder property for thermal field theories},
	J.\ Math.\ Phys.\ \textbf{41} 
	(2000) 
	1745--1754.
\bibitem{Jii} 
	J\"akel, C., 
	\textit{Some comments on entanglement and local thermofield theory}, 
	Foundations of Physics Lett. \textbf{14} 
	(2001) 
	1--16. 
\bibitem{RJ} 
	J\"akel, C., and Robl,  F., 
	\textit{A H\"older inequality for KMS states},  
	Lett. Math. Phys. \textbf{102}
	{(2012)}
	265--274.
\bibitem{JJM} 
	Jaffe, A., J\"akel, C., and Martinez, R., 
	\textit{Complex classical fields: an example},
	J. Funct. Anal.  \textbf{266}
	{(2014)} 
	{1833--1881}.
\bibitem{KR} Kadison,\ R.\ V.  and Ringrose, J.\ R., 
	\textit{Fundamentals of the Theory of Operator Algebras, Vols. I \& III}, 
	Academic Press 
	(1983).
\bibitem{K} 
	Klein, A., 
	\textit{The semigroup characterization of Osterwalder-Schrader path spaces 
	and the construction of Euclidean fields}, 
	J.\ Funct.\ Anal.\ \textbf{27} 
	(1978) 
	277--291.
\bibitem{KL1} 
	Klein, A.\ and Landau, L., 
	\textit{Stochastic processes associated with KMS states}, 
	J.\ Funct.\ Anal.\  \textbf{42} 
	(1981) 
	368--428.
\bibitem{KL2} 
	Klein, A.\ and Landau, L.,  
	\textit{Construction of a unique selfadjoint generator for a symmetric local semigroup}, 
	J.\ Funct.\ Anal.\ \textbf{44} 
	(1981) 
	121--137.
\bibitem{Ku1} 
	Kuckert, B., 
	\textit{Covariant thermodynamics of quantum systems: passivity, semipassivity, and the Unruh effect}, 
	Ann. Phys. \textbf{295} 
	(2002) 
	216--229.
\bibitem{Ku2} 
	Kuckert, B., 
	\textit{$\beta$-Boundedness, semipassivity, and the KMS condition}, 
	Comm. Math. Phys. \textbf{229} 
	(2002) 
	369--374.
\bibitem{M} 
	Minlos,\ R., 
	\textit{Generalized stochastic processes and their extension to a measure}, 
	Trudy Mosk. Mat. Obs. \textbf{8} 
	(1959) 
	497--518.
\bibitem{Na} 
	Narnhofer,\ H., 
	\textit{Kommutative Automorphismen und Gleichgewichtszust\"ande}, 
	Acta Phys. Austriaca \textbf{47} 
	(1977) 
	1--29.
\bibitem{NT} 
	Narnhofer, H.\ and Thirring,\ W., 
	\textit{Adiabatic theorem in quantum statistical mechanics}, 
	Phys. Rev. \textbf{A26/6} 
	(1982)
	3646--3652.
\bibitem{Ne} 
	Nelson, E., 
	\textit{Time-ordered operator products of sharp-time quadratic forms}, 
	J. Funct. Anal. \textbf{11} 
	(1972) 
	211--219.
\bibitem{Oj} 	
	Ojima,\ I., 
	\textit{Lorentz invariance vs. temperature in QFT}, 
	Lett.~Math. Phys.~\textbf{11}
	(1986) 
	73--80.
\bibitem{PW}
	Pusz, W.,  and Woronowicz, S.L., 
	\textit{Passive states and KMS states for general quantum systems}, 
	Comm. Math. Phys. \textbf{58} 
	(1978) 
	273--290. 
\bibitem{RS} 
	Reed, M.\ and Simon, B., 
	\textit{Methods of Modern Mathematical Physics, Vol. II: Fourier Analysis, Self-adjointness}, 
	Academic Press 
	(1975). 
\bibitem{RSch} 
	Reeh,  H.\ and Schlieder, S., 
	\textit{Bemerkungen zur Unit\"ar\"aquivalenz von Lorentzinvarianten Feldern}, 
	Nuovo Cimento \textbf{22} 
	(1961) 
	1051--1056. 
\bibitem{ros} 
	Rosen, L., 
	\textit{The $( \phi^{2n})_2$ quantum field theory: Higher order estimates}, 
	Comm. Pure and Applied Math.~\textbf{24} 
	(1971) 
	417--457. 
\bibitem{Rue1} 
	Ruelle,  D., 
	\textit{Analyticity of Green's functions of dilute quantum gases}, 
	J. Math. Phys.~\textbf{12} 
	(1975) 
	901--903. 
\bibitem{Rue2} 
	Ruelle,  D., 
	\textit{Definition of Green's functions for dilute Fermi gases}, 
	Helv. Phys. Acta  \textbf{45} 
	(1972) 
	215--219.
\bibitem{Seiler-Simon}  
	Seiler, E., and Simon, B., 
	\textit{Nelson's symmetry and all that in the Yukawa$_2$ and $\varphi^4_3$ field theories}, 
	Ann. Phys.~\textbf{\bf 97}
 	(1976)
	470--518.
\bibitem{Si1} 
	Simon, B.,  
	\textit{The ${\mathscr P}(\varphi)_{2}$ Euclidean (Quantum) Field Theory}, 
	Princeton University Press 
	(1974). 
\bibitem{SW}  
	Streater, R.F.\ and Wightman, A.S., 
	\textit{PCT, Spin and Statistics and all that}, 
	Benjamin, New York 
	(1964).
\bibitem{T}  
	Thullen, P.,
	\textit{Zur Theorie der Singularit\"aten der Funktionen zweier komplexen Ver\"anderlichen. Die
	Regularit\"atsh\"ullen.},
	Mathematische Annalen~\textbf{\bf 106}
	(1932)
	64--76.
\bibitem{V}  
	Vladimirov, V.S., 
	\textit{Methods of the Theory of Functions of Many Complex Variables}, 
	MIT Press, Cambridge  MA
	(1966).

\end{thebibliography}
\end{document}